%% file: 2026-Rev-Gaps-Budgets-Arxiv.tex
\documentclass[11pt]{scrartcl}

\setkomafont{disposition}{\normalfont\bfseries}

\input{revenue-gaps-macros}

\begin{document}

\title{Evaluating the Performance of Approximation Mechanisms 
under Budget Constraints%
	\thanks{This paper supersedes ``Mind the Revenue 
		Gap: On the Performance of Approximation Mechanisms 
		under Budget Constraints'', by the same 
		authors, that appeared in the \emph{Proceedings of 
		the 17th International Symposium of Algorithmic 
		Game Theory}, \href{https://link.springer.com/chapter/10.1007/978-3-031-71033-9_16}{SAGT 2024},
		Amsterdam, September 3--6, 2024.
		The previous version contained a full proof only 
		for Proposition 2. 
		The current version contains several additional results 
		concerning revenue gaps and revenue non-monotonicity of 
		selling mechanisms under budget constraints, as well as
		full proofs of all the results.  			
		We would like to thank comments and discussions 
		with Murali Agastya, Isa Hafalir, Daniel Lehmann, 
		Simon Loertscher, Idione Meneghel, Noam Nisan and the 
		audience at the 2023 Markets, Contracts and 
		Organizations Conference organized by the Australian 
		National University. 
		Financial support from the Australian Research Council 
		under grant DP190102064 is gratefully acknowledged. } 
	}

\author{
	Juan Carlos Carbajal%
		\footnote{School of Economics, UNSW Sydney, Australia. 
		Email: \href{mailto:jc.carbajal@unsw.edu.au}
		{\nolinkurl{jc.carbajal@unsw.edu.au}}. }  		
	\and 
	Ahuva Mu'alem%
		\footnote{Department of Computer Science, Holon 
		Institute of Technology, Israel.   
		Email: \href{mailto:ahumu@yahoo.com}
		{\nolinkurl{ahumu@yahoo.com}}. }  }

\date{\today}

\maketitle


\begin{abstract} 
\noindent \hrulefill

\noindent{\bfseries Abstract \ } 
We study revenue maximization in a buyer-seller setting where the 
seller has a single object and the buyer has both a private 
valuation and a private budget. 
Private budgets complicate the classic single-product monopoly 
problem, making optimal mechanisms difficult to characterize. 
To address this, we evaluate the robust performance of 
approximation mechanisms relative to optimal mechanisms using 
three performance measures: 
the \emph{guaranteed fraction of optimal revenue}, 
the \emph{maximal value of relaxation}, and a 
\emph{revenue non-monotonicity gap}.
Our analysis reveals sharp contrasts. 
For distributions with bounded support, simple mechanisms with 
polylogarithmic menu size can approximate optimal revenue 
arbitrarily well, even when valuations and budgets are correlated. 
By contrast, for distributions with unbounded support, 
and even for bounded distributions concentrated in the unit square, 
no simple mechanism ---or any mechanism with a finite or 
sublinear menu--- can guarantee a positive fraction of optimal 
revenue. 
In particular, no finite-menu mechanism guarantees any positive 
fraction even under independence. 
We also show unbounded revenue gains from certain relaxations 
under negative correlation and identify cases of revenue 
non-monotonicity. 
Overall, our results show that approximation guarantees with 
private budgets are fragile, revealing fundamental limits to 
simplicity and robustness in mechanism design.

\noindent
\emph{Keywords}: Approximation Mechanisms, Menu Complexity, 
Budget-Constrained Buyers, Revenue Gaps. 

\noindent
\emph{JEL Classification Numbers}: C72, D42, D44, D82. 

\noindent \hrulefill

\end{abstract}



\section[Intro]{Introduction}
	\label{S:intro}

We study a buyer--seller interaction where the seller has a 
single object to allocate and the buyer has a private valuation 
and a private budget. 
In this paper, we are interested in measuring the performance,
in terms of expected revenue for the seller, of
approximation mechanisms against optimal mechanisms.

The presence of the private budget, which acts as a hard
constraint on the purchasing ability of the buyer, is the
only complication we consider.
Remove it, and we are back in the classic ``single-product
monopoly'' setting with incomplete information.
Previous work in the literature that addresses variations
of this problem has provided strong justifications for the
inclusion of budget constraints.%
	\footnote{See for example \citet{Che:2000ux}, 
    \citet{Che:2013cu}, \citet{Pai:2014cm}, \citet{Richter:2019by}, 
    \citet{Kotowski:2020bh}, and \citet{Boulatov:2021ry}.  }
This inclusion is a promising direction towards more
realism in our theoretical constructs, as
it is hard to think of real-world applications
of auction and market design theory where some sort of
financial restrictions are absent; see
\citet{Salant:1997vx,Cramton:2010txp}.  
Importantly, introducing budget constraints also
points to the limits of our understanding of mechanism
design outside single-item, single-type models.  
For example, \citet{Che:2000ux} consider revenue
maximization by a seller who faces a buyer with private
valuations and private budgets distributed
on a bounded rectangle.
Even when the joint distribution is well-behaved,
the optimal mechanism is extremely complex to obtain and
to understand.
It relies on minor details of the distribution and,
when translated into a menu of lotteries for implementation,
it contains a continuum of such options.

The presence of private budgets is one among the several 
triggers of complexity in mechanism design models.
Indeed, the literature recognized early on that when the 
seller has many items to allocate (the so-called 
``multi-product monopoly'' case), searching for optimal 
mechanisms or even characterizing incentive compatibility 
is challenging.%
	\footnote{The classical references are 
		\citet{McAfee:1988rh} and \citet{Rochet:1998ju}. }
As a result, the seminal work of \citet{Chawla:2007fu}
adopted an alternative route and initiated the study of 
approximation  mechanisms  in the multi-product monopoly case 
without considering budget constraints.
This significant body of research focuses on what is 
often referred to as ``simplicity-optimality tradeoff''.%
	\footnote{\citet{Roughgarden:2019vo} present a recent
		and useful survey. }
The performance of an approximation mechanism is
measured in terms of its expected revenue compared against
the expected revenue of the optimal mechanism.
For instance, \citet{Hart:2017bh, Hart:2019ny} consider 
revenue maximization with $k$ goods and valuations
given by random vectors of the form $(V_1,\ldots,V_k)$.
They show that if $V_1,\ldots,V_k$
are independent random variables, a nontrivial fraction
of the optimal revenue can always be approximated by
simple mechanisms.
Simplicity here refers to the size of the menu associated
with a mechanism.
A mechanism with a smaller number of options or lotteries is
simpler than a mechanism with a larger number of lotteries.
Surprisingly, \citet{Hart:2019ny} find that no such 
``good approximation'' is possible if the components of the 
random vector $(V_1,\ldots,V_k)$ are correlated.
More precisely, their main result is that any mechanism 
with a finite menu size cannot guarantee a positive fraction 
of the optimal revenue for correlated distributions.%
		\footnote{For related results, see \citet{Li:2013er}, 
		\citet{Briest:2015ob}, \citet{Babaioff:2018jy}, 	
		\citet{Rubinstein:2018fc}, \citet{Hart:2019it}, 
		\citet{Babaioff:2020kb}, and \citet{Babaioff:2022by}.}

The dramatic effect that different assumptions have on the
revenue gap between approximation mechanisms and optimal
mechanisms is cause for further study.
A significant contribution of our paper is to
point out how sensitive this revenue gap is to some
modeling assumptions, even in single-item settings, 
as long as the buyer has private budgets
and private valuations.

The preliminaries are presented in \autoref{S:setting}.
We consider a seller with a single item to allocate
(and zero opportunity cost).
The buyer is characterized by a random vector
$(V,W)$, where $V$ refers to the valuation for the
item and $W$ to the budget.
We formalize the seller's problem in terms of direct
mechanisms that satisfy
incentive compatibility, (ex-post) individual rationality,
and (ex-post) budget feasibility.%
    	\footnote{Some work in the literature focus on
    	interim individual rationality and/or interim
    	budget feasibility.
    	We discuss the differences between these concepts in
    	\autoref{S:setting}.}
The set of these ex-post feasible and incentive compatible
mechanisms is denoted by $\cali M$.
The seller's problem is thus to choose a mechanism in $\cali M$
that maximizes expected revenue.
Approximation mechanisms are obtained from $\cali M$
in two different ways:
(i) by imposing additional constraints, thus restricting
$\cali M$ to a subclass of mechanisms, say $\cali N$; or
(ii) by relaxing some original constraints, thus
expanding $\cali M$ to a superclass of mechanisms,
say $\cali N'$.
We measure the performance of a restricted class of
mechanisms $\cali N \subseteq \cali M$, in terms of
expected revenue, using the \emph{guaranteed fraction
of optimal revenue} (GFOR) ratio introduced by 
\citet{Hart:2017bh}.
Intuitively, GFOR measures the `worst-case' revenue loss
incurred by the seller when using mechanisms in the 
subclass $\cali N$ instead of $\cali M$.
We measure the performance of a relaxation class of
mechanisms $\cali N' \supseteq \cali M $ using a related
ratio, which we call the \emph{maximal value of relaxation} 
(MVR).
Intuitively, MVR measures the `best-case' revenue gain
obtained by the seller when using mechanisms in the 
superclass $\cali N'$.
Both ratios perform robust revenue comparisons, i.e.,
over families of $(V,W)$--buyers.
\autoref{S:setting} concludes with a \emph{revenue monotonicity}
result (\autoref{P:rev-mon}) that expands a similar finding 
by \citet{Hart:2015ob} to single-item settings where, in addition 
to the valuation, the buyer has a private budget.
We employ this result to prompt questions on \emph{revenue 
non-monotonicity} in classes of mechanisms that differ 
from $\cali M$, which we also explore.

In \autoref{S:Gaps} we consider different classes of
approximation mechanisms.

\autoref{SS:good-approx} and \autoref{SS:bad-approx} 
focus on \emph{simple} approximation mechanisms.
Following \citet{Hart:2019ny}, we measure the complexity
of a mechanism by the cardinality of its associated menu.
Intuitively, a simpler mechanism is one with a smaller menu.
There are several reasons to focus on simple mechanisms,
notably that the menu size of a mechanism is
related to its communication complexity 
(\citet{Babaioff:2022by}).
In settings without budget constraints, \citet{Myerson1981} 
shows that the optimal mechanism contains one non-trivial 
option.
When the buyer's budget is publicly known, \citet{Chawla:2011mf}
demonstrate that the optimal mechanism has a menu that contains
at most two non-trivial options.
Our main positive result (\autoref{P:good-approx}) establishes 
that, when 
the budget is private, simple mechanisms (i.e., mechanisms 
with polylogarithmic many options) generate an arbitrarily
small revenue gap with respect to optimal mechanisms, 
as long as one considers a family of random vectors $(V,W)$ 
whose support is bounded from above and bounded away from 
zero from below.

In contrast, if the family of $(V,W)$--buyers has unbounded
support or includes random vectors whose support
is a subset of the unit square, then for some family 
of distributions any selling mechanism that contains a
fixed number of options cannot guarantee any positive fraction
of the optimal revenue (\autoref{P:menu-size} and 
\autoref{P:bad-approx-bounded}).
In fact, we strengthen these negative results in two directions. 
First, for some family of finite distributions, any mechanism 
that contains an asymptotically sublinear number of options 
(in the size of the finite support) cannot guarantee a 
positive fraction of the optimal revenue (\autoref{P:sublinear}). 
This suggests a broader way of analyzing approximation gaps, 
one that goes beyond fixed menu-size bounds and instead tracks 
how menu complexity scales with the support of the distribution. 
Second, we obtain an analogue of the ``no good approximation'' 
result of \citet{Hart:2019ny}: in our single-item setting with 
private budgets, there exist infinite-support distributions 
for which no feasible finite-menu mechanism guarantees any 
positive fraction of the optimal revenue, and this holds 
even for certain independent distributions
(\autoref{P:infinite-supp}).

Our positive result on the revenue gaps of simple mechanisms 
presented in \autoref{SS:good-approx} is derived by
combining the revenue monotonicity result mentioned above 
and a discretization of the type space (valuations and 
budgets) that is markedly different from the `nudge-and-round' 
approach introduced by \citet{Hart:2019ny}.
The proofs of our negative results in \autoref{SS:bad-approx} 
are constructive.
To conclude this part of the paper, we strengthen the 
impossibility result of \citet{Che:2000ux} 
by showing that, for a single item with private values, 
the requirement of infinite menus is not limited to 
optimal mechanism. 
Rather surprisingly, this requirement extends to any 
approximately optimal mechanism as well.

We switch gears in \autoref{SS:cash-bond} and pay attention
to a relaxation of the seller's problem, where the seller
can costlessly prevent the buyer from over-reporting
her budget.%
		\footnote{There are several ways the
		seller can prevent over-reporting.
		For instance, the buyer could be required to
		post a cash-bond prior to the interaction
		with the seller. }
Our analysis here is motivated by both computational and
economic considerations, which we discuss in detail later on.%
		\footnote{Previous work considering this relaxation
		include \citet{Devanur:2017ik} and 
		\citet{Daskalakis:2018zg}. }
\citet{Che:2000ux} show that if valuations 
and budgets are positively correlated, then this relaxation to
mechanisms where the seller ignores deviations to
budget over-reporting has no revenue advantage for the seller.
In contrast, we show that when valuations and budgets
are negatively correlated, the value of relaxation can be
unboundedly large (\autoref{P:negative-aff}).  
In addition, we show that a cash-bond relaxation on the 
set of ex-post feasible and incentive compatible mechanisms 
$\cali M$ can be subject to \emph{revenue non-monotonicities}: 
a $(V,W)$--buyer may first-order stochastically dominate 
a $(V',W')$--buyer, yet the expected revenue collected by 
the seller from the latter may be larger than from the former. 
In fact, we show that revenue non-monotonicities in this 
setting can be acute (\autoref{P:rev-non-mon-CB}).

Finally, in \autoref{SS:sic} we consider a restricted
problem in which the seller considers deviations from
truthful reporting even when these deviations may lead
to unaffordable choices for the buyer.
The imposition of these stronger incentive constraints
has received attention in the literature because it
provides computationally feasible solutions  
(\citet{Daskalakis:2018zg}).
We show that the optimal restricted mechanisms perform
badly in comparison with the optimal mechanisms
when the family of $(V,W)$--buyers has unbounded support
(\autoref{P:menu-SIC}) or has support restricted
to the unit square (\autoref{P:menu-SIC-square}).
As in the previous setting we consider, here we also 
obtain a revenue non-monotonicity result 
(\autoref{P:rev-non-mon-SIC}).

Overall, our research highlights the limitations of using
approximation mechanisms in settings with private valuations
and private budgets.
Approximation mechanisms yield a negligible revenue gap 
in only one of the six cases we consider.   
Specifically, we show that for distributions with bounded
support (above and below away from zero), an arbitrarily
high fraction of the optimal revenue can be obtained
using a simple mechanism with polylogarithmic menu size. 
This result applies even if the valuation and the budget 
are arbitrarily correlated. 
In all others, the revenue gap between the optimal mechanism
and the best performing approximation mechanism is 
unboundedly large.  
Thus, despite some clear advantages (including computational)
in dealing with 
restricted or relaxed classes of mechanisms, one has to be
aware of large potential revenue losses.
Our research also stresses the crucial role some of our
assumptions play in the performance of our theoretical
constructs.
Our results, positive and negative, apply to a single-item 
setting.  
An interesting direction for future work is to investigate 
whether our main positive results can be computed in 
polynomial time, not only for the single-item case but 
also in richer settings with multiple bidders and 
multiple items.

\subsection*{Related Work}

For single-item auctions, \citet{Laffont:1996ig} show that when 
there is a single buyer with a publicly known budget and a 
value drawn from a regular distribution or DMR distributions, 
the optimal mechanism has a menu size of one: it posts a 
price equal to the minimum of the buyer's budget and the 
Myersonian reserve price.  
When the budget is publicly-known and the value of the single 
bidder is drawn from a general distribution, \citet{Chawla:2007fu} 
show that the optimal mechanism has a menu of size two. 
For multiple bidders with public budgets, 
\citet[Theorem 7]{Chawla:2007fu} give a 2-approximation mechanism. 
\citet[Theorem 16]{Chawla:2007fu} also prove that, for 
private-budget MHR distributions (where valuations and budgets 
are independent) with multiple bidders, there exists a 
$3(1+e)$-approximation mechanism.
\citet{Devanur:2017ik} characterize the optimal mechanisms in 
a relaxed setting where a single buyer is unable to overstate 
her budget. 
They show that the optimal mechanism may require a menu with 
exponentially many options (in the number of possible budgets).

For the multi-item revenue maximization problem with a single 
budget-constrained additive buyer, \citet{Cheng:2021ew} 
analyze two simple pricing schemes: selling all items as 
a single bundle, and selling each item separately. 
They show that the better of these two mechanisms guarantees 
a constant fraction of the optimal revenue when the budget 
is drawn from an independent distribution satisfying the 
monotone hazard rate property.%
	\footnote{The single-item case can be viewed as a special 
		case of their framework (with one good). 
		In this degenerate setting, both bundling and separate 
		sales coincide with a single posted price, so their 
		constant-factor guarantee implies that menu size 
		one suffices to approximate the optimal revenue in 
		single-item auctions satisfying the monotone hazard
		rate property with private budgets and private values 
		drawn independently.}  
This builds on the earlier influential paper by 
\citet{Babaioff:2020kb}, who study the multi-item setting 
without budget constraints and show that the same simple 
pricing approach yields a constant-factor approximation to 
the optimal revenue. 
\citet{Rubinstein:2018fc}  extend the result of 
\citet{Babaioff:2020kb} to the multi-item, 
subadditive buyer revenue maximization problem (still without 
budgets). 
Interestingly, although revenue monotonicity does not hold 
in general for this setting (as shown by \citet{Hart:2015ob}),
\citet{Rubinstein:2018fc} show that improved bounds on 
approximate revenue monotonicity directly lead to stronger 
revenue guarantees for simple mechanisms.

In contrast, our results focus on single-item approximation 
mechanisms with private values and private budgets, allowing 
arbitrary correlation between value and budget.  
On the positive side (\autoref{P:good-approx}), we establish 
that for a single-item auction, an arbitrarily high fraction 
of the optimal revenue can be achieved using a polylogarithmic 
menu size.
This holds for the general case in which valuation and budget 
can be arbitrarily correlated, with the only assumption being 
that the support of the distribution is bounded above and
away from zero. 
On the negative side (\autoref{P:menu-size} to 
\autoref{P:infinite-supp}), we prove fundamental
impossibility results for distributions with unbounded support. 
In this setting, we show that achieving any positive fraction 
of the optimal revenue is impossible for finite distributions 
with a sublinear menu, and for infinite distributions with 
any finite menu. 
The sublinear-menu result is, to the best of our knowledge, 
new and reflects a broader way of analyzing approximation 
gaps, one that applies not only to fixed menu-size classes 
but also to classes whose menu complexity grows with the 
support size.
Some of our negative results apply even to independent 
distributions (e.g., \autoref{P:bad-approx-bounded} and 
\autoref{P:infinite-supp}). 
This stands in sharp contrast to the multi-item setting 
(with independent, additive values) without budgets, where
\citet{Babaioff:2022by} show that an arbitrarily high fraction 
of the optimal revenue can be extracted with a finite menu 
(even if the support is unbounded).

Our negative results are also related to the literature on 
selling multiple correlated goods. 
\citet{Briest:2015ob} show that in a unit-demand environment 
with correlated values, lottery pricing can outperform 
deterministic  pricing by an unbounded factor. 
\citet{Hart:2019ny} later show that, for correlated valuations, 
a similarly dramatic failure of simple mechanisms already 
occurs  with just two goods, thereby resolving an open problem 
left by \citet{Briest:2015ob}.  
Building on this, \citet{Psomas:2022rd} recently showed that 
\citet{Hart:2019ny} sufficient condition is in fact necessary 
for establishing infinite separations between simple and 
optimal mechanisms for multiple correlated goods. 
Although our environment differs, \autoref{P:infinite-supp}
6 establishes an analogous impossibility in our single-item 
model with private budgets: even here, no finite-menu mechanism 
guarantees any positive fraction of the optimal revenue; 
remarkably, this remains true even for certain independent 
distributions.


\section[Setting]{Setting}  \label{S:setting}

We study a buyer--seller interaction where the seller has
one object to sell and the buyer has a private valuation
and a private budget.
This last element is the only complication we consider.
Without the presence of a hard financial constraint on the
buyer's side, we are back in the canonical single-product
monopoly setting in which the revenue maximization problem
is thoroughly understood.
Our focus on the single-product case allows us to highlight
the limits of using approximation mechanisms in the
presence of budget constraints.

The seller's objective is to maximize expected revenue
---the seller's opportunity cost is zero.
The buyer has a private valuation $v \geq 0$ for the object and
a private budget $w \geq 0$ constraining her ability to pay.
Valuations and budgets are given by a random vector
$(V,W)$ that takes values in $\R_+^2$.
We do not exclude the possibility of $V$ and $W$ being
correlated.
The realization of this random vector is private information
of the buyer.
The seller only knows the distribution of the $(V,W)$--buyer
involved in the exchange.%
	\footnote{We use the expressions \emph{$(V,W)$--buyer}
	and \emph{random vector $(V,W)$} interchangeably.  }

The seller offers \emph{lotteries} of the form 
$(q,P_{e})$. 
Here $q \in [0,1]$ denotes the probability that the buyer 
gets the object, and $P_{e} \geq 0$ denotes the price the buyer 
pays in case the good exchanges hands  ---if there is no 
exchange, the buyer pays nothing.  
The expected payment to the seller generated by lottery 
$(q,P_e)$ is thus $P_e\,q$. 
A \emph{menu} is a collection (finite or infinite) 
of lotteries offered by the seller.


\subsection[Buyer]{Buyer's Behavior}  \label{SS:buyer}

We consider a buyer who, due to her hard budget constraint,
employs a two-stage approach in deciding which lottery
to choose from those available in a given menu $M$.
In the first stage, the buyer makes a shortlist composed
of all lotteries in $M$ that
are \emph{ex-post individually rational} and \emph{ex-post
budget feasible} given her budget.
In the second stage the buyer selects, from this shortlist,
a lottery that maximizes her expected utility.

Formally, given menu $M$, a $(V,W)$--buyer with type 
realization $(v,w) \in \R_+^2$ sorts out the collection of 
lotteries  
	\begin{equation*}
		M(v,w) 
		\ = \ \big\{ (q,P_{e}) \in M \, \colon \,  
			P_{e} \, \leq \, \min (v,w)  \big\} .
	\end{equation*}
Any lottery in $M(v,w)$ is ex-post budget feasible for the 
buyer ---she is always able to afford the price associated 
to either outcome, trade or no trade--- and ex-post 
individually rational ---she weakly prefers purchasing any 
lottery in $M(v,w)$ to not interacting with the seller, 
trade or no trade. 
Note that the value of the buyer's outside 
option is normalized to zero.  
From the shortlist $M(v,w)$, the buyer selects a lottery 
$(q^*,P^*_{e})$ that maximizes her expected 
utility; i.e., 
	\begin{equation*}
		(q^*,P^*_e) \ \in \ 
		\arg\max \, \big\{ (v - P_{e})\,q 
        \, \colon \, (q,P_{e}) \, \in \, M(v,w) \big\}.
	\end{equation*}  
It is convenient to let $p = P_{e} \, q$ denote the 
\emph{expected payment} to the seller.  
We now write lotteries simply as pairs $(q,p)$, where 
$q \in [0,1]$ and $p \geq 0$. 
The \emph{actual payment} from the buyer to the seller, 
conditional on trade, is of course $p/q$ as long as 
$q > 0$, and zero otherwise.

Notice that the buyer's behavior is presented as a two-stage 
decision process to highlight the fact that the buyer is an 
expected utility maximizer and has quasi-linear preferences.  
This is reminiscent of certain sequential elimination 
processes employed in behavioral economic theory to study 
choices by a boundedly rational agent 
(see \citet{Manzini:2007oy}). 
In our case the buyer is not boundedly rational, merely 
budget constrained.


\subsection[Seller]{Seller's Problem}  \label{SS:seller}

By the Revelation Principle (\citet{Myerson1981}), 
we formalize the seller's problem in terms of direct 
mechanisms.
A \emph{direct (selling) mechanism} $\mu = (x,s)$ is
composed of a pair of Borel measurable functions,
where $x \colon \R_+^2 \to [0,1]$ describes the probability
of the good being allocated to the buyer, and 
$s \colon \R_+^2 \to \R_+$ describes the expected payment
to the seller.
To be more explicit, given a report $(\tilde v, \tilde w)
\in \R_+^2$ from the buyer, the mechanism $\mu = (x,s)$
assigns the object with probability $x(\tilde v, \tilde w)
\in [0,1]$ in exchange for an expected payment of
$s(\tilde v, \tilde w) \geq 0$ to the seller.

Given the buyer's behavior, the seller takes into account
the following restrictions in the design of direct mechanisms.
First, $\mu = (x,s)$ must satisfy the \emph{ex-post budget
feasibility} constraint; i.e., for all $(v,w) \in \R_+^2$,
	\begin{equation}
		s(v,w)  \ \leq \ w \,x(v,w) . 	
			\tag{BF}  \label{Eq:BF}
	\end{equation}
(Recall the buyer's actual payment upon receiving
the object is $s(v,w)/x(v,w)$ as long as $x(v,w) > 0$,
and zero otherwise.)
Second, the mechanism $\mu = (x,s)$ must be
\emph{ex-post individually rational}; i.e., for all
$(v,w) \in \R_+^2$,
	\begin{equation}
		v \, x(v,w) \, - \, s(v,w) \ \geq \ 0 .  		
			\tag{IR} 	\label{Eq:IR}   	
	\end{equation}
(Recall that the buyer pays zero if the object doesn't
exchange hands.)
Finally, the mechanism $\mu = (x,s)$ must also be \emph{incentive
compatible}: for all $(v,w) \in \R_+^2$ and all
$(\tilde v, \tilde w) \in \R_+^2$ such that $s(\tilde v,\tilde w)
\, \leq \, w \, x(\tilde v, \tilde w)$, it must be that
	\begin{equation}
		v \, x(v,w) \, - \, s(v,w)
		 \ \geq \ v \, x(\tilde v,\tilde w) \, - \,
			s (\tilde v,\tilde w) .  \tag{IC} \label{Eq:IC}
	\end{equation}

In words, the buyer participates in the mechanism
and truthfully reveals her type if (i) the ex-post
payment upon receiving the object is less than her budget,
and the ex-post payment when trade does not occur is zero;
(ii) the ex-post utility and, thus the expected utility,
associated to participating in the mechanism is weakly greater
than the value of her outside option, which is normalized
to zero; and
(iii) the expected utility from truthfully revealing her
type to the mechanism is weakly greater than
the expected utility associated with any other report,
as long as this deviation is ex-post affordable for the buyer,
given her true type.%
		\footnote{Crucially, because of this budget affordability
		requirement, the incentive constraints in this
		case do not allow the seller's problem to
		be expressed as a linear program.}
We let $\mathcal{M}$ denote the class of all selling mechanisms
defined on $\R_+^2$ that satisfy the \eqref{Eq:BF},
\eqref{Eq:IR} and \eqref{Eq:IC} constraints.

Our focus on ex-post budget feasibility and ex-post
individual rationality is not uncommon in the literature
on auctions and mechanisms with budget constrained agents.
The analysis of first-price and second-price auctions under
budget constraints is usually performed
under these two assumptions ---see for instance 
\citet{Che:1998je}, \citet{Chawla:2011mf}, and  
\citet{Kotowski:2020bh}.
Some work on efficient and revenue-maximizing resource
allocation mechanisms with many agents also focuses on
ex-post constraints, e.g., \citet{Che:2013cu} and 
\citet{Richter:2019by}.
On the other hand, \citet{Che:2000ux} and \citet{Pai:2014cm}
focus on interim constraints in their analyses.
The ex-post constraints make more sense in our setting, where
we focus on mechanisms that offer lotteries to a
single buyer.  
\citet{Che:2013cu} argue that, while intellectually interesting,
lotteries with positive entry fees are not common in the
real world because they are susceptible to manipulation
by the seller.
Focusing on ex-post instead of interim constraints is
not without consequences, as expected revenue generally
will be lower under the ex-post constraints.%
		\footnote{\autoref{Ex:public-v} provides a concrete 
		illustration ---see also \citet{Chawla:2011mf}. }

Observe we define a direct mechanism $\mu = (x,s)$ on $\R_+^2$ 
even when the random vector $(V,W)$ may not have full support.  
This is without loss of generality: our \autoref{L:extension} in
\hyperlink{App-A}{Appendix A} shows that 
any ex-post budget feasible and incentive compatible 
mechanism defined on the support of $(V,W)$ can be extended 
to an ex-post budget feasible and incentive compatible 
mechanism defined on $\R_+^2$. 
From there, it is immediate to show that 
if the original mechanism defined on the support of 
$(V,W)$ is also ex-post individual rational, then so is its 
extension to $\R_+^2$. 
\citet{Hart:2015ob} show a similar \emph{Extension Lemma} in 
a setting without budget constraints, but with multiple goods. 
To the best of our knowledge, we are the first to point out its 
validity under hard budget constraints.

The expected revenue raised by $\mu = (x,s)$
when the seller interacts with a $(V,W)$--buyer is 
$R(\mu;V,W) := \mathbb{E}[s(V,W) ]$.  
Facing this buyer, the \emph{seller's problem} is to find a 
direct mechanism $\mu \in \mathcal{M}$ that maximizes 
expected revenue.
The \emph{optimal revenue from the $(V,W)$--buyer} is
the value of the solution to the seller's problem
at $(V,W)$, if one exists; i.e.,
    \begin{equation}
        \Rev(V,W) 
        \ := \ \sup_{\mu \in \mathcal{M}} R(\mu;V, W) .
    \end{equation}
An \emph{optimal mechanism for the $(V,W)$--buyer} is any
mechanism $\mu^*$ in $\mathcal{M}$ that generates expected
revenue $\Rev(V,W)$.


\subsection[Estimates]{Estimating the Revenue Gaps}
	\label{SS:estimates}

Several papers in the literature have studied variations
of the seller's problem.
This body of work points to the complexity of optimal mechanisms 
under budget constraints, or more generally under 
multi-dimensional type spaces. 
As a result, important contributions in the literature focus on
solving the revenue maximization problem over a more
restricted set of (simpler) mechanisms or a more
relaxed set of mechanisms.
In this paper, we refer to feasible solutions of these
related problems as \emph{approximation mechanisms} and
explore the limits of using approximation mechanisms to
solve the seller's revenue maximization problem.
When focusing on a different class of mechanisms $\cali N \ne
\cali M$, we write
\begin{equation*}
		\Rev (V,W |\cali N)
		\ := \ \sup_{\nu \in \cali N} R(\nu;V,W)
\end{equation*}
for the optimal revenue that can be raised by a seller
who interacts with a $(V,W)$--buyer and uses mechanisms
in the class $\cali N$.
An \emph{optimal approximation mechanism in the
class} $\cali N$ is any mechanism $\nu^* \in \cali N$
that generates expected revenue $\Rev(V,W | \cali N)$.

A branch of the literature considers restricted classes
of mechanisms that are `simpler' than $\cali M$ and thus
have lower communication complexity or computational complexity.
This work focuses on the expected revenue that is lost
by using simple mechanisms.
To measure any potential losses, we follow \citet{Hart:2017bh}
and focus on worst-case scenarios.
In what follows, let $\scr B$ denote a given family of
$(V,W)$--buyers that take values on $\R_+^2$.  
To avoid any measure-theoretic complications, we assume that 
all random vectors in $\scr B$ are defined on the same 
probability space.  
When specific properties of buyers in $\scr B$ are required 
(e.g., bounded support, correlation between valuations and 
budgets), we will be explicit about it.

\begin{definition}  \label{D:GFOR}
Let $\scr B$ be a given family of $(V,W)$--buyers and
$\cali N \subseteq \cali M$ be a non-empty subclass of
mechanisms.
The \emph{Guaranteed Fraction of Optimal Revenue} for
$\cali N$, denoted by $\GFOR (\cali N; \scr B)$, is
defined as the ratio
	\begin{equation*}
		\GFOR (\cali N; \scr B)
			\ := \ \inf_{(V,W) \in \scr B} \frac{\Rev ( V,W
				| \cali N)}{\Rev (V,W)} .
	\end{equation*}
\end{definition}

\smallskip

In words, $\GFOR (\cali N; \scr B)$ is the maximal
number $0 \leq \alpha \leq 1$ such that, for every
$(V,W)$--buyer in $\scr B$, there is a mechanism $\nu$ in
the class $\cali N$ that generates at least $\alpha$
times the optimal revenue $\Rev (V,W)$.  
\citet{Hart:2017bh} discuss the importance 
of this concept in the multiple-goods monopoly model 
and its relationship with the \emph{competitive ratio} 
concept in the computer science literature.
When $\alpha$ is close to $1$, the revenue gap between
an optimal approximation mechanism in the subclass $\cali N$
and an optimal mechanism in $\cali M$ is negligible,
and thus $\cali N$ can serve as a desirable replacement
for $\cali M$.

A related branch of the literature considers relaxing
some of the constraints embedded in the class $\cali M$
of \eqref{Eq:IC}, \eqref{Eq:IR} and \eqref{Eq:BF}
mechanisms.
Sometimes this relaxation is motivated by computational
requirements ---e.g., weakening some of the constraints
transforms the seller's problem into a computationally
feasible linear program.
In other instances, a relaxation is motivated by economic
or institutional considerations.  
\citet{Che:2000ux} explore revenue maximization
when the seller can force the buyer to set up a cash bond,
thus preventing the buyer from over-reporting her budget.
We consider best-case scenarios in measuring any potential
gains from considering a relaxed environment.

\begin{definition}  \label{D:MVR}
Let $\scr B$ be a given family of $(V,W)$--buyers and
$\cali N' \supseteq \cali M$ be a non-empty superclass of
mechanisms.
The \emph{Maximal Value of Relaxation} for $\cali N'$,
denoted by $\MVR (\cali N'; \scr B)$, is the ratio
	\begin{equation*}
		\MVR (\cali N'; \scr B)
		\ := \ \sup_{(V,W) \in \scr B} \frac{\Rev ( V,W
			| \cali N')}{\Rev (V,W)} .
	\end{equation*}
\end{definition}

Thus, $\MVR (\cali N'; \scr B)$ is the minimal number
$\beta \geq 1$ such that, for every $(V,W)$--buyer in the
family of random vectors $\scr B$, every mechanism $\nu'$ in
the class $\cali N'$ generates at most a multiple $\beta$ of
the optimal revenue $\Rev (V,W)$.
When $\beta$ is close to $1$, the revenue gap between
an optimal mechanism in $\cali M$ and an optimal approximation
mechanism in the superclass $\cali N'$ is negligible,
and so is the value of relaxing the seller's problem.

An additional reason to focus on $\GFOR$ and $\MVR$ is that they
provide a measure of robustness under higher orders of
uncertainty.
The seller knows that a buyer is characterized by the
random vector $(V,W)$ in $\scr B$, and thus is aware of the
fact that facing a
$(V,W)$--buyer is different from facing a $(V',W')$--buyer.
However, the seller may not know which buyer is present
at the time of exchange and thus may be interested in
prior-free measurements of the revenue gap.
When considering the use of simpler mechanisms in $\cali N$,
the seller focuses on a `worst-case' measure of lost
revenue, which is precisely what $\GFOR (\cali N; \scr B)$
provides.
When considering the value of relaxing some of the constraints
in the revenue maximization problem, the seller focuses on
a `best-case' measure of gained revenue, which is what
$\MVR (\cali N'; \scr B)$ represents.


\subsection[Rev Mon]{Revenue Monotonicity} \label{SS:rev-mon}

Before we present our contribution regarding the performance 
of approximation mechanisms, we show an important revenue 
monotonicity property satisfied by mechanisms in $\cali M$: 
under a reasonable tie-breaking rule, expected revenue increases
whenever the valuation and/or the budget of the buyer increases.
This holds for any mechanism satisfying \eqref{Eq:IC},
\eqref{Eq:BF} and \eqref{Eq:IR}, not just revenue-maximizing
mechanisms.
Following \citet{Hart:2015ob}, define \emph{seller-favorable
mechanisms} as those that break ties in favor of the seller.
Since we are interested in the revenue gaps between
optimal mechanisms and optimal approximation mechanisms,
focusing on seller-favorable mechanisms does not entail any
loss of generality.

Recall that a random vector $(V',W')$ first-order 
stochastically dominates $(V,W)$ if and only if 
$\mathbb{E} \, [u(V',W')] \, \geq \, \mathbb{E} \, [u(V,W)]$
for any non-decreasing function $u \colon \R^{2} 
\to \R$.  
If $(V,W)$ and $(V',W')$ belong to a family of buyers 
$\scr B$ (and thus are defined on the same probability space), 
then $(V',W')$ first-order stochastically dominates 
$(V,W)$ if and only if $(V',W') \geq (V,W)$ 
---see \citet{Shaked2007rt}.

\begin{proposition}  \label{P:rev-mon}
Suppose that $\mu = (x,s) \in \cali M$ is a seller-favorable
mechanism.
For any two buyers $(V,W)$ and $(V',W')$ such that $(V',W')$
first-order stochastically dominates $(V,W)$, one has
	\begin{equation*}
		R(\mu;V,W) \ \leq \ R(\mu;V',W') .
	\end{equation*}
\end{proposition}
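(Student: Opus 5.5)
Since $(V',W')$ and $(V,W)$ live on the same probability space, first-order stochastic dominance means $(V',W') \geq (V,W)$ pointwise (as recalled just before the statement). The whole proposition therefore reduces to a \emph{pointwise} claim about the payment rule: for any seller-favorable $\mu=(x,s)\in\cali M$,
\begin{equation*}
	s(v,w) \ \leq \ s(v',w') \qquad \text{whenever } v\le v' \text{ and } w\le w' .
\end{equation*}
Given this, $s(V,W)\le s(V',W')$ almost surely, and taking expectations gives $R(\mu;V,W)\le R(\mu;V',W')$. I would split the pointwise claim into two steps, raising first the budget and then the valuation, and chain them: $s(v,w)\le s(v,w')\le s(v',w')$.

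\textbf{Budget increase.} Fix $v$ and let $w\le w'$. By \eqref{Eq:BF}, the option chosen at $(v,w)$ satisfies $s(v,w)\le w\,x(v,w)\le w'\,x(v,w)$. So it is affordable for type $(v,w')$, and \eqref{Eq:IC} gives that $(v,w')$ weakly prefers its own option. Conversely, $(v,w)$ can afford any option that $(v,w)$ itself can afford; the issue is whether $(v,w)$ can afford $(x(v,w'),s(v,w'))$. If it can, both types attain the same utility and choose from the same preferences. Seller-favorability then needs interpreting as follows: among utility-maximizing affordable options, the mechanism assigns the highest payment. Under that reading, $(v,w')$ has a weakly larger affordable set with the same maximal utility, so its seller-favorable choice pays at least as much.

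If $(v,w)$ cannot afford $(v,w')$'s option, then $(v,w')$ strictly gains from its larger affordable set, and I need a direct comparison of payments. Here I would use that among affordable lotteries with utility $\ge u$, a type chooses the one maximizing $vq-p$. Write $u=vx(v,w)-s(v,w)$ and $u'=vx(v,w')-s(v,w')\ge u$. I then want to show $s(v,w')\ge s(v,w)$. For this I would exploit that, for the same valuation $v$, the menu restricted to actual price $P_e=p/q\le \min(v,w)$ is nested inside the one restricted to $P_e\le \min(v,w')$. This is a subtle point: higher utility at the same $v$ does not by itself force a higher payment, since a cheaper option with the same $q$ could exist.

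\textbf{Valuation increase.} Fix $w$ and let $v\le v'$. First note the affordability sets: the shortlist at $(v,w)$ requires $P_e\le\min(v,w)$, and this set grows with $v$. On the common affordable options, I would use the standard single-crossing argument from \citet{Hart:2015ob}. Adding the two \eqref{Eq:IC} inequalities gives $(v'-v)(x(v',w)-x(v,w))\ge0$ when each type can afford the other's option, so $x$ is monotone. The payment difference is then bounded by $v(x'-x)\le s'-s$, which yields $s(v',w)\ge s(v,w)$. Seller-favorable ties handle the case $x'=x$. If the higher type's option is unaffordable for the lower type, then its actual price exceeds $\min(v,w)$. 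I would compare $s' = P'_e x'$ against $s = P_e x$ with $P_e\le \min(v,w)$, again using that $(v',w)$ could have chosen $(v,w)$'s option.

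\textbf{Main obstacle.} The main obstacle is exactly these cases where the deviation is \emph{unaffordable} for the lower type. In them, the usual two-sided IC argument is unavailable and only one-sided IC holds. I expect to resolve them by an explicit argument on the lottery menu in $(q,P_e)$ coordinates, or alternatively by invoking the extension result (\autoref{L:extension}) to restrict attention to the menu. If a counterexample emerges in this unaffordable case, the seller-favorable tie-breaking convention is where I would look to close the gap.
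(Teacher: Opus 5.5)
Your reduction to pointwise monotonicity of $s$, the split into a budget step and a valuation step, and your handling of the budget case where $(v,w)$ can afford $(x(v,w'),s(v,w'))$ all match the paper. In that case IC holds in both directions, and seller-favorable tie-breaking gives the conclusion.

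\textbf{The gap.} The other budget case, where $(v,w)$ cannot afford $(v,w')$'s lottery, is flagged as the main obstacle but left unresolved. The tools you point to (menu geometry in $(q,P_e)$ coordinates, the Extension Lemma, tie-breaking) are not what closes it. The missing ingredient is \eqref{Eq:IR}. Write $(q,p)=(x(v,w),s(v,w))$ and $(q',p')=(x(v,w'),s(v,w'))$, and suppose $p>p'$.
\begin{itemize}
\item Unaffordability gives $p'>wq'$, and \eqref{Eq:BF} at $(v,w)$ gives $p\le wq$. Hence $wq\ge p>p'>wq'$, so $q>q'$. Also $q'>0$, since $p'>0$ and $p'\le w'q'$.
\item Because $p\le wq\le w'q$, the lottery $(q,p)$ is affordable for $(v,w')$. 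So \eqref{Eq:IC} gives $v(q-q')\le p-p'<w(q-q')$.
\item Therefore $v<w<p'/q'$, i.e.\ $vq'-p'<0$, which contradicts \eqref{Eq:IR} at $(v,w')$.
\end{itemize}
Your own remark that ``a cheaper option with the same $q$ could exist'' shows why IC and affordability alone cannot close this case. You need the fact that an option chosen at $(v,w')$ which is over budget for $w$ cannot have actual price above $v$. Tie-breaking plays no role here; the weak IC inequality suffices.

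\textbf{The valuation step.} Your worry about an ``unaffordable'' case there is unfounded. In \eqref{Eq:IC} the admissible deviations are defined by budget affordability only. With a common budget $w$, \eqref{Eq:BF} at $(v',w)$ says exactly that $(x(v',w),s(v',w))$ is affordable for $(v,w)$, and likewise in the other direction. So both IC inequalities are available, and your single-crossing argument finishes the step with no extra case; this is the paper's Step 1.

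Even if you read affordability through the shortlist $P_e\le\min(v,w)$, a lottery with $P_e>v$ gives $(v,w)$ negative utility, so the corresponding IC inequality holds automatically by \eqref{Eq:IR}. Seller-favorable tie-breaking is also unnecessary in this step: $s'-s\ge v(x'-x)\ge 0$ already covers $x'=x$.
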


\begin{proof}
Fix a seller-favorable mechanism $\mu = (x,s) \in \cali M$.
It suffices to show that the payment function $s$
is weakly increasing, in the sense that
for any two buyer's types $(v,w)$ and $(v',w')$ such
that $(v,w) \leq (v',w')$, one has $s(v,w) \leq s(v',w')$.
This is so because, if that were the case, when $(V,W)$ is
first-order stochastically dominated by $(V',W')$ we have
	\begin{equation*}
		R(\mu;V,W) \ = \ \mathbb{E}\big[s(V,W)\big]
		\ \leq \ \mathbb{E}\big[s(V',W')\big] \ = \ R(\mu;V',W') .
	\end{equation*}

We divide our argument in two steps.
The first one is similar to traditional arguments made
in settings without budget constraints that show that
allocations and payments are weakly increasing in the
buyer's valuation.%
	\footnote{See for instance \citet{Hart:2015ob}.
		We include the proof of this step for completeness. }
The second step is novel and specific to our budget-constrained 
setting.
Consider two distinct type realizations
$(v,w)$ and $(v',w')$, such that $(v,w) \leq (v',w')$.
Let $(q,p) = (x(v,w),s(v,w))$ and $(q',p') = (x(v',w'),
s(v',w'))$ denote the lotteries selected by these types
under the selling mechanism $\mu = (x,s)$ in $\cali M$.

\smallskip
\noindent \emph{Step 1. }
Suppose that $v < v'$ and $w = w'$.
To obtain a contradiction, suppose that $p > p'$.
Since both types share the same budget, the following two
incentive constraints are satisfied under the mechanism $\mu$:
	\begin{align*}
		v\,q \, - \, p 	\ \geq \ v\,q' \, - \, p'
			\qquad \text{and} \qquad
		v'q' \, - \, p' 	\ \geq \ v'q \, - \, p .
	\end{align*}
Re-arranging these expressions obtains
	\begin{equation*}
		v \big( q \, - \, q' \big)
		\ \geq \ p \, - \, p'
		\ \geq \ v' \big( q \, - \, q' \big) .
	\end{equation*}
Since we assumed that $p > p'$, from the first inequality
above one concludes that $q > q'$.
But this immediately implies that $v \geq v'$, which
is a contradiction.

\smallskip
\noindent \emph{Step 2. }
Suppose now that $v = v'$ and $w < w'$.
As in the previous step, to obtain a contradiction suppose
that $p > p'$.
Since the mechanism $\mu$ satisfies \eqref{Eq:BF}, it must
be that type $(v',w')$ can afford (ex-post) the menu entry
$(q,p)$ ---indeed, $p \leq w\,q \leq w'q$.
Thus, the \eqref{Eq:IC} constraint for type $(v',w')$
requires
	\begin{equation}  \label{Eq:rev-mon}
		v'q' \, - \, p' \ \geq \ v'q \, - \, p .
	\end{equation}
If \autoref{Eq:rev-mon} holds with equality, the buyer's type
$(v',w')$ is indifferent between lotteries $(q,p)$ and
$(q',p')$ but chose the latter one, despite the fact that
$p > p'$ and that both lotteries are ex-post budget feasible.
This contradicts the fact that $\mu$ is
seller-favorable.

Since \autoref{Eq:rev-mon} holds with strict inequality, 
noticing that $v' = v$ shows that lottery $(q',p')$
must not be ex-post budget feasible for $(v,w)$.
Otherwise, the buyer's type $(v,w)$ would strictly prefer to
purchase $(q',p')$ than $(q,p)$, which is a contradiction.
Thus, it must be that $w\,q \geq p > p' >  w\,q'$,
which implies $q > q'$.
Using \autoref{Eq:rev-mon} again (with $v$ instead of $v'$),
and rearranging obtains $v (q' - q) \, > \, p' - p
\, > \, w(q'-q)$, and thus
	\begin{equation*}
		(v \, - \, w) (q' \, - \, q) \ > \ 0 .
	\end{equation*}
	
Observe that we must have $v > w$, for otherwise
$v \leq w < p'/q'$, and thus $v\,q' - p' < 0$.
But then type $(v,w')$ is choosing a lottery with
negative utility, violating \eqref{Eq:IR}.
Readily, the above inequality implies $q' > q$, a contradiction
to our previous finding.

Of course, if $v < v'$ and $w < w'$, a combination of
the above steps gives us the result.
Thus, the pricing function $s$ is weakly monotone, as desired.
\end{proof}

\autoref{P:rev-mon} states that any mechanism $\mu$ in 
$\cali M$ satisfies a \emph{revenue monotonicity} property: 
for any two buyers $(V,W)$ and $(V',W')$ in the family 
$\scr B$, if $(V',W') \geq (V,W)$, then the expected revenue 
generated by the former buyer is weakly greater than 
the expected revenue generated by the latter one. 
It is clear that mechanisms in a superclass $\cali N' 
\supseteq \cali M$ need not satisfy revenue monotonicity, 
since $\cali N'$ may contain mechanisms that violate either 
\eqref{Eq:IC}, or \eqref{Eq:IR}, or \eqref{Eq:BF}. 
Here we highlight that the revenue monotonicity property 
depends on the mechanism $\mu$ being defined on $\R^2_+$.%
	\footnote{As the proof of \autoref{P:rev-mon} makes clear, 
		one can relax this to consider random variables with 
		a lattice as common support.}  
Because of the \emph{Extension Lemma}  
---see \autoref{L:extension} in \hyperlink{App-A}{Appendix A}---
treating $\R^2_+$ as the domain of mechanisms in $\cali M$ 
is without loss of generality.  
However, the \emph{Extension Lemma} may not apply to 
mechanisms in $\cali N \subseteq \cali M$: if $\mu \in 
\cali N$ is originally defined on a strict subset 
$D$ of $\R^2_+$, its extension $\overline\mu$ to $\R^2_+$, 
while still in $\cali M$, may violate the extra restrictions 
that define the subclass $\cali N$.
This implies that mechanisms in $\cali N$ defined on a 
strict subset of $\R^2_+$ need not satisfy 
the revenue monotonicity property ---such an environment  
is presented in \autoref{SS:sic}.  
Thus, an important consideration in using approximation 
mechanisms is whether revenue monotonicity can be preserved.  
As \citet{Rubinstein:2018fc} and \citet{BenMoshe:2024ej}, 
we measure this via revenue non-monotonicity gaps.

\begin{definition}  \label{D:RDM}
Let $\scr B$ be a family of $(V,W)$--buyers and let 
$\cali N$ be a non-empty class of mechanisms.  
Say that the \emph{revenue non-monotonicity gap} can be unboundedly 
large in $\cali N$ if there exists some $(V,W) \in \scr B$ 
for whom 
	\begin{equation*}
		\sup_{(V',W') \, \geq \, (V,W)}  \ 
		\frac{\Rev(V,W | \cali N)}{\Rev(V',W' | \cali N)} 
		\ = \ + \, \infty . 
	\end{equation*}
\end{definition}


\section[Revenue Gaps]{Revenue Gaps}  \label{S:Gaps}

In this section, we explore the gap between the revenue
generated by optimal mechanisms in $\cali M$ and the maximal
revenue generated by approximation mechanisms (that is,
mechanisms that lie in some class $\cali N \ne \cali M$).


\subsection[Good Approx]{Good Approximations with Simple
Mechanisms}   \label{SS:good-approx}

It is convenient to define the \emph{menu} generated by a 
mechanism $\mu = (x,s)$ as
	\begin{equation*}
		\Menu(\mu) \ := \
			\big\{ \big( x(v,w),s(v,w) \big) \, \in \,
			[0,1] \times \R_+ \, \colon \, (v,w) \in
			\R_+^2 \big\} \setminus \{(0,0)\} .
	\end{equation*}
In other words, the menu of $\mu = (x,s)$ is given
by its image, excluding the trivial lottery.%
		\footnote{Any ex-post individually rational mechanism
		can include the trivial lottery $(0,0)$.}
The \emph{menu size} of the mechanism $\mu$ is the
cardinality of its menu.
As before, we refer to the different components of
$\Menu(\mu)$ as lotteries or menu entries.
When the mechanism is fixed and no confusion arises, we
write generic lotteries as pairs $(q,p)$, where
$q \in [0,1]$ is the probability that the buyer gets the
object and $p \geq 0$ is the expected price associated with
this probability.
A lottery $(q,p)$ belongs to $\Menu(\mu)$ if there
exists some type $(v,w)$ for which $(x(v,w),s(v,w))
= (q,p)$.

For any positive integer $m$, let $\cali S_m \subseteq
\cali M$ denote the subclass of mechanisms that have
a menu size of (at most) $m$.
In a setting where the buyer faces no financial constraints,
\citet{Myerson1981} shows that for any random valuation $V$,
the menu size of the optimal mechanism is $m = 1$.
Thus, without budgets, there is no revenue gap between
the subclass $\cali S_1$ and $\cali M$ (where these two
sets are defined ignoring the \eqref{Eq:BF} constraint).

Unfortunately this result doesn't extend to a setting with
financial constraints.
Indeed, \citet{Che:2000ux} characterize the optimal selling
mechanism for a single item in a seller--buyer setting
with private valuation and private budget, where valuation
and budget may be correlated.
They show that the revenue maximizing mechanism may
require a continuum of lotteries.
This is true even in cases where the buyer has a publicly
known valuation.
As a result, any mechanism with finite menu size is suboptimal.
We adapt their example to our setting with ex-post budget
feasibility and participation constraints.

\smallskip

\begin{example}   \label{Ex:public-v}
The buyer has a publicly known valuation of $\hat v > 1$ and a
private budget that lies in the unit interval.  
The buyer's budget is uniformly distributed on $[0,1]$, 
which is common knowledge.
Thus, the support of the random vector $(V,W)$ is
$\{ \hat v \} \times [0,1]$.
In \hyperlink{App-B}{Appendix B}, we show that every optimal 
ex-post mechanism in this setting must have an infinite menu.
\hfill $\triangle$
\end{example}

\citet{Feng:2022qm} investigate the power of posted prices
as a selling mechanism in the presence of private budgets.
They showed that the revenue gap between any single
take-it-or-leave-it offer and the exact optimal mechanism
can be infinite, even for a single buyer setting
where the budget distribution and the valuation
distribution are independent from each other.
Therefore, the optimal revenue in their setting cannot
be well approximated using mechanisms in the
subclass $\cali S_1$.
In this subsection, we show that if the family of random
vectors contains only those with bounded support in terms
of both the valuation and the budget realizations, then
simple, finite menu size mechanisms provide a good
approximation to the optimal revenue.
In the next subsection we show how things change dramatically
when one considers random variables with unbounded support.
In these two cases, we do not assume independence between
the valuation and the budget.

Our first result on the revenue gap between optimal
mechanisms and optimal approximation mechanisms states that if
valuations and budgets are bounded above and below (away
from zero),
then an arbitrarily high fraction of the optimal revenue
can be extracted using a simple mechanism with finite
(polylogarithmic) menu size.
This result holds even if the family of buyers considered
by the seller includes random vectors where valuation
and budget are correlated.

\begin{proposition}  \label{P:good-approx}
Fix $0 < L < H < \infty$. 
Let $\scr B_b$ be a family of $(V,W)$--buyers that
contains all distributions whose supports are subsets of
$[L,H] \times [L,H]$.
Then, for every $\epsilon > 0$ there exists a positive
integer $m \, \leq \, (1  + \lfloor \log_{1+\epsilon}
H/L\rfloor )^2$ such that
	\begin{equation*}
		\GFOR ( \cali S_{m}; \scr B_b)
		\ \geq \ 1 \, - \, \epsilon .
	\end{equation*}
\end{proposition}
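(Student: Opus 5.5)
The plan is to combine the revenue monotonicity result (\autoref{P:rev-mon}) with a geometric discretization of the type square $[L,H]\times[L,H]$. Fix $\epsilon>0$, let $k = \lfloor \log_{1+\epsilon} H/L \rfloor$, and define the grid points $g_i = L(1+\epsilon)^i$ for $i=0,\ldots,k$. There are $(1+k)^2$ grid types $(g_i,g_j)$. Given a buyer $(V,W)\in\scr B_b$, define the rounded-down buyer $(\underline V,\underline W)$ by sending each realization $(v,w)$ to $(g_i,g_j)$, where $g_i$ is the largest grid point with $g_i\le v$, and similarly $g_j \le w$. By construction $(\underline V,\underline W)\le (V,W)$ pointwise, and $(V,W)\le (1+\epsilon)(\underline V,\underline W)$ pointwise, since $v<g_{i+1}=(1+\epsilon)g_i$ whenever $i<k$. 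When $i=k$ we have $v\le H<L(1+\epsilon)^{k+1}$. Both random vectors live on the same probability space, so comparisons in the sense of \autoref{P:rev-mon} are available.

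The first step compares the optimal revenue of $(V,W)$ to that of the rounded buyer. I will use a scaling argument. If $\mu=(x,s)\in\cali M$, define $\mu_\lambda(v,w) = (x(\lambda v,\lambda w), s(\lambda v,\lambda w)/\lambda)$ with $\lambda = 1+\epsilon$. The constraints \eqref{Eq:BF}, \eqref{Eq:IR} and \eqref{Eq:IC} are all homogeneous of degree one in $(v,w,s)$, so $\mu_\lambda\in\cali M$. Its revenue on $(\underline V,\underline W)$ equals $\frac1\lambda \mathbb E[s(\lambda\underline V,\lambda\underline W)]$. Because $(\lambda \underline V,\lambda\underline W)\ge (V,W)$, \autoref{P:rev-mon} gives $\mathbb E[s(\lambda\underline V,\lambda\underline W)]\ge \mathbb E[s(V,W)]$, provided we first replace an (approximately) optimal $\mu$ by a seller-favorable version, which costs no revenue. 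Taking the supremum yields
\[
\Rev(\underline V,\underline W)\ \ge\ \tfrac{1}{1+\epsilon}\,\Rev(V,W)\ \ge\ (1-\epsilon)\Rev(V,W).
\]
One subtlety is that $\lambda \underline V$ may exceed $H$. This is harmless, because mechanisms are defined on all of $\R_+^2$.

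The second step restricts to a small menu. Take any seller-favorable mechanism $\nu$ that is near optimal for $(\underline V,\underline W)$, and keep only the lotteries chosen by the $(1+k)^2$ grid types. This gives a menu $\hat M$ with at most $m=(1+k)^2$ entries. Let $\hat\nu$ be the direct mechanism in which each type in $\R_+^2$ picks a seller-favorable utility-maximizing affordable entry of $\hat M\cup\{(0,0)\}$. Then $\hat\nu\in\cali S_m$, and it satisfies \eqref{Eq:BF}, \eqref{Eq:IR} and \eqref{Eq:IC} by construction. On grid types $\hat\nu$ reproduces the payments of $\nu$: each grid type's original choice is still available, and any entry that is now available was also available before, so the choice is optimal. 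With seller-favorable tie-breaking, the payment is at least the original one. Finally, apply \autoref{P:rev-mon} to $\hat\nu$ with $(V,W)\ge(\underline V,\underline W)$:
\[
R(\hat\nu;V,W)\ \ge\ R(\hat\nu;\underline V,\underline W)\ \ge\ R(\nu;\underline V,\underline W)\approx \Rev(\underline V,\underline W)\ \ge\ (1-\epsilon)\Rev(V,W).
\]
Handling the approximation slack, by taking $\nu$ within $\delta$ of optimal and letting $\delta\to0$ (or by absorbing $\delta$ into a slightly smaller $\epsilon$), gives $\GFOR(\cali S_m;\scr B_b)\ge 1-\epsilon$.

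The main obstacle will be the careful treatment of seller-favorable tie-breaking. \autoref{P:rev-mon} needs it, and one must check two things: that a seller-favorable version of a near-optimal mechanism exists, since the supremum over ties may not be attained with an infinite menu, and that restricting the menu keeps $\hat\nu$ measurable and seller-favorable. With a finite menu, attainment and measurability are immediate. For the scaling step, I will first try to avoid this issue by applying the scaling directly to the finite-menu mechanisms. Alternatively, I will argue monotonicity of $s$ pointwise only along the rounding map, using a near-seller-favorable selection and an $\epsilon$-slack argument.
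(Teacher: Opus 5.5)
Your proposal is correct and follows essentially the paper's route. Both arguments use $(1+\epsilon)$-geometric rounding and a $(1+\epsilon)$ rescaling; the paper scales the buyer down to $(V,W)/(1+\epsilon)$ rather than scaling the mechanism, which is equivalent. Both then sandwich the discretized buyer between two applications of \autoref{P:rev-mon}.

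On the technicalities you flag, the paper gets seller-favorability from the $(1-\epsilon')$ price-discount trick of \citet{Babaioff:2022by}. So your phrase ``costs no revenue'' should read ``costs an arbitrarily small fraction,'' which is absorbed because $\frac{1}{1+\epsilon} > 1-\epsilon$. Your explicit menu-restriction construction of $\hat\nu$ with seller-favorable tie-breaking is a cleaner justification than the paper's bare assertion that an optimal seller-favorable mechanism with at most $m^*$ entries exists for $(\hat V,\hat W)$.
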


\begin{proof}
It is notationally convenient to rescale the supports 
of the $\scr B_b$ family to the bounded square 
$[1,H/L]\times[1,H/L]$.  
This won't have any effect on the ratio of revenues 
measured in GFOR.
Let now $(V,W)$ be a random vector in $\scr B_b$.
Fix any $\epsilon > 0$, and consider a $(V',W')$--buyer
such that
	\begin{equation*}
		(V',W') \ = \ \frac{1}{1+\epsilon} \, (V,W) .
	\end{equation*}
In words, type $(v',w') = (1+\epsilon)^{-1}(v,w)$ belongs to
the support of $(V',W')$ if and only if type $(v,w)$ is
in the support of $(V,W)$.
Clearly,
	\begin{equation} \label{Eq:good-approx}
		\Rev (V',W') \ = \ \frac{1}{1 \, + \, \epsilon}
			\, \Rev (V,W) ,
	\end{equation}
since $(x^*,s^*) \in \cali M$ is a revenue maximizing
mechanism for the $(V,W)$--buyer if and only if $(x^*,
s^*/(1+\epsilon)) \in \cali M$ is a revenue maximizing
mechanism for the scaled down $(V',W')$--buyer.

We construct a new random vector $(\hat V, \hat W)$
from $(V,W)$ as follows.
Replace every realization $(v,w)$ in the support of $(V,W)$
with $(\hat v, \hat w)$, where $\hat v$ is the highest
non-negative integer power of $1 + \epsilon$ that is smaller
than or equal to $v$, and likewise $\hat w$ is the highest
non-negative integer power of $1 + \epsilon$ that is smaller
than or equal to $w$.
Observe that different types in the support of $(V,W)$
may `collapse' to the same type in the rounded down
random vector $(\hat V,\hat W)$, in which case the
probability of type $(\hat v, \hat w)$ is the sum of
probabilities of the collapsing types in $(V,W)$.
By construction, the $(\hat V, \hat W)$--buyer contains
at most
	\begin{equation}  \label{Eq:m*}
		m^*
		\ = \ (1 \, + \, \lfloor \log_{1+\epsilon}
			H/L \rfloor)^2 	
	\end{equation}
distinct types. 
To see this, notice that $\hat v \in \{ (1+\epsilon)^0, 
(1+\epsilon)^1,(1+\epsilon)^2,	\ldots, 
(1+\epsilon)^{\lfloor \log_{1+\epsilon}	H/L \rfloor} 
\}$, and likewise for $\hat w$. 
Thus, the optimal mechanism in $\cali M$ for
the $(\hat V, \hat W)$--buyer has menu size $m \leq m^*$.
Notice that for every type $(v,w)$ in the support of $(V,W)$,
we have $v/(1+\epsilon) \leq \hat v \leq v$ and
$w/(1+\epsilon) \leq \hat w \leq w$, where $(\hat v, \hat w)$
is the corresponding `collapsed' type in the support
of $(\hat V, \hat W)$.

\citet{Babaioff:2022by} show that any mechanism can be 
transformed into a seller-favorable mechanism by 
multiplying the price of each menu by $(1-\epsilon')$, 
for sufficiently small $\epsilon' > 0$.  
The resulting mechanism achieves revenue of at least 
a $(1-\epsilon')$-fraction of the original mechanism's 
revenue. 
Hence, there exists a seller-favorable feasible mechanism 
$\mu'$ for $(V',W')$ such that 
	\begin{equation}
		(1 - \epsilon') \Rev(V',W')  
		\ \leq \ R(\mu';V',W') .   \label{Eq:fix-a}
	\end{equation}
Since $(V',W') \leq (\hat V, \hat W)$ and $\mu'$ is 
seller-favorable, \autoref{P:rev-mon} implies 
	\begin{equation}
		R(\mu';V',W') \ \leq \ R(\mu';\hat V, \hat W) .
			\label{Eq:fix-b}
	\end{equation}

Because $(\hat V, \hat W)$ has finite support with at most 
$m^*$ types, there exists a seller-favorable mechanism 
$\hat\mu$ for $(\hat V, \hat W)$ whose menu size is at most 
$m^*$ and moreover this mechanism is optimal among all 
seller-favorable mechanisms for type $(\hat V, \hat W)$. 
Clearly, 
	\begin{equation}
		R(\mu';\hat V, \hat W) 
		\ \leq \ R(\hat\mu;\hat V, \hat W) . 
			\label{Eq:fix-c}
	\end{equation}
Also, $(\hat V, \hat W) \leq (V,W)$, so \autoref{P:rev-mon} 
shows that 
	\begin{equation}
		R(\hat\mu; \hat V, \hat W) 
		\ \leq \ R(\hat\mu; V,W) .
			\label{Eq:fix-d}
	\end{equation}
Because $\hat\mu$ has menu size at most $m^*$, we have 
	\begin{equation}
		R(\hat\mu;V,W) \ \leq \ 
		\Rev(V,W | \cali S_{m^*}) . 
			\label{Eq:fix-e} 
	\end{equation}

Combining Equations \eqref{Eq:fix-a} to \eqref{Eq:fix-e} 
and using \autoref{Eq:good-approx} obtains
	\begin{equation*}
		\frac{1 \, - \, \epsilon'}{1 \, + \, \epsilon}
		\Rev(V,W) 
		\ = \ (1 \, - \, \epsilon')\Rev(V',W') 
		\ \leq \ \Rev (V,W | \cali S_{m^*}) ,  
	\end{equation*}
or equivalently 
	\begin{equation*}
		\frac{1 \, - \, \epsilon'}{1 \, + \, \epsilon}
		\ \leq \ \frac{\Rev (V,W | \cali S_{m^*})}
			{\Rev (V,W)} .
	\end{equation*}
Clearly, for sufficiently small $\epsilon' > 0$ we have 
that $(1-\epsilon) \leq (1-\epsilon')/(1+\epsilon)$. 
Our argument holds for any $(V,W)$--buyer in the
class of distributions $\scr B_b$ with support inside the
bounded rectangle $[1,H/L] \times [1,H/L]$.
Thus, we conclude that
	\begin{equation*}
		1 \, - \, \epsilon
		\ \leq \ \GFOR ( \cali S_m ; \scr B_{b} )  
		\qedhere
	\end{equation*}
\end{proof}

\smallskip

\begin{remark}
\citet{Hart:2019ny} introduced the `nudge-and-round' 
technique, which subtly discretizes the menu of a given 
mechanism. 
In contrast, our approach discretizes the valuation and the 
budget, and employs a revenue monotonicity argument to 
establish approximation guarantees. 
Notably, revenue monotonicity does not hold in 
\citet{Hart:2019ny}'s setting, which involves 
multiple goods (cf.~\citet{Hart:2019it}), 
whereas our single-item, budget-constrained environment 
admits such a property --- see \autoref{P:rev-mon}.
\end{remark}

\smallskip

\begin{remark}
Notice that we can bound $m^*$, which itself is an upper
bound for the menu size complexity needed in
\autoref{P:good-approx}, by the following:
	\begin{align*}
		m^* \ = \
		(1 + \lfloor \log_{1+\epsilon} H/L \rfloor)^2
		& \ \leq \ 4 \, ( \log_{1+\epsilon} 
			H/L )^2   \\[.75ex]
		& \ \leq \ 4 \, \left( 1 \, + \, \tfrac{1}{\epsilon}
			\right)^2 \, (\log_{2} H/L )^2 ,
	\end{align*}
where the last inequality is obtained by changing the base
of the logarithm to $2$ and using the fact that
$1/\log_2 (1+\epsilon) \, \leq \, 1 + 1/\epsilon$, for every
$\epsilon > 0$.%
	\footnote{Using the Big O notation and the fact that 
	$(1 + \frac{1}{\epsilon})^2 \leq \frac{4}{\epsilon^2}$
	for every $0 < \epsilon < 1$, 
	we have that $m^* = O \big( ( \frac{\log_2 H/L}
	{\epsilon})^2 \big)$. }
Consequently, the menu size depends only on $\epsilon$ 
and $H/L$ in a polylogarithmic way and is independent 
of the distribution. 
It has a reasonable size even for large upper
bounds for the valuation and/or the budget.
For fixed bounds on the support of the distributions
considered, a lower $\epsilon > 0$ (equivalently, a better
approximation to the optimal revenue) demands a larger
menu size complexity.
For instance when $L = 1$ and $H = 4$, \autoref{Eq:m*} shows 
that for $\epsilon = 1/2$ (hence, to approximate half the 
optimal revenue) it suffices to consider mechanisms with menu 
size of at most $16$.
But the upper bound on menu size complexity increases to
$64$ if $\epsilon = 1/5$.
\end{remark}


\subsection[Bad Approx]{Bad Approximations with Simple
Mechanisms}  \label{SS:bad-approx}

A conjecture that seems to be suggested from 
\autoref{P:good-approx} is that, as the common upper bound $H$ 
of the support of the random vector $(V,W)$ increases, 
the menu size complexity required to guarantee an
arbitrarily high fraction of the optimal revenue tends
towards infinity.
The validity of this conjecture is not immediately
verified for three reasons.
First, \autoref{P:good-approx} provides a polylogarithmic
upper bound on the menu size complexity, but it
does not provide a matching lower bound.
Thus, we cannot be sure of how tight the upper bound is.
Second, the right-hand side of \autoref{Eq:m*}
grows in a sublinear way.%
		\footnote{We discuss this concept after the
		proof of \autoref{P:bad-approx-bounded}.}
So, it is possible that, as $H$ grows, letting the size of 
the menu grow faster (i.e., linearly) would keep the good 
approximation result in place.
Finally, it is not clear what happens when the support
of the marginal distribution of $V$ is bounded and
that of the marginal distribution of $W$ is unbounded
(or vice versa).

However, our next result shows that despite these
considerations, the conjecture holds.
If the menu size of the approximation mechanisms grows
sublinearly when the upper bound $H$ tends to infinity, 
the revenue loss can become unboundedly large.

\begin{proposition}  \label{P:menu-size}
Let $\scr B_{u}$ be a family of $(V,W)$--buyers that
contains all distributions with supports that are
unbounded from above and bounded away from zero from below.
Then for any fixed positive integer $m$,
	\begin{equation*}
		\GFOR (\cali S_m; \scr B_{u}) \ = \ 0 .
	\end{equation*}
\end{proposition}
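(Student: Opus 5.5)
The plan is to construct, for each fixed $m$ and each large integer $N$, a discrete $(V,W)$--buyer $(V_N,W_N)\in\scr B_u$ such that $\Rev(V_N,W_N)\geq cN$ for a constant $c>0$, while every mechanism in $\cali S_m$ raises at most $m$ (up to a constant). Letting $N\to\infty$ then gives $\GFOR(\cali S_m;\scr B_u)=0$.

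\emph{Construction.} Take budgets $w_k=2^k$ and valuations $v_k=k\,w_k$ for $k=1,\ldots,N$. Put probability $\pi_k$ on type $(v_k,w_k)$, chosen so that the budget is ``equal revenue''. Concretely, take $\Pr(W\geq w_k)=2^{-(k-1)}$, i.e.\ $\pi_k=2^{-k}$ for $k<N$, with the remaining mass placed on $k=N$. Then $w\,\Pr(W\geq w)\leq 2$ for every $w$, and $\pi_k w_k = 1$ for $k<N$. The supports are finite, whereas $\scr B_u$ requires supports that are unbounded above. To fix this, move a tiny mass $\eta$ onto a sequence of far-out types, for instance with $v=w$ on a geometric grid beyond $w_N$ and probabilities $\eta\,2^{-j}$. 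This perturbs both sides of the ratio by a negligible amount, and by \autoref{P:rev-mon}-type reasoning it changes neither bound below. All types satisfy $v,w\geq 2$, so the supports stay bounded away from zero.

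\emph{Lower bound on the optimum.} Offer type $k$ the lottery with trade probability $q_k=k/N$ and ex-post price $w_k$, that is, $(q_k,p_k)=(k/N,\,w_k k/N)$. \eqref{Eq:BF} and \eqref{Eq:IR} hold because $w_k\leq v_k$. For \eqref{Eq:IC}, a deviation of type $k$ to entry $j>k$ is unaffordable ex post, since its price $w_j$ exceeds $w_k$. So only $j<k$ matters, and there we need $(v_k-w_k)k\geq (v_k-w_j)j$. The worst case is $j=k-1$, where the condition reads $v_k\geq k w_k-(k-1)w_{k-1}$, and this holds for $v_k=k w_k$. By the Extension Lemma (\autoref{L:extension}) the mechanism extends to $\R_+^2$. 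Its revenue is $\sum_k \pi_k w_k k/N\geq \sum_{k<N} k/N\approx N/2$.

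\emph{Upper bound for $\cali S_m$.} This is the step that needs care, and I will argue it entry by entry. Fix $\nu\in\cali S_m$ and one of its menu entries, $(q,p)$ with $q>0$ and ex-post price $P=p/q$. By ex-post budget feasibility, only types with $w\geq P$ can select this entry. Its contribution to revenue is therefore at most
\[
qP\,\Pr(W\geq P)\;\leq\; 2q\;\leq\; 2 .
\]
Summing over the at most $m$ entries gives $\Rev(V_N,W_N\mid\cali S_m)\leq 2m+O(\eta)$. The ratio is thus at most about $4m/N$, which tends to $0$. The main obstacle I anticipate is the unbounded-support requirement: I must verify that the tail types added to meet it do not let a small menu recover revenue. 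The same entry-wise bound handles this, since the equal-revenue property $w\Pr(W\geq w)\leq 2$ can be preserved on the tail by the choice of the masses $\eta\,2^{-j}$.
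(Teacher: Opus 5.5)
Your proof is correct and follows the paper's blueprint. It uses a geometric budget law that makes $w\,\Pr(W\ge w)$ flat, and a screening menu whose entries exhaust the budget ex post with allocation probabilities rising linearly in the type. That puts the optimum a factor of order $N$ (the number of budget levels) above the flat level, while any $m$-entry menu collects only order $m$ times that level.

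The execution differs in two respects.
\begin{itemize}
\item \emph{Valuations.} The paper uses a growing base $B=2k$ and a \emph{public} valuation $\overline v\ge B^{k+1}$, which shows the gap already arises when only the budget is private. Your valuations $v_k=k2^k$ vary with the type. Your construction would work unchanged with a public $\overline v=N2^N$, since then $\overline v(k-j)\ge k w_k-jw_j$ for all $j<k$.
\item \emph{Upper bound.} The paper (Lemma~\ref{L:bound-2}) replaces each entry by a deterministic offer at a support budget and sums geometric tails to reach $(m+1)B$. You instead bound each entry directly by $qP\,\Pr(W\ge P)\le 2$. This is shorter, needs no rounding step, and uses only $\sup_w w\,\Pr(W\ge w)$. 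It is essentially the bound the paper introduces only later, in Lemma~\ref{L:bounds-m-class}.
\end{itemize}

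The tail perturbation is not needed under the paper's reading of $\scr B_u$, since the paper's own supports are finite. As written, your justification for it is also slightly off.
\begin{itemize}
\item Proposition~\ref{P:rev-mon} is not the right tool: the added types $(2^{N+j},2^{N+j})$ need not dominate the types whose mass they absorb, e.g.\ $(N2^N,2^N)$ when $2^j<N$.
\item A fixed small $\eta$ does not work. A single posted price at the bottom of the tail collects at least $\eta\,2^{N}$, which swamps $N/2$ as $N\to\infty$. You need $\eta=O(2^{-N})$.
\end{itemize}
The simplest fix is to keep $\pi_k=2^{-k}$ for every $k\ge 1$ and set $v_k=w_k$ for $k>N$. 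Then $w\,\Pr(W\ge w)\le 2$ everywhere, so the $\cali S_m$ bound $2m$ is untouched. Your menu is unchanged, so the original types still contribute $(N+1)/2$, and the tail types, who pick their favourite existing entry, add only nonnegative payments.
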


\begin{proof}
Fix $k \geq 2$ and let $B = 2k$.
Consider a $(V,W)$--buyer with a public valuation
and $k$ private budgets.
More specifically, the support for the $(V,W)$--buyer is
given by
	\begin{equation}   \label{Eq:supp-menu-size}
		\big\{ (v_i,w_i) \,\colon\,
			\text{$w_i \, = \, B^i$, for all $i = 1,\ldots,k$,
			and $v_i = \overline v \, \geq \, B^{k+1}$}  \big\} .
	\end{equation}
Further, assume that the probability mass function $f$
for the random vector $(V,W)$ is
	\begin{equation}  \label{Eq:prob-menu-size}
		f(v_i,w_i) \ \equiv \ f_i \ = \
		\begin{cases}
			(1 \, - \, B^{-1}) \, B^{-(i-1)}
		 		& \, \colon \, \text{$i \, = \, 1,\ldots,k-1,$
			 	} \\[1ex] 	
			 1 \, - \, \sum\limits_{j=1}^{k-1} f_j
				& \, \colon \,  \text{$i = k$.}
		\end{cases}
	\end{equation}

The proof of \autoref{P:menu-size} follows from two
lemmas that we make in regard to the revenues that can
be raised from the $(V,W)$--buyer.

\begin{lemma}  \label{L:bound-1}
A lower bound for the optimal revenue from the $(V,W)$--buyer 
is $B^2/8$; i.e., 
	\begin{equation*}
		\Rev(V,W) \ \geq \ B^2/8  .
	\end{equation*}
\end{lemma}  
	
\begin{proof}
Consider a mechanism $\mu = (x,s)$ with
$\Menu(\mu)  = \{ (q_i,p_i) \colon i = 1,\ldots,k \}$ given by
	\begin{align*}
		q_i
		\ = \ \frac{1}{2} \, + \, \frac{i}{B}	
			\qquad \text{and} \qquad
		p_i \ = \ q_i \, B^i ,
			\qquad \quad \text{for all $i = 1,\ldots,k$.}
	\end{align*}
This mechanism is clearly ex-post budget feasible and
ex-post individually rational.
Upon choosing the lottery $(q_i,p_i)$, the buyer with type
$(v_i,w_i)$ pays nothing if the good is not allocated and
exhausts her budget if the good is allocated ---since the
price she actually pays in this case is $p_i/q_i = B^i = w_i$.
Since $v_i = \overline v \geq w_i$ for all $i = 1,\ldots,k$, the
mechanism $\mu = (x,s)$ indeed satisfies both \eqref{Eq:BF}
and \eqref{Eq:IR}.
Note also that $1/2 < q_1 < q_2 < \ldots < q_k = 1$.

We now show that $\mu = (x,s)$ also satisfies
\eqref{Eq:IC}.
Since the buyer's type $(v_i,w_i)$ cannot afford to buy
menu entry $(q_{h},p_{h})$, for any $h > i$, it suffices
to verify  that $(v_i,w_i)$ does not gain by choosing
any lottery $(q_j,p_j)$, for any $1 \leq j < i$.
We claim that
	\begin{equation}  \label{Eq:rev-gap}
		\overline v \, (q_i \, - \, q_{i-1})
		\ \geq \ p_i \, - \, p_{i-1}
	\end{equation}
holds for all $i = 2,\ldots,k$.
Clearly, this implies that $\overline v\,q_i - p_i \geq
\overline v \,q_{j} - p_{j}$, for all $1 \leq j < i$.
To verify \autoref{Eq:rev-gap}, note
	\begin{align*}
		\overline v \, (q_i \, - \, q_{i-1})
		\ - \ (p_i \, - \, p_{i-1})
		& \ \geq \ \overline v \, (q_i \, - \, q_{i-1}) \ - \
			p_i  \\[.75ex]
		& \ = \ \overline v \, B^{-1}  \, - \, q_i\,B^i
		  \ \geq \ \overline v \, B^{-1} \, - \, B^{i}
		  \ \geq \ 0 ,
	\end{align*}
where the second inequality follows from $0 < q_i \leq 1$
and the last from the fact that $\overline v \geq B^{k+1}$.
Thus, the mechanism $\mu = (x,s)$ satisfies \eqref{Eq:IC}.
We conclude that $\mu \in \cali M$.

The expected revenue generated by $\mu = (x,s)$ is
$R(\mu; V,W) \, = \, \sum_{i=1}^{k-1} p_i f_i \, + \, p_k f_k$.
The first term of this expression can be written as
	\begin{align*}
		\sum_{i=1}^{k-1} p_i \, f_i
		& \ = \ \sum_{i=1}^{k-1} q_i \, B^i \, (1 - B^{-1})
			B^{-(i-1)}  \\[1ex]
		& \ = \ \sum_{i=1}^{k-1} \left( \tfrac{1}{2} \, + \,
			\tfrac{i}{B} \right) \big( B - 1 \big)  \\[1ex]
		& \ = \ \tfrac{1}{2} \left(B \, - \, 1\right)
			\left(k \, - \, 1 \right)
			\left( 1 \, + \, \tfrac{k}{B} \right)
		  \ = \ \tfrac{3}{4} \left(B \, - \, 1\right)
		  	\left( \tfrac{B}{2} \, - \, 1 \right) ,
	\end{align*}
where the last equality follows from the fact that
$k = B/2$.
For the second term in $R(\mu;V,W)$, notice that
	\begin{align*}
		f_k
		\ = \ 1 \, - \, \sum_{j=1}^{k-1} f_j
		\ = \ 1 \, - \, \sum_{j=1}^{k-1} \big(1 -
			B^{-1} \big) \, B^{-(j-1)}
		  \ = \ B^{-(k-1)} ,
	\end{align*}
where the last equality uses the identity for the geometric
sum.
Since $p_k = B^k$, expected revenue coming from the highest
budget buyer is just $B$.

Putting these two observations together obtains
	\begin{align*}
		R(\mu;V,W)
		& \ = \ \tfrac{3}{4} \left(B \, - \, 1\right)
		  	\left( \tfrac{B}{2} \, - \, 1 \right) \ + \ B
		  		\\[1ex]
		& \ \geq \ \tfrac{3}{4} \left( \tfrac{B}{2} \, - \, 1
			\right)^2 \ + \ \tfrac{3}{4}B
		 \ = \ \tfrac{3}{4} \left( \tfrac{B^2}{4} \, + \,
			1  \right)
			\ \geq \ \tfrac{B^2}{8} .
	\end{align*}
Since $\Rev(V,W) \geq R(\mu;V,W)$, we have that $B^2/8$
is a lower bound on the expected revenue coming from
the optimal mechanism.   
\end{proof}

\begin{lemma}   \label{L:bound-2}
For any $1 \leq m < k$, an upper bound for the optimal 
revenue from the $(V,W)$--buyer in the class of mechanisms 
with menu size $m$ is $(m+1)B$; i.e., 
	\begin{equation*}
		\Rev (V,W | \cali S_m) \ \leq \ (m \, + \, 1)\,B .
	\end{equation*}
\end{lemma} 

\begin{proof}
To show this, let $\nu$ be any direct mechanism with menu
size $m$.
For positive integers $1 \leq h_1 < h_2 < \ldots < h_{m}
\leq k$, write $\Menu(\nu) = \{ (q_{h_i},p_{h_i}) \colon i
= 1,\ldots,m\}$ to denote its menu.
Disregarding the incentive constraints, since we are looking
for an upper bound, we consider a mechanism $\nu$ for which
$(q_{h_i},p_{h_i}) = (1,w_{h_i}) = (1,B^{h_i})$, for all
$i = 1,\ldots,m$.

The buyer with type $(v_{h_i},w_{h_i})$ finds the lottery
$(q_{h_i},p_{h_i})$ ex-post affordable and ex-post
individually rational, but so do all types $h_i \leq j
< h_{i+1}$ ---recall that we are not considering incentive
compatibility, and thus the mechanism will direct the
type $(v_{h_{i+1}},w_{h_{i+1}})$ to purchase menu entry
$(q_{h_{i+1}},p_{h_{i+1}})$, and so on.
Therefore, the expected revenue generated by this mechanism
is
	\begin{equation}  \label{Eq:rev-m-size}
		\sum_{h_1 \leq j < h_2} \!\!  B^{h_1} f_j \ + \
		\sum_{h_2 \leq j < h_3} \!\!  B^{h_2} f_j \ + \
			\ldots \ + \
		\sum_{h_m \leq j < k} \!\! B^{h_m} f_j \ + \
			B^{h_m} f_k .
	\end{equation}

Focusing on the first $m$ terms of the above expression,
we write $n_i$ to denote the number of terms in
the $i$-th sum.
That is, there are $n_i$ terms between $h_i$ and $h_{i+1}$,
including the first one but omitting the second one, for
each $i = 1,\ldots,m$.
With this notation, we can write
	\begin{align*}
		\sum_{h_{i} \leq j < h_{i+1}} \!\! B^{h_i}f_j 	
		& \ = \ \big( 1 \, - \, B^{-1} \big) \, B^{h_i}
			\sum_{h_{i} \leq j < h_{i+1}} \!\! B^{-(j-1)}
				\\[1ex]
		& \ = \ \big( 1 \, - \, B^{-1} \big) \, B^{h_i} \,
			B^{-(h_i - 1)} \sum_{j=1}^{n_i} \, B^{-(j-1)}
				\\[1ex]
		& \ = \ \big( 1 \, - \, B^{-1} \big) \, B \
			\frac{1 \, - \, B^{- n_i}}{1 - B^{-1}}
		  \ = \ B \big( 1 \, - \, B^{-n_i} \big)
		  \ \leq \ B .
	\end{align*}
The final term of the expected revenue in
\autoref{Eq:rev-m-size} is
	\begin{align*}
		B^{h_m} f_k
		\ = \ B^{h_m} \, B^{-(k-1)} \ \leq \
			B^k \, B^{-(k-1)} \ = \ B.
	\end{align*}

It follows that the expected revenue generated by the
mechanism $\nu$ with menu size $m$ is at most $(m + 1) B$.
Note that this doesn't depend on the choices for
$h_1,h_2,\ldots,h_m$.
Thus, we conclude that any mechanism with menu size
$m < k$ that satisfies \eqref{Eq:IR} and
\eqref{Eq:BF}, generates a revenue of at most $(m + 1) B$.
Thus, we can take this last number as an upper bound
for the revenue generated by any mechanism with
the same menu-size complexity that in addition satisfies
\eqref{Eq:IC}.
Hence, $\Rev (V,W | \cali S_m) \leq (m + 1) B$.  
\end{proof}

We now put things together.
From Lemma~\ref{L:bound-1}, we know that $\Rev (V,W) \geq B^2/8$.
Lemma~\ref{L:bound-2} shows that $\Rev (V,W | \cali S_m)
\leq (m + 1) B$.
Thus, since $B = 2k$, the following ratio obtains
	\begin{equation}  \label{Eq:ratio-Sm}
		\frac{\Rev (V,W | \cali S_m)}{\Rev (V,W)}
		\ \leq \ \frac{8(m+1)}{B}
		\ = \ \frac{4(m+1)}{k} .
	\end{equation}
Finally, since we are looking at the family of buyer's
distributions with unbounded support for $W$, letting
$k \to \infty$ shows that $\GFOR (\cali S_m; \scr B_{u})
\, = \, 0$, as desired.   
\end{proof}

It would appear that the fact that $\scr B_u$ contains
random vectors $(V,W)$ with unbounded support in both
$V$ and $W$ is crucial for our bad approximation result
in \autoref{P:menu-size}.
Indeed, if either the marginal distribution of $V$ or $W$
has bounded support, then $\Rev(V,W) \leq
\mathbb{E}\big[ \min \{V,W\} \big] < + \infty$.%
		\footnote{For any direct mechanism $\mu = (x,s)
		\in \cali M$, one has $s(v,w) \leq \min\{v,w\}\,x(v,w)
		\leq \min\{v,w\}$ by the ex-post budget feasibility
		and individual rationality constraints. }
But this is not correct.
The key to the above result is the observation that,
while both $\Rev (V,W)$ and $\Rev (V,W | \cali S_m)$
may increase without bound, the first one does so at a
faster rate.
This can happen as well when the lower bounds for
the valuation and budget become arbitrarily close to zero.

In particular, we can show the following.

\begin{proposition}  \label{P:bad-approx-bounded}
Let $\scr B_{0}$ be a family of $(V,W)$--buyers that
contains all distributions whose supports are subsets
of $[0,1] \times [0,1]$.
Then for any fixed positive integer $m$,
	\begin{equation*}
		\GFOR (\cali S_m; \scr B_{0}) \ = \ 0 .
	\end{equation*}
\end{proposition}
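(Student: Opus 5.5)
The idea is to rescale the construction from the proof of \autoref{P:menu-size} into the unit square. The ratio $\Rev(V,W|\cali N)/\Rev(V,W)$ is invariant under a common scaling of valuations and budgets. Indeed, if $(x,s)\in\cali M$ for the $(V,W)$--buyer, then $(x,\lambda s)\in\cali M$ for the $\lambda(V,W)$--buyer, since \eqref{Eq:BF}, \eqref{Eq:IR} and \eqref{Eq:IC} are all homogeneous of degree one in $(v,w,s)$; this is the same observation behind \autoref{Eq:good-approx}. Menu size is also unchanged by this operation, so $\cali S_m$ maps onto $\cali S_m$. Thus both $\Rev(V,W)$ and $\Rev(V,W|\cali S_m)$ scale by exactly $\lambda$.

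Concretely, fix $m$ and $k>m$ with $k\ge 2$, let $B=2k$, and take the $(V,W)$--buyer of \autoref{Eq:supp-menu-size} and \autoref{Eq:prob-menu-size} with $\overline v = B^{k+1}$. Set $\lambda = B^{-(k+1)}$ and consider the $(\tilde V,\tilde W)=\lambda(V,W)$--buyer. Its support is $\{(1, B^{i-k-1}) : i=1,\ldots,k\}\subseteq[0,1]\times[0,1]$, so it belongs to $\scr B_0$. By the scaling argument together with \autoref{L:bound-1} and \autoref{L:bound-2}, we get
\begin{equation*}
\frac{\Rev(\tilde V,\tilde W|\cali S_m)}{\Rev(\tilde V,\tilde W)} \ = \ \frac{\Rev(V,W|\cali S_m)}{\Rev(V,W)} \ \le \ \frac{4(m+1)}{k}.
\end{equation*}
Letting $k\to\infty$ with $m$ fixed yields $\GFOR(\cali S_m;\scr B_0)=0$.

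The main point to verify carefully is the scaling equivalence. Scaling preserves feasibility of the deviations in \eqref{Eq:IC}: $s(\tilde v,\tilde w)\le w\,x(\tilde v,\tilde w)$ holds iff $\lambda s\le \lambda w\, x$. It also preserves the menu, up to multiplying prices by $\lambda$, and maps suprema to suprema. Beyond that, the proof reduces to the two lemmas already established, so I expect no real obstacle. Note that the valuation here is $\overline v\lambda=1$ and the budgets shrink toward $0$, which matches the remark that revenue gaps arise as lower bounds approach zero. An alternative would be to redo the lemmas directly in $[0,1]^2$, but rescaling avoids that.
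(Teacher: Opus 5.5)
Your proposal is correct and is essentially the paper's proof. The paper also moves the construction of \autoref{P:menu-size} into the unit square, normalizing $\overline v$ to $1$ and dividing each budget by $B^B$ where you use the uniform factor $B^{-(k+1)}$, and transfers Lemmas~\ref{L:bound-1} and~\ref{L:bound-2} to get the same bound $4(m+1)/k$; your homogeneity argument, with the scaled mechanism read as $(v,w)\mapsto\bigl(x(v/\lambda,w/\lambda),\,\lambda s(v/\lambda,w/\lambda)\bigr)$, simply makes precise the paper's ``readily modifying'' step.
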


\begin{proof}
Fix $k \geq 2$ and $B = 2k$.
Consider now a $(\hat V,\hat W)$--buyer with support given by
	\begin{equation*}
		\big\{ (\hat v_i,\hat w_i) \,\colon\,
		\text{$\hat v_i \, = \, 1$ and $\hat w_i \, = \,
		1/B^{B-i}$, 	for all $i = 1,\ldots,k$}  \big\}
	\end{equation*}
and probability mass function $f$ given by
\autoref{Eq:prob-menu-size}.
Notice that this $(\hat V, \hat W)$--buyer in the family
$\scr B_0$ has been obtained from the $(V,W)$--buyer
in the proof of \autoref{P:menu-size} by normalizing the
valuation $\overline v$ to $1$ and dividing each budget
$w_i$ by $B^B$ (see \autoref{Eq:supp-menu-size}).

Readily modifying Lemma~\ref{L:bound-1}  
obtains $\Rev (\hat V, \hat W) \, \geq \, B^2/8 B^B$.  
Likewise, a simple modification of Lemma~\ref{L:bound-2} 
obtains the following inequality, which holds for any fixed 
integer $1 \leq m 
< k$: $\Rev (\hat V, \hat W | \cali S_{m}) \, \leq 
\, (m+1)B / B^{B}$.  
Using the fact that $B = 2k$, we can write again
	\begin{equation*}
		\frac{\Rev (\hat V, \hat W | \cali S_m)}
			{\Rev (\hat V, \hat W)} 
		\ \leq \ \frac{8(m+1)}{B} \ = \ \frac{4(m+1)}{k} . 
	\end{equation*}
Finally, letting $k \to \infty$ shows that $\GFOR 
(\cali S_m; \scr B_{0}) \, = \, 0$.   
\end{proof}

A closer look at the proof of \autoref{P:menu-size} 
enables us to strengthen this result. 
Indeed, the revenue gap between the optimal mechanism
and the optimal approximation mechanism can be
exceedingly large as long as the menu size complexity
of the class of approximation mechanisms we consider
grows in a sublinear way.
Recall that a positive real-valued function $\phi$
defined on the set of positive integers is called
\emph{sublinear} if $\lim_{k \to \infty} \phi(k)/k = 0$.
This is often written as $\phi \in o(k)$, where 
$o(k)$ refers to the class of functions that grow
asymptotically slower than any linear function.  
Thus, we can show that any mechanism with asymptotically
sublinear menu size complexity cannot guarantee a positive
fraction of the optimal revenue, if the buyer's family of
random-vectors is either $\scr B_u$ or $\scr B_0$.

\begin{proposition}  \label{P:sublinear}
Let $\{(V_k,W_k) \colon k = 1,2,\ldots \}$ be a sequence of
random vectors, where for each $k$ the distribution of
$(V_k,W_k)$ is given in \autoref{Eq:prob-menu-size}.
Then, for any sublinear function $\phi$,
	\begin{equation*}
		\lim_{k \to \infty} \frac{\Rev (V_k,W_k |
			\cali S_{\phi(k)})}{\Rev(V_k,W_k)}
		\ = \ 0 .
	\end{equation*}
\end{proposition}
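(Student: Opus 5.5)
The plan is to reuse the two lemmas from the proof of \autoref{P:menu-size}, now letting the menu size depend on $k$. I interpret $(V_k,W_k)$ as the buyer with $B = 2k$, support given by \autoref{Eq:supp-menu-size}, and mass function given by \autoref{Eq:prob-menu-size}. The same argument applies verbatim to the rescaled version used in \autoref{P:bad-approx-bounded}.

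First, Lemma~\ref{L:bound-1} depends only on $k$ and not on any menu size. It gives $\Rev(V_k,W_k) \geq B^2/8 = k^2/2$.

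Second, I will check that the bound of Lemma~\ref{L:bound-2} holds for every menu size $m$, not only for a fixed one. Its proof never uses that $m$ is fixed; it only uses that the menu has $m$ entries, and the argument drops \eqref{Eq:IC} and keeps \eqref{Eq:IR} and \eqref{Eq:BF}. This step needs a little care, and is where I expect the only real work:
\begin{itemize}
\item In the lemma, the menu entries are indexed by the support types $h_1<\dots<h_m$ to which they are assigned, and each entry is replaced by $(1,B^{h_i})$. A mechanism with menu size $m$ induces a partition of the $k$ support types into at most $m$ groups of types choosing a common entry.
\item Within each group, \eqref{Eq:IR} and \eqref{Eq:BF} imply that the common payment is at most $\min_j w_j = B^{h}$, where $h$ is the smallest index in the group. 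So the revenue from that group is at most $B^{h}\sum_{j \in \text{group}} f_j$.
\item Summing over $j \geq h$ and using the geometric tail of $f$, each group contributes at most $B(1-B^{-n}) + B^{h}f_k \leq 2B$. This holds even if the group is not contiguous, since enlarging it to all $j \geq h$ only increases the sum.
\end{itemize}
Hence $\Rev(V_k,W_k\mid\cali S_m) \leq 2mB$ for every $m$, a bound of the same order as $(m+1)B$. This is the step I would verify most carefully, because it extends the lemma's bookkeeping to arbitrary, non-contiguous groups. If $m \geq k$, the trivial bound works anyway. For the limit only $m = \phi(k) < k$ matters, which holds for large $k$ since $\phi(k)/k \to 0$.

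Combining the two bounds with $m = \phi(k)$ gives
\begin{equation*}
\frac{\Rev(V_k,W_k\mid \cali S_{\phi(k)})}{\Rev(V_k,W_k)} \ \leq \ \frac{2\phi(k)\,B}{B^2/8} \ = \ \frac{8\phi(k)}{k}.
\end{equation*}
If the lemma's sharper constant is kept, the bound is instead $4(\phi(k)+1)/k$, as in \autoref{Eq:ratio-Sm}. Either way, the right-hand side tends to $0$ by sublinearity of $\phi$. If $\phi$ takes non-integer values, replace it by $\lfloor\phi\rfloor$ or $\lceil \phi\rceil$, which is still sublinear. Since the ratio is nonnegative, the limit is $0$.
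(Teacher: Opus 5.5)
Your proposal is correct and follows the paper's route: the paper's proof just substitutes $m=\phi(k)$ into the ratio bound $4(m+1)/k$ of \autoref{Eq:ratio-Sm}, which comes from Lemmas~\ref{L:bound-1} and~\ref{L:bound-2}. Your re-derivation of the upper bound, which groups the types choosing a common lottery and bounds the payment by the smallest budget in each group, is valid (each group in fact contributes at most $B^{h}\sum_{j\ge h}f_j = B$), and it makes rigorous the step that the paper's proof of Lemma~\ref{L:bound-2} handles informally by assigning contiguous blocks of types to thresholds.
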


\begin{proof}
Immediately by substituting $m = \phi(k)$ in 
\autoref{Eq:ratio-Sm}. 
Clearly, if $\phi \in o(k)$ then $\lim 4( \phi(k) + 1)/k 
= 0$ when $k \to \infty$.  
\end{proof}

\begin{remark}
While the marginal distribution of the budgets in the proofs of
\autoref{P:menu-size} and \autoref{P:bad-approx-bounded}
is important in our construction,
this distribution shares similar properties with the
standard geometric distribution.
In particular, it gives diminishing weight to types with
higher budgets.
\end{remark}

Our results in this subsection can be roughly summarized 
as follows: for certain families of buyers, given any 
$m \in \mathbb{N}$ there exists a buyer's distribution 
with \emph{finite support} for which only 
an arbitrarily small fraction of the optimal revenue can be 
extracted by a mechanism of menu size $m$. 
One can of course ask what happens with revenue gaps generated 
by simple mechanisms when one considers distributions with 
\emph{infinite support}.  
As it turns out, one can construct distributions with 
infinite support for which no finite menu size mechanism 
can guarantee any positive fraction of the optimal revenue. 
Consequently, in these cases any non-trivial approximation 
mechanism requires infinitely many lotteries.

\begin{proposition}  \label{P:infinite-supp} 
There exist (i) a correlated random vector $(V',W')$ with 
infinite support, and (ii) an independent random vector 
$(V'',W'')$ with infinite support, such that no feasible 
mechanism with finite menu size can guarantee any 
positive fraction of the optimal revenue. 
\end{proposition}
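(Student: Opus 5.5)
The plan is to reduce the statement to exhibiting a single buyer whose optimal revenue is infinite. Any mechanism $\nu$ with finite menu $\{(q_1,p_1),\ldots,(q_m,p_m)\}$ satisfies $s(v,w)\le \max_j p_j$ for every type, so $R(\nu;V,W)\le \max_j p_j<+\infty$. Consequently, if $\Rev(V,W)=+\infty$, then $\Rev(V,W\,|\,\cali S_m)/\Rev(V,W)=0$ for every $m$, so no finite menu guarantees any positive fraction. It therefore suffices to build, in each case, a mechanism in $\cali M$ (necessarily with an infinite menu) that has infinite expected revenue.

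To get mechanisms in $\cali M$ cheaply, I would work with menus and the buyer behavior of \autoref{SS:buyer}. Fix an infinite menu of lotteries with unit prices $w_i=B^i$ and allocation probabilities $q_i=1-2^{-i}$, that is, entries $(q_i,q_iw_i)$ for $i\ge 1$. A type $(v,w)$ shortlists the entries with $w_j\le\min(v,w)$; this is a finite set, so a seller-favorable maximizer exists. The induced direct mechanism satisfies \eqref{Eq:BF} and \eqref{Eq:IR} by construction of the shortlist. It satisfies \eqref{Eq:IC} because any affordable misreport leads to an entry that was already in the shortlist. The key estimate is the following. Take $j<i$ and a type with $v\ge 2^iw_i$ for which entry $i$ is affordable. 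Then
\begin{equation*}
(v-w_i)q_i-(v-w_j)q_j \ \ge\ v(q_i-q_j)-w_iq_i \ \ge\ v\,2^{-i}-w_i \ \ge\ 0 ,
\end{equation*}
so such a type weakly prefers entry $i$ to every cheaper entry, and seller-favorable tie-breaking makes it pay at least $q_iw_i\ge w_i/2$.

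For part (i), the correlated case, I would put the support on the points $(v_i,w_i)$ with $w_i=B^i$ and $v_i=2^iw_i$, with probabilities $f_i=c/i^2$. By the estimate above, type $i$ buys entry $i$: it cannot afford any entry $h>i$, and it weakly prefers entry $i$ to every $j<i$. The expected revenue is then $\sum_i f_iq_iw_i\ge \tfrac c2\sum_i B^i/i^2=+\infty$. (Choosing $f_i\propto 1/(i^2w_i)$ would also work and makes the divergence only harmonic.)

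For part (ii), the independent case, I would take $W''$ on $\{B^i\}$ with $\Pr(W''=B^i)=c/i^2$, and independently $V''$ with tail $\Pr(V''\ge x)=\min\{1,1/\log_2 x\}$. Using the same menu, every type with $w=B^i$ and $v\ge 2^iB^i$ pays at least $B^i/2$. Independence gives
\begin{equation*}
R\ \ge\ \sum_i \frac{c}{i^2}\cdot\frac{1}{i\log_2(2B)}\cdot\frac{B^i}{2}\ =\ +\infty .
\end{equation*}
Both vectors have infinite support, and part (ii) is genuinely independent.

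The main obstacle is confirming that the menu-induced mechanism really lies in $\cali M$, including on types outside the support. This is where I would invoke \autoref{L:extension}, or argue directly that the menu definition already defines the mechanism on all of $\R_+^2$. I would also check that seller-favorable tie-breaking never selects an unaffordable entry, which holds because ties are broken only within the shortlist. The rest is routine summation; the only care needed is that the exponential growth of $w_i$ dominates the polynomial and logarithmic decay of the probabilities.
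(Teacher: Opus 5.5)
Your reduction in the first paragraph has a genuine gap. The bound $R(\nu;V,W)\le\max_j p_j$ holds for each finite-menu mechanism $\nu$ separately, but it is not uniform over $\cali{S}_m$. So it does not give $\Rev(V,W\,|\,\cali{S}_m)<\infty$, and your conclusion $\Rev(V,W\,|\,\cali{S}_m)/\Rev(V,W)=0$ does not follow. In fact it fails for both of your examples. The one-entry menu $\{(1,B^n)\}$ lies in $\cali{S}_1$. In (i) it earns $B^n\,\Prob\{W'\ge B^n\}=B^n\sum_{i\ge n}c/i^2\ge cB^n/n$. In (ii) it earns $B^n\,\Prob\{V''\ge B^n\}\,\Prob\{W''\ge B^n\}\ge cB^n/(n^2\log_2 B)$. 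Both bounds are unbounded in $n$, so $\Rev(\cdot\,|\,\cali{S}_1)=+\infty$: a single posted price already extracts arbitrarily large revenue, and the ratio is $\infty/\infty$, not $0$. If the statement is read purely mechanism by mechanism, your first two sentences suffice. But then the proposition says nothing beyond $\Rev=+\infty$, since it would equally rule out infinite-menu mechanisms with finite revenue. The content of the statement, as the paper's proof makes explicit, is that the optimal revenue in each $\cali{S}_m$ is finite while $\Rev=+\infty$.

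What is missing is a bound that holds uniformly over all menus of size $m$. Let $Z=\min(V,W)$. A type that selects an entry with expected payment $p_j$ satisfies $p_j\le\min(v,w)$ by \eqref{Eq:IR} and \eqref{Eq:BF}. Hence every $\nu\in\cali{S}_m$ has $R(\nu;V,W)\le\sum_j p_j\,\Prob\{Z\ge p_j\}\le m\sup_{p\ge 0}p\,\Prob\{Z\ge p\}$. This is the paper's key lemma. It forces you to choose distributions with $\Prob\{Z\ge p\}=O(1/p)$, and your polynomially decaying weights violate this.

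Your menu and your incentive estimate are fine; they essentially coincide with the paper's mechanism (take $B=2$, $v_i=2^{2i}=2^iw_i$). Only the weights need to change.
\begin{itemize}
\item For (i), put $f_i=2^{-i}$ on $(2^{2i},2^i)$. Your menu then earns $\sum_i(1-2^{-i})=+\infty$, while every posted price earns at most $2$.
\item For (ii), take $\Prob\{W''=2^i\}=2^{-i}$ and, independently, $\Prob\{V''=2^{2i}\}=1/(i(i+1))$. Then $\Prob\{V''\ge 2^{2t}\}=1/t$, your menu earns at least $\sum_t(1-2^{-t})/t=+\infty$, and $p\,\Prob\{Z\ge p\}\le 4$ for all $p$.
\end{itemize}
Incidentally, your parenthetical choice $f_i\propto 1/(i^2w_i)$ makes the revenue $\sum_i f_iq_iw_i\propto\sum_i 1/i^2$ finite. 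The choices $f_i\propto 1/(iw_i)$ or $f_i\propto 1/w_i$ keep the revenue infinite while keeping $p\,\Prob\{Z\ge p\}$ bounded.
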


\begin{proof}
We divide the proof in three lemmas. 
The first one provides an upper bound on the optimal 
revenue that can be captured from any $(V,W)$--buyer using 
selling mechanisms with finite menu size.

\begin{lemma}   \label{L:bounds-m-class}
Let $(V,W)$ be a (correlated or independent) random vector,  
and let $Z = \min\{V,W\}$.  
Then, for any positive integer $m$, one has 
	\begin{equation}  \label{Eq:bounds-m-class}
		\Rev(V,W | \cali S_m) 
		\ \leq \ m \, \sup_{p \geq 0} \, p \, \Prob 
			\{Z \geq p\} .
	\end{equation}
\end{lemma}

\begin{proof}
Consider a mechanism $\mu \in \cali S_m$ with 
$\Menu(\mu) = \{(q_1,p_1),\ldots,(q_m,p_m)\}$. 
Since $\cali S_m \subseteq \cali M$, for any 
type realization $(v,w)$ we have $p \leq v \, q \leq v$
and $p \leq w\,q \leq w$.  
With this simple observation, we can bound the revenue 
generated by $\mu$: 
	\begin{align*}
		R(\mu;V,W) 
		\ \leq \ \sum_{i=1}^m \, p_i \, \Prob \{Z \geq p_i\} 
		\ \leq \ m \, \sup_{p \geq 0} \, p \, \Prob 
			\{Z \geq p\}  .
	\end{align*}
Since this holds for any $\mu$ in $\cali S_m$, the result 
in \autoref{Eq:bounds-m-class} is obtained. 
\end{proof}

Using \autoref{L:bounds-m-class}, to prove 
\autoref{P:infinite-supp} it suffices to find 
a correlated random vector $(V',W')$ and an independent 
random vector $(V'',W'')$ for which the optimal revenue
in $\cali M$ is infinite and, at the same time, 
	\begin{align}
		\sup_{p \geq 0} \, p \, \Prob \{ \min(V',W') \geq p\} 
		& \ < \ \infty ,
		\quad \text{and}   \label{Eq:6-fix-1} \\
		\sup_{p \geq 0} \, p \, \Prob \{ \min(V'',W'') \geq p\} 
		& \ < \ \infty .  \label{Eq:6-fix-2}
	\end{align}

\begin{lemma}
Let $(V',W')$ be a correlated random vector with support  
	\begin{equation*}
		\big\{ (v_i,w_i) \,\colon\, 
			\text{$v_i \, = \, 2^{2i}$ 
			\ and \ $w_i \, = \, 2^i$, 
			\ for all $i = 1,2,\ldots$}  \big\} 
	\end{equation*} 
and a probability mass function $f$ given by 
	\begin{equation*}
		f(v_i,w_i) \ \equiv \ f_i \ = \ 1/2^i , 
			\qquad \text{for all $i = 1,2,\ldots$.}
	\end{equation*} 
The optimal revenue from the $(V',W')$--buyer is 
$\Rev(V',W') = \infty$ while \autoref{Eq:6-fix-1} holds. 
\end{lemma}

\begin{proof}
Consider a mechanism $\mu = (x,s)$ defined by 
	\begin{equation*}
		q_i \, \equiv \, x(v_i,w_i) \, = \, 1 \, - \, 2^{-i} 
		\qquad \text{and} \qquad 
		p_i \, \equiv \, s(v_i,w_i) \, = \, (1 \, - \, 2^{-i}) 
			\, 2^i . 
	\end{equation*}
Note that  the mechanism $\mu$ has an infinite number of 
menu options $i = 1,2,\ldots$, each of which extracts the 
entire budget of type $(v_i,w_i)$ in case of winning the 
item, since by construction $p_i/q_i = w_i$.  
This prevents `upward' deviations to a lottery $k > i$.  

For any $j < i$, the difference in expected utility for type 
$i$ when choosing options $i$ and $j$ is given by 
	\begin{align*}
		(v_i\,q_i \, - \, p_i) \ - \ (v_i\,q_j \, - \, p_j) 
		& \ = \ v_i (q_i \, - \, q_j) \ - \ 
			(p_i \, - \, p_j) \\
		& \ = \ v_i(2^{-j} \, - \, 2^{-i}) \ - \ 
			(2^i \, - \, 2^j) \\
		& \ > \ 2^{2i}/2^i \ - \ 2^i \ = \ 0 . 
	\end{align*}
This shows that `downward' deviations are not profitable. 
It is immediate to verify that $v_iq_i - p_i \geq 0$ 
for all $i = 1,2,\ldots$, and thus we conclude that the  
mechanism $\mu \in \cali M$. 
Now, the total expected revenue generated by $\cali M$ is 
	\begin{equation*}
		R(\mu;V',W') 
		\ = \ \sum_{i=1}^{\infty} \, p_i \, f_i 
		\ = \ \sum_{i=1}^{\infty} \, (1 \, - \, 
			2^{-i}) \ = \ \infty .
	\end{equation*}
Of course this shows $\Rev(V',W') = \infty$, as desired. 

Clearly, $Z = \min(V',W') = W'$ with probability mass 
$f_i = 2^{-i}$.
For a posted price $p = 2^n$, one has that 
	\begin{equation*}
		\Prob \{ Z \, \geq \, 2^n \} 
		\ = \ \sum_{i=n}^\infty 2^{-i} 
		\ = \ 2^{-n} \Bigl( \sum_{j=0}^\infty 2^{-j} \Bigr) 
		\ = \ 2^{-n} \ 2  \ = \ 2^{-(n-1)} .
	\end{equation*}
It follows that the posted price $p = 2^n$ generates 
expected revenue equal to $p \, \Prob \{ Z \geq p\} 
\, = \, 2^n \, 2^{-(n-1)} \, = \, 2$. 
Since the expected revenue raised by any price that is a 
power of $2$ is exactly equal to $2$, we conclude that 
$\sup_{p \geq 0} p \, \Prob \{ Z \geq p \} = 2 < \infty$; 
i.e., \autoref{Eq:6-fix-1} holds. 
\end{proof}

\begin{lemma}
Let $(V'',W'')$ be an independent random vector with support 
	\begin{equation*}
		\big\{ (v_i,w_i) \,\colon\, 
			\text{$v_i \, = \, 2^{2i}$ 
			\ and \ $w_i \, = \, 2^i$, 
			\ for all $i = 1,2,\ldots$}  \big\} ,
	\end{equation*} 
where $V''$ realizes valuation $2^{2i}$ with probability 
$1/i(i+1)$ and $W''$ realizes budget $2^i$ with probability 
$2^{-i}$, for all $i = 1,2,\ldots$.  
The optimal revenue from the $(V'',W'')$--buyer is 
$\Rev(V'',W'') = \infty$ while \autoref{Eq:6-fix-2} 
holds. 
\end{lemma}

\begin{proof}
Consider a mechanism with infinite menu size 
whose option $k = 1,2,\ldots$ provides the good with 
probability $q_k = 1 - 2^{-k}$ for an expected price 
equal to $p_k = (1-2^{-k}) 2^k$. 
This mechanism is designed so that, for every type realization
$(v_i,w_j)$, the optimal choice is menu entry 
$k(v_i,w_j) = \min \{i,j\}$.  
Being so, this mechanism satisfies \eqref{Eq:IR}, 
\eqref{Eq:BF} and \eqref{Eq:IC}, and thus it 
belongs to $\cali M$.   

Indeed, the expected utility of type $(v_i,w_j)$ from 
choosing entry $k(v_i,w_j)$ is 
	\begin{align*}
		v_i q_{k(v_i,w_j)} \ - \ p_{k(v_i,w_j)} 
		\ = \ 2^{2i} \big(1 - 2^{- k(v_i,w_j)}\big) 
			\ - \ \big(1 - 2^{- k(v_i,w_j)} \big) 
			2^{k(v_i,w_j)} 
		\ \geq \ 0 ,
	\end{align*}
since $i \geq k(v_i,w_j)$.  
This shows that \eqref{Eq:IR} is satisfied.  
In addition, note that $p_{k(v_i,w_j)}/ q_{k(v_i,w_j)} 
\, = \, 2^{k(v_i,w_j)} \, \leq \, 2^j = w_j$, since 
$j \geq k(v_i,w_j)$.
This shows that \eqref{Eq:BF} is also satisfied.

To show \eqref{Eq:IC}, we first show that the menu entry 
$k(v_i,w_j)$ maximizes expected utility for type realization 
$(v_i,w_j)$ among all options with a smaller index, and 
also among all options with a higher index.  
To see that there is no profitable `downward deviation', 
consider any $k < k(v_i,w_j)$.  
Type $(v_i,w_j)$'s valuation difference from selecting 
menu $k(v_i,w_j)$ instead of menu $k$ is  
	\begin{align*}
		v_i \, \big( q_{k(v_i,w_j)} \, - \, q_k \big) 
		\ = \ v_i \big( 1 - 2^{-k(v_i,w_j)} \, - \, 
			1 \, + \, 2^{-k} \big)  
		\ = \ v_i \big( 2^{-k} \, - \, 
			2^{-k(v_i,w_j)} \big)  \ \geq \ 
			\frac{2^{2i}}{2^{k(v_i,w_j)}}  .
	\end{align*}
Likewise, the price difference between these two menu 
options is 
	\begin{align*}
		p_{k(v_i,w_j)} \, - \, p_k 
		\ = \ \big(1 - 2^{-k(v_i,w_j)}\big)2^{k(v_i,w_j)}
			\, - \, (1 - 2^{-k})2^{k}  
		\ = \ 2^{k(v_i,w_j)} \, - \, 2^k 
			\ < \ 2^{k(v_i,w_j)} . 
	\end{align*}
Putting these two observations together obtains 
	\begin{align*}
		v_i \big( q_{k(v_i,w_j)} \, - \, q_k \big) 
			\ \geq \ \frac{2^{2i}}{2^{k(v_i,w_j)}} 
			\ > \ 2^{k(v_i,w_j)} 
			\ > \ p_{k(v_i,w_j)} \, - \, p_k . 
	\end{align*}
	
To see that there is no profitable `upward deviation', 
we consider two sub-cases.  

When $k(v_i,w_j) = j$, it suffices to show that the 
$j$ menu entry is the only budget feasible menu 
entry for type realization $(v_i,w_j)$ among all 
menu options with an index higher than $j$.  
Observe that $p_{k(v_i,w_j)}/q_{k(v_i,w_j)} = 
2^{k(v_i,w_j)} = 2^j = w_j$, thus the menu entry $j$ 
exhausts type $(v_i,w_j)$'s budget in case of winning 
the object.  
Any higher indexed menu is unaffordable.

When $k(v_i,w_j) = i$, it suffices to show that menu entry 
$i$ maximizes type $(v_i,w_j)$'s expected utility among all 
options with an index higher than $i$.  
Consider any $h > i$; we claim that $v_i q_i - p_i 
> v_i q_h - p_h$.  
The left- and right-hand sides of this inequality are, 
in that order,  
	\begin{align*}
		\big(2^{2i} \, - \,2^i\big)\,
		\big(1 \, - \, 2^{-i}\big) 
		& \ = \ 2^{2i} \, - \, 2^{i+1} \, + \, 1 , 
		\qquad \text{and}  \\ 
		\big(2^{2i} \, - \,2^h\big)\,
		\big(1 \, - \, 2^{-h}\big) 
		& \ = \ 2^{2i} \, - \, 2^{-h} \big(2^{2i} 
			+ 2^{2h} \big) \, + \, 1 . 
	\end{align*}
Thus, it is enough to show that $2^{-h} ( 2^{2i} + 
2^{2h}) > 2^{i+1}$. 
But this follows since 
	\begin{equation*}
		2^{2i} \, + \, 2^{2h} 
		\ > \ 1 \geq \frac{2^{i+1}}{2^h} , 
	\end{equation*}
recalling that $h > i \geq 1$.
We conclude that the above mechanism satisfies 
\eqref{Eq:IC}.

To show that $\Rev(V'',W'') = \infty$, let $E_t 
\, = \, \{ (v_j,w_t) \mid v_j \geq 2^{2t} \}$. 
Notice that all types in the event $E_t$ choose the 
$t = k(v_j,w_t)$ menu entry.  
Since valuation and budget are independent random 
variables, we have that 
$\Prob \{E_t\} \, = \, \Prob \{V'' \geq 2^{2t}\} \, 
\Prob \{ W'' = 2^t\} \, = \, 2^{-t} (1/t)$.  
Now	
	\begin{align*}
		\Rev(V'',W'') 
		\ \geq \ \sum_{t = 1}^\infty p_t \, 
			\Prob \{E_t\}    
		\ = \ \sum_{t=1}^\infty \Big( 1 \, - \, 
			\frac{1}{2^t}\Big) \, 2^t \, 2^{-t} 
			\, \frac{1}{t}
		\ \geq \ \frac{1}{2} \sum_{t=1}^\infty \, \frac{1}{t} 
			\ = \ \infty .
	\end{align*}

Finally, to show that \autoref{Eq:6-fix-2} is satisfied, 
let $Z = \min(V'',W'')$.  
We want to establish an upper bound for $p \, \Prob \{Z 
\geq p\}$, for all $p \geq 0$. 

For $0 \leq p < 4$, this is immediate: $p \, \Prob \{ Z 
\geq p \} \leq 4 < \infty$. 
Now consider any $p \geq 4$.  
Choose an integer $n \geq 2$ such that $2^{n-1} < p \leq 2^n$. 
Then $p \, \Prob \{Z \geq p\} \leq 2^n \, \Prob \{ Z \geq 
2^{n-1}\}$.  
Since $V''$ and $V''$ are independent, we have that 
$\Prob\{Z \geq 2^{n-1}\} = \Prob\{V'' \geq 2^{n-1}\} 
\, \Prob\{W'' \geq 2^{n-1}\}$. 
Now, 
	\begin{equation*}
		\Prob\{W'' \geq 2^{n-1}\}
		\ = \ \sum_{i = n-1}^\infty 2^{-i} 
		\ = \ 2^{-(n-1)} \sum_{i=0}^\infty 2^{-i} 
		\ = \ 2^{-(n-2)} ,
	\end{equation*}
and further 
	\begin{equation*}
		\Prob\{V'' \geq 2^{n-1}\} 
		\ \leq \ \sum_{i = \lceil\frac{n-1}{2}\rceil}^\infty
			\frac{1}{i (i + 1)} 
		\ = \ \frac{1}{\lceil\frac{n-1}{2}\rceil} ,
	\end{equation*}
since the realization $V'' \geq 2^{n-1}$ is $2^{2i} \geq 
2^{n-1}$, that is $i \geq \lceil\frac{n-1}{2}\rceil$. 

We conclude that $\sup_p p\,\Prob\{Z \geq p\} \, \leq \, 
2^n \, 2^{-(n-2)} \lceil\frac{n-1}{2}\rceil^{-1} 
\, \leq \, 4 < \infty$, as desired. 
Thus, \autoref{Eq:6-fix-2} is satisfied. 
\end{proof}

For both the correlated random vector $(V',W')$ and 
the independent random vector $(V'',W'')$, we have seen 
that the optimal revenue is infinite.  
But, for any positive integer $m$, the optimal revenue 
in the subclass of mechanisms $\cali S_m$ is finite. 
Thus, there is no mechanism with finite menu size that 
guarantees a positive fraction of the optimal revenue. 
This concludes the proof of the proposition. 
\end{proof}


\subsection[Cash Bond]{Cash Bond Relaxations}
	\label{SS:cash-bond}

In this part of the paper we consider a relaxation of the
seller's revenue maximization problem.
Following \citet{Che:2000ux}, we study a setting where the 
seller can prevent the buyer from over-reporting her
budget at no cost.%
		\footnote{See Section 4 of their paper.}
In practice, ruling out budget over-reporting can be
accomplished by additional institutional arrangements.
For example, the seller can require the buyer to post a
cash bond prior to choosing a lottery, or by some
information disclosure rules (e.g., financial reports,
bank statements, etc).
In this setting, the incentive compatibility constraints
are required to hold only for under-reporting the private
budget.
As a result, higher budgeted types might be offered
better deals (discounts).

To formalize this possibility, let $\cali N_{cb}$ be the
class of ex-post budget feasible and ex-post individually
rational mechanisms that require a $(V,W)$--buyer
to post a cash bond to prevent her from over-reporting
her budget.
More explicitly, a mechanism $\nu = (x,s)$ belongs to
$\cali N_{cb}$ if it satisfies \eqref{Eq:BF}, \eqref{Eq:IR}
and the following constraint, which replaces
\eqref{Eq:IC}:
for all $(v,w)$ and all $(\tilde v, \tilde w)$ with
$\tilde w \leq w$, it must be that
	\begin{equation}
		v \, x(v,w) \, - \, s(v,w)
		 \ \geq \ v \, x(\tilde v,\tilde w) \, - \,
			s (\tilde v,\tilde w)   .
			\tag{CB} \label{Eq:CB}
	\end{equation}
In words, because the seller exacts a cash bond from
the buyer, the mechanism $\nu \in \cali N_{cb}$ only needs
to prevent a higher-budget type from mimicking a
lower-budget type.
It is immediate to realize that $\cali M \subseteq
\cali N_{cb}$.
Thus, for any family $\scr B$ of buyers, we have that
	\begin{equation*}
		\MVR (\cali N_{cb}; \scr B)
		\ \geq \ 1 ,
	\end{equation*}
since for any $(V,W)$-buyer in $\scr B$, it must be
$\Rev (V,W) \leq \Rev (V,W | \cali N_{cb})$.

From a computational point of view, the importance of
focusing on this relaxed problem comes from the fact that
it admits a natural linear programming representation
in discrete type spaces, which has been studied before
(see \citet{Bhattacharya:2010vy} for example).
In particular, \citet{Devanur:2017ik} characterize the
optimal mechanism in this relaxed setting and show that
it has a menu with exponentially-many non-trivial options
(in the number of possible budgets).

In principle, then, the optimal revenue obtained in the
superclass of mechanisms $\cali N_{cb}$ can be used to
approximate the optimal revenue in $\cali M$.
Of course, this provides a good approximation as long
as $\MVR (\cali N_{cb}; \scr B)$ has a reasonable bound.
\citet{Che:2000ux} show that if the random variables $V$
and $W$ are positively affiliated, then $\Rev(V,W)
= \Rev (V,W |\cali N_{cb})$.%
	\footnote{This result uses an additional \emph{declining
	marginal revenue} assumption --- see Assumptions
	1 and 2 in \citet{Che:2000ux}.}
Thus, when one focuses on the family of $(V,W)$--buyers
where the valuation and the budget are positively affiliated random
variables, there is no revenue gap between optimal
mechanisms in $\cali M$ and optimal approximation mechanisms
in $\cali N_{cb}$.

Unfortunately, positive affiliation is crucial for this good
approximation result to hold.  
\citet{Che:2000ux} provide an example showing that a higher
revenue can be extracted (compared to the optimal auction)
when the seller uses cash-bonds to prevent over-reporting
of private budgets.  
By adjusting this example, we show that the optimal revenue
in the class $\cali N_{cb}$ can be unboundedly large
compared to the optimal revenue in $\cali M$.
This shows that optimal approximation
mechanisms in the class $\cali N_{cb}$ cannot serve as
proxies to the revenue maximization problem in the class
$\cali M$.

\begin{proposition}  \label{P:negative-aff}
Let $\scr B_{na}$ be a family of $(V,W)$--buyers that
contains all distributions where $V$ and $W$ are negatively
affiliated.
Then one has
	\begin{equation*}
		\MVR (\cali N_{cb}; \scr B_{na})
		\ = \ + \, \infty .
	\end{equation*}
\end{proposition}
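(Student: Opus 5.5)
The plan is to build, for each $k$, a buyer in $\scr B_{na}$ in which budgets never bind in $\cali M$, so that $\Rev(V,W)$ collapses to the one-dimensional Myerson revenue of $V$. In $\cali N_{cb}$, by contrast, the budget will act as a verifiable label that lets the seller extract the whole surplus $\mathbb{E}[V]$. Sending $k\to\infty$ should then drive the ratio to infinity.

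\emph{Construction.} Fix $k\ge 2$ and $B\ge 2$, and take $k$ types $(v_i,w_i)$, $i=1,\ldots,k$, with
\begin{equation*}
v_i \ = \ B^{k+1-i}, \qquad w_i \ = \ B^{k+i}.
\end{equation*}
Valuations are strictly decreasing in $i$ while budgets are strictly increasing. The support is therefore a strictly anti-monotone chain: $W$ is a decreasing function of $V$. I will verify negative affiliation directly, by checking that the joint pmf satisfies the reverse lattice inequality $f(\mathbf{x}\vee\mathbf{y})\,f(\mathbf{x}\wedge\mathbf{y})\le f(\mathbf{x})f(\mathbf{y})$. This inequality holds here because for any two distinct support points the join and the meet lie off the support, so the left-hand side is $0$.

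For the probabilities I choose an equal-revenue (geometric) profile in $V$, so that $\Prob\{V\ge B^{j}\}$ is of order $B^{-j}$. Concretely, I set $\Prob\{V\ge v_i\}=B/v_i$ for the top value, adjusting the lowest atom so that the probabilities sum to one.

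\emph{Bounding $\Rev(V,W)$ from above.} Every budget $w_i\ge B^{k+1}$ exceeds every valuation $v_j\le B^{k}$. Hence in any mechanism in $\cali M$, every lottery that is (IR) for some type has per-unit price at most $\max_j v_j<\min_i w_i$, so it is affordable to every type. The (IC) constraints thus bind exactly as in the budget-free single-item problem. I will argue that the buyer then behaves as a pure $V$-buyer, so that $\Rev(V,W)\le \Rev(V)=\sup_p p\,\Prob\{V\ge p\}$ by Myerson, with no extra tricks needed. The equal-revenue choice of weights makes this supremum $O(B)$, independent of $k$.

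\emph{Bounding $\Rev(V,W\,|\,\cali N_{cb})$ from below.} Consider the mechanism that offers type $i$ the lottery $(1,v_i)$. Conditions (BF) and (IR) are immediate, since $v_i\le w_i$ and each type pays exactly its value. For (CB), type $i$ may only mimic types $j<i$, that is, types with lower budgets. Those types are offered $(1,v_j)$ with $v_j>v_i$, which gives type $i$ negative utility, so no such deviation pays. Upward deviations, which would be the profitable ones, are forbidden by the cash bond. The revenue is therefore $\mathbb{E}[V]=\sum_i v_i f_i$, and under the geometric weights each of the $k$ levels contributes order $B(1-B^{-1})$, so the total is of order $kB$.

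Dividing the two bounds gives $\Rev(V,W\,|\,\cali N_{cb})/\Rev(V,W)\gtrsim k$, which tends to $\infty$ as $k\to\infty$. Hence $\MVR(\cali N_{cb};\scr B_{na})=+\infty$. I expect the main obstacles to be checking affiliation carefully in the direction the paper intends, and pinning down constants in the geometric weights so that both bounds are clean. The upper bound itself should be routine, since budgets are slack.
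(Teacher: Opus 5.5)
Your proposal is correct and follows essentially the same route as the paper. Both arguments use an anti-monotone (antichain) support in which every budget exceeds every valuation. Because of this, budgets are slack in $\cali M$ and $\Rev(V,W)$ reduces to the posted-price revenue of an equal-revenue valuation distribution, while in $\cali N_{cb}$ the budget acts as a verifiable label and the menu $(1,v_i)$ extracts $\mathbb{E}[V]$. The only differences are quantitative: the paper takes $(v_i,w_i)=(1/i,\,1+i/k)$ with uniform weights, giving ratio $H_k$, whereas your geometric parametrization gives a ratio of order $k$.
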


\begin{proof}
We consider a setting where the $(V,W)$--buyer has
a finite support given by
	\begin{equation*}
		\big\{ \big( v_i,w_i \big) \, = \,
			\big( 1/i, \, 1 \, + \, i/k \big)
			\,\colon\, i \, = \, 1,\ldots,k \big\} .
	\end{equation*}
Further, assume that the probability mass function $f$
for the $(V,W)$--buyer is
	\begin{equation*}
		f(v_i,w_i) \ \equiv \ f_i \ = \  1/k ,
			\qquad \text{for all $i = 1,\ldots,k$,}
	\end{equation*}
and zero otherwise.

Notice that for any buyer's type $(v_i,w_i)$, the valuation
is never higher than the budget, so the financial constraints
are never binding under truthful reporting.
There are many mechanisms in $\cali M$ that achieve the
optimal revenue in this case.
For instance, an optimal way to sell the item is by
posting a price of $1/i$ (for any $1 \leq i \leq k$).
Given said price, all types $(v_j,w_j), j = 1,\ldots,i$,
purchase the good.
Thus, $\Rev(V,W) = 1/k$.

Consider now a mechanism $\mu$ in the relaxed class
$\cali N_{cb}$ that charges $v_i$ for a guaranteed win of 
the object if the buyer posts a cash
bond equal to $w_i$, for all $i = 1,\ldots,k$.
Clearly, any buyer's type $(v_i,w_i)$ has no incentives
to under-report their budgets, since the budget is
increasing while the valuation is decreasing.
Therefore, when the seller can prevent over-reporting
with the use of cash bonds, the expected revenue is
	\begin{equation*}
		R (\mu ; V,W) \ = \
		\frac{1}{k} \Big( 1 \, + \, \frac{1}{2} \, + \,
		\ldots \, + \, \frac{1}{k} \Big)
		\ = \ \frac{H_k}{k} ,
	\end{equation*}
where $H_k$ is the $k$-th harmonic number.
Immediately, $\Rev (V,W | \cali N_{cb}) \geq H_k / k$.

Putting these two arguments together obtains the ratio
	\begin{equation*}
		\frac{\Rev (V,W | \cali N_{cb})}{\Rev (V,W)}
		\ \geq \ H_k , \qquad
			\text{for all $k = 1,2,\ldots$.}
	\end{equation*}

But as $k \to \infty$, we have that $H_k \to \infty$.
Thus, $\MVR (\cali N_{cb}; \scr B_{na})
\, = \, + \infty$.  
\end{proof}

What do we make of the result stated in
\autoref{P:negative-aff}?
In other words, should one take it as a positive or a negative
approximation result?
From a computational perspective, it is certainly not
positive.
Using the class of cash-bond mechanisms to approximate
the optimal revenue in $\cali M$, or equivalently ignoring
the incentive constraints that consider over-reporting,
can yield an exceedingly large overestimation of
the optimal revenue generated by a fully incentive
compatible mechanism.
At the same time, the fact that the maximal value
of relaxation in $\cali N_{cb}$ is potentially large may
provide a justification for the seller to look for
institutional fixes; i.e., lobby financial or other
regulatory authorities to make the use of cash bonds
legal.
However, caution should be made since if the seller fails 
to enforce no-over-reporting of budgets, then as 
demonstrated by the distribution in the proof of 
\autoref{P:negative-aff}, a slightly lower budgeted type 
might be better off taking out a small bridging loan 
(e.g., from friends and family), to end up  paying much 
less.

A related problem with the superclass of approximation 
mechanisms $\cali N_{cb}$ is that the important revenue 
monotonicity property is no longer generally valid.  
In fact, as we show next, non-monotonicities can be acute: 
it may be possible that some buyer dominates another 
in the first-order stochastic sense, yet expected revenue from 
the former is much higher than from the latter.   
We have the following result.

\begin{proposition}  \label{P:rev-non-mon-CB}
Let $\cali N_{cb}$ be the superclass of mechanism satisfying 
\eqref{Eq:BF}, \eqref{Eq:IR} and \eqref{Eq:CB}.  
The revenue non-monotonicity gap can be unboundedly 
large in $\cali N_{cb}$.  	
\end{proposition}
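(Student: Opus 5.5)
The plan is to reuse the negatively affiliated construction from the proof of \autoref{P:negative-aff}, and to dominate it by a buyer whose budget is raised to a common, publicly irrelevant level. For fixed $k$, let $(V,W)$ be the buyer with support $\{(v_i,w_i) = (1/i,\,1+i/k) \colon i=1,\ldots,k\}$ and uniform mass $1/k$. The proof of \autoref{P:negative-aff} already gives
\begin{equation*}
\Rev(V,W|\cali N_{cb}) \ \geq \ H_k/k .
\end{equation*}
It comes from the mechanism that sells the object for sure at price $v_i$ to the type posting cash bond $w_i$. Under \eqref{Eq:CB} only budget under-reporting must be deterred. A type with budget $w_i$ that claims a lower budget $w_j$, $j<i$, faces a price $v_j>v_i$ and so gets negative utility.

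For the dominating buyer I take $(V',W') = (V,2)$: same valuation, budget raised to the constant $2$. Since $w_i \leq 2$ for all $i$, we have $(V',W') \geq (V,W)$ pointwise on the common probability space, hence first-order stochastic dominance. The key step is an upper bound on $\Rev(V',W'|\cali N_{cb})$. Every type in the support has budget $2$, and every report $(\tilde v,\tilde w)$ with $\tilde w \leq 2$ is an allowed deviation under \eqref{Eq:CB}. In particular, each support type $(v_i,2)$ can mimic any other support type $(v_j,2)$. Restricted to the support, the \eqref{Eq:CB} and \eqref{Eq:IR} constraints are therefore exactly the standard single-dimensional incentive and participation constraints for the valuation $V$. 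Dropping \eqref{Eq:BF} can only raise revenue, and it is slack anyway since $v_i \leq 1 < 2$. So $\Rev(V',W'|\cali N_{cb})$ is at most Myerson's optimal revenue for $V$. By \citet{Myerson1981} that revenue equals
\begin{equation*}
\max_i \, v_i\,\Prob\{V\geq v_i\} \ = \ \max_i \, (1/i)(i/k) \ = \ 1/k .
\end{equation*}
Here I will use that the single-dimensional optimum is a posted price at a support point, which I take as given from Myerson.

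Combining the two bounds gives
\begin{equation*}
\frac{\Rev(V,W|\cali N_{cb})}{\Rev(V',W'|\cali N_{cb})} \ \geq \ H_k \ \to \ \infty \qquad (k\to\infty).
\end{equation*}
This exhibits revenue non-monotonicity with ratio as large as desired.

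The main obstacle is matching \autoref{D:RDM} literally. That definition asks for one fixed $(V,W)$ whose supremum over dominating buyers is infinite, whereas the construction above produces a sequence indexed by $k$. Simple fixes run into difficulties:
\begin{itemize}
\item For any dominating buyer, $\Rev(V',W'|\cali N_{cb}) \geq \Rev(V',W') \geq \Rev(V,W)$, by $\cali M\subseteq\cali N_{cb}$ and \autoref{P:rev-mon}. So for a fixed $(V,W)$ the ratio is capped by $\Rev(V,W|\cali N_{cb})/\Rev(V,W)$.
\item An infinite supremum therefore needs this cap itself to be infinite, for example $\Rev(V,W|\cali N_{cb}) = \infty$ with finite $\Rev(V,W)$.
\item Even then, a dominating buyer with finite cash-bond revenue is still needed. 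Raising budgets to a constant requires $W$ to be bounded, which in turn forces $\Rev(V,W|\cali N_{cb}) \leq \mathbb{E}[W] < \infty$.
\end{itemize}
I would first present the $k$-indexed argument, reading the definition as allowing the pair to range over $\scr B$. If the stricter reading is required, I would next try to glue infinitely many scaled copies of the $k$-blocks into one infinite-support buyer. The copies would sit on disjoint budget ranges and have summable masses, with dominators that lift one block at a time; checking that cross-block deviations do not break the cash-bond mechanism is where I expect the real work to lie.
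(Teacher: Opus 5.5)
Your argument is the paper's proof: the same negatively affiliated buyer from \autoref{P:negative-aff}, the same dominator $(V,2)$ with uniform mass on $\{(1/i,2)\}$, and the same ratio $H_k$. Your reduction of the equal-budget \eqref{Eq:CB} constraints to Myerson's single-dimensional problem just makes rigorous the paper's informal claim that $\Rev(V',W'|\cali N_{cb}) = 1/k$. Your concern about \autoref{D:RDM} is well founded, and the paper does not resolve it either, since its proof also produces only a $k$-indexed sequence of pairs; your cap, together with $\Rev(V,W|\cali N_{cb}) \leq \mathbb{E}[V] = H_k/k$ and $\Rev(V,W)=1/k$, shows that for each fixed $k$ the supremum in that definition is only $H_k$.
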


\begin{proof}
Consider the same $(V,W)$--buyer from 
\autoref{P:negative-aff}, for which we have shown that 
$\Rev (V,W | \cali N_{cb}) \geq H_k / k$, for all $k = 
1,2,\ldots$. 
Now consider the $(V',W')$--buyer with support 
	\begin{equation*}
		\big\{ \big( v_i',w_i' \big) \, = \,
			\big( 1/i, \, 2 \big)
			\,\colon\, i \, = \, 1,\ldots,k \big\} 		
	\end{equation*}
and probability mass function 
	\begin{equation*}
		f'(v_i',w_i') \ \equiv \ f'_i \ = \  1/k ,
			\qquad \text{for all $i = 1,\ldots,k$,}
	\end{equation*}
and zero otherwise.
Clearly $(V',W')$ first-order stochastically dominates 
$(V,W)$, since $v_i' = v_i$, $w_i' \geq w_i$ and 
$f_i' = f_i$.  

The optimal mechanism in the class $\cali N_{cb}$ for the 
$(V',W')$--buyer is a take-it-or-leave it offer for the 
entire good at a price of $1/k$.  
In fact, this fixed-price mechanism is optimal regardless 
of the seller's ability to prevent over-reporting of 
budgets.
Since the budget is identical across types and never binds, 
the seller cannot differentiate among buyers along the 
budget dimension. 
Therefore $\Rev(V',W' | \cali N_{cb}) = 1/k$.  

Putting these two observations together, we conclude 
	\begin{equation*}
		\frac{\Rev(V,W | \cali N_{cb})}
		{\Rev(V',W' | \cali N_{cb})} \ \geq \ H_k . 
	\end{equation*}
But one has $H_k \to \infty$ as $k \to \infty$, so that the 
revenue non-monotonicity gap can be unboundedly large.  
\end{proof}


\subsection[Strong IC]{Restricted Mechanisms under
Strong Incentive Constraints}   \label{SS:sic}

\citet{Daskalakis:2018zg} studied selling mechanisms when
the incentive compatibility constraints hold even if the
deviation produces a non-affordable outcome.
In other words, \citet{Daskalakis:2018zg} increase the number
of incentive constraints by considering deviations to
lotteries that are not ex-post affordable for a buyer given
her budget.
Formally, they consider the following incentive constraint
on the mechanism $\mu = (x,s)$: for all $(v,w) \in \R_+^2$
and all $(\tilde v, \tilde w) \in \R_+^2$, it must be that
	\begin{equation}
		v \, x(v,w) \, - \, s(v,w)
		 \ \geq \ v \, x(\tilde v,\tilde w) \, - \,
		s (\tilde v,\tilde w) .  \tag{SIC} \label{Eq:SIC}
	\end{equation}
Comparing it with \eqref{Eq:IC}, this constraint omits the
requirement that the deviation of the buyer's type
$(v,w)$ to purchasing a lottery for the type
$(\tilde v, \tilde w)$ must be within $(v,w)$'s budget.

Let $\cali N_{sic} \subseteq \cali M$ denote the class of
mechanisms that satisfy \eqref{Eq:BF}, \eqref{Eq:IR} and
\eqref{Eq:SIC}.
Of course, the imposition of these `extra' incentive
constraints means that, for any $(V,W)$--buyer, the optimal
revenue in the class $\cali N_{sic}$ will be weakly lower
than the optimal revenue in the class $\cali M$.
The advantage of imposing these constraints is that the
resulting setting has a natural linear programming
formulation.
So, in principle, it can be used to approximate
the revenue generated by the optimal mechanisms in
$\cali M$.

Our two next results on revenue gaps in the presence of
budget constraints show that, sometimes, the
expected revenue of the optimal approximation mechanism
in the subclass $\cali N_{sic}$ can be arbitrarily
small compared to the expected revenue of the optimal
mechanism.
Thus, despite its computational
advantages, the subclass $\cali N_{sic}$ of mechanisms
with the strong incentive constraint cannot serve as a
good proxy for the class $\cali M$ of incentive compatible,
ex-post budget feasible and ex-post individually
rational mechanisms.
Below we argue that this holds at least in two important cases:
when one considers the family of $(V,W)$--buyers with
unbounded support (\autoref{P:menu-SIC}), or
the family of $(V,W)$--buyers with support
in the unit square (\autoref{P:menu-SIC-square}).

\begin{proposition}  \label{P:menu-SIC}
Let $\scr B_{u}$ be the family of $(V,W)$--buyers that
contains all distributions with unbounded support.
Then one has that
	\begin{equation*}
		\GFOR (\cali N_{sic}; \scr B_{u}) \ = \ 0 .
	\end{equation*}
\end{proposition}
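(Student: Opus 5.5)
The plan is to reuse the geometric-budget construction from the proof of \autoref{P:menu-size}, with one change: let the valuation decrease very slightly in the budget. Under \eqref{Eq:SIC} the budget then cannot be used to screen at all, and payments are pinned down by the lowest-budget type. Fix $k \geq 2$, let $B = 2k$, and consider a $(V,W)$--buyer with support
\begin{equation*}
	\big\{ (v_i,w_i) \,\colon\, w_i = B^i, \ v_i = 2B^{k+1} - i, \ i = 1,\ldots,k \big\}
\end{equation*}
and probability mass function $f$ given by \autoref{Eq:prob-menu-size}. Thus $v_1 > v_2 > \cdots > v_k \geq B^{k+1}$: higher-budget types have slightly lower valuations. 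As $k \to \infty$ these buyers belong to $\scr B_u$.

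\emph{Step 1 (lower bound in $\cali M$).} I would repeat the construction of \autoref{L:bound-1} verbatim, with menu entries $q_i = \tfrac12 + i/B$ and $p_i = q_i B^i$.
\begin{itemize}
\item \eqref{Eq:BF} and \eqref{Eq:IR} hold because $p_i/q_i = B^i \leq v_i$.
\item Upward deviations are unaffordable under \eqref{Eq:IC}.
\item For a downward deviation $j < i$, the only change from \autoref{L:bound-1} is that type $i$ now evaluates the difference with $v_i$ in place of $\overline v$. Since $q_i - q_j \geq 1/B$, we get $v_i(q_i - q_j) \geq v_i/B \geq B^k \geq p_i \geq p_i - p_j$.
\end{itemize}
The revenue computation does not involve valuations, so it is unchanged, and $\Rev(V,W) \geq B^2/8$.

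\emph{Step 2 (upper bound in $\cali N_{sic}$).} Take any $\nu = (x,s) \in \cali N_{sic}$ and write $(q_i,p_i)$ for the lottery of type $(v_i,w_i)$. Because \eqref{Eq:SIC} imposes the incentive constraints between every pair of types regardless of affordability, the argument of Step 1 in the proof of \autoref{P:rev-mon} applies directly to any two types with $v_i > v_j$. That argument uses only the two incentive constraints and no tie-breaking rule, and it gives $p_i \geq p_j$. Here $v_1$ is the largest valuation, so $p_i \leq p_1$ for all $i$. By \eqref{Eq:BF}, $p_1 \leq w_1 q_1 \leq B$, hence $R(\nu;V,W) \leq \sum_i f_i p_i \leq B$. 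Therefore $\Rev(V,W | \cali N_{sic}) \leq B$.

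\emph{Step 3 (conclusion).} Combining Steps 1 and 2,
\begin{equation*}
	\frac{\Rev(V,W|\cali N_{sic})}{\Rev(V,W)} \ \leq \ \frac{8}{B} \ = \ \frac{4}{k}.
\end{equation*}
Letting $k \to \infty$ yields $\GFOR(\cali N_{sic};\scr B_u) = 0$.

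\emph{Main obstacle.} The delicate point is the choice of the instance. With a constant valuation $\overline v$, as in \autoref{P:menu-size}, \eqref{Eq:SIC} does not bite. Every type then has the same utility $u$, and one can place entries along the line $p = \overline v q - u$ that extract roughly $B^i$ from each type, so the strong constraints cost almost nothing. The slight decrease of $v_i$ in $i$ is what forces the payments under \eqref{Eq:SIC} to be monotone in the wrong direction relative to budgets. The step to double-check is that this perturbation leaves the downward-deviation inequality of \autoref{L:bound-1} intact. The margin $v_i/B \geq B^k$ above suggests that it does.
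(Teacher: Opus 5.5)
Your proof is correct. Its engine is the same as the paper's. Because \eqref{Eq:SIC} imposes both incentive constraints between every pair of types regardless of affordability, payments are weakly increasing in valuation, with no tie-breaking needed. The budget of the highest-valuation type, which you make the poorest, then caps every payment via \eqref{Eq:BF}.

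The paper runs this on a much smaller instance: two types, $(B+\epsilon,B)$ with probability $1-1/B$ and $(B+2\epsilon,1)$ with probability $1/B$. The same monotonicity gives $s_1 \le s_2 \le x_2 \le 1$, so $\Rev(V,W|\cali N_{sic}) = 1$. A posted price of $B$ already yields $\Rev(V,W) \ge B-1$, so the ratio is at most $1/(B-1)$.

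Your $k$-type geometric instance and the lottery menu of Lemma~\ref{L:bound-1} are more machinery than needed. The gap comes entirely from the payment cap that \eqref{Eq:SIC} imposes, not from screening many budget levels. That said, your construction does work: the perturbation $v_i = 2B^{k+1}-i$ keeps $v_i \ge B^{k+1}$, so the downward-deviation check and the $B^2/8$ bound go through unchanged, and the ratio $4/k$ follows.

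Your closing observation is also right, and it explains the paper's choice of $B+\epsilon$ versus $B+2\epsilon$. With a common valuation, \eqref{Eq:SIC} merely forces equal utilities and costs almost nothing. The valuations must decrease, however slightly, in the budget for the strong constraints to bite.
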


\begin{proof}
Consider a setting where the buyer has only two possible types.
Thus, let $(V,W)$ be a random vector with
support $\{(v_1,w_1),(v_2,w_2)\}$ given by
	\begin{equation*}
		(v_1,w_1) \ = \ (B + \epsilon, B)
			\quad \text{and} \quad
		(v_2,w_2) \ = \ (B + 2\epsilon, 1) . 		
	\end{equation*}
Here we assume that $B > 1$ and $\epsilon > 0$.
Note that both types are financially constrained, i.e.,
in each case the valuation is higher than the budget.
The first type, which occurs with probability $f_1 \equiv
f(v_1,w_1) = 1 - 1/B$, is however less financially
constrained than the second type (which occurs with
complementary probability $f_2 = 1 - f_1)$.

We now solve the linear program that obtains
$\Rev (V,W | \cali N_{sic})$.
As usual, we start with a direct mechanism $(x_i,s_i)$,
where $x_i \in [0,1]$ is the probability of allocating
the item to type $i = 1,2$ and $s_i$ is the expected
transfer collected by the seller.
The linear program for the seller can be expressed by
	\begin{equation*}
		\max \, \big(1 \, - \, \tfrac{1}{B} \big)\, s_1
			 \ + \ \tfrac{1}{B} \, s_2 \qquad
	\end{equation*}
subject to the following constraints
	\begin{align}
		0 & \ \leq \ x_1 \ \leq \ 1 ,
			&
		0 & \ \leq \ x_2 \ \leq \ 1 	,		
			\label{Eq:strong-x} \\[.75ex]
		0 & \ \leq \ s_1  ,
			&
		0 & \ \leq \ s_2  ,		
			\label{Eq:strong-t} \\[.75ex]
		s_1 & \ \leq \ (B \, + \, \epsilon) \, x_1 ,
			&
		s_2 & \ \leq \ (B \, + \, 2\epsilon) \, x_2 ,
			\quad  \label{Eq:strong-IR} 	\\[.75ex]			
		s_1 & \ \leq \ B \, x_1 ,
			&
		s_2 & \ \leq \ x_2 ,   \label{Eq:strong-BF}
	\end{align}
and finally \vspace{-2ex}
	\begin{align}
		(B \, + \, \epsilon) x_1 \, - \, s_1
			& \ \geq \ (B \, + \, \epsilon)x_2 \, - \, s_2 ,
			\label{Eq:strong-IC-d}  \\[.75ex]
		(B \, + \, 2 \epsilon) x_2 \, - \, s_2
			& \ \geq \ (B \, + \, 2 \epsilon) x_1 \, - \, s_1 .
			\label{Eq:strong-IC-u}
	\end{align}

The first four constraints, expressed in Eq.~\eqref{Eq:strong-x}
and Eq.~\eqref{Eq:strong-t}, are just the standard restrictions
to the allocation probabilities and expected transfers
(recall the seller does not subsidize the buyer's consumption).
The constraints in Eq.~\eqref{Eq:strong-IR} are the
participation constraints, and the ones in
Eq.~\eqref{Eq:strong-BF} are the ex-post budget feasibility
constraints.
It's more interesting to consider the last two
incentive compatibility constraints.
\autoref{Eq:strong-IC-d} prevents type $(v_1,w_1)$
from selecting the lottery designed for type $(v_2,w_2)$;
likewise \autoref{Eq:strong-IC-u} prevents $(v_2,w_2)$
from choosing the lottery designed for $(v_1,w_1)$.
In both constraints, no budget affordability is taken
into account.

We first observe that the mechanism $\{(x_1,s_1) = (1,B),
(x_2,s_2) = (0,0)\}$, i.e., selling the good for a posted
price of $B$, is not a feasible solution to the above
linear program.
To see this, notice that this mechanism violates
the incentive constraint in \autoref{Eq:strong-IC-u}.
In reality, of course, type $(v_2,w_2)$ cannot afford
to pay the price of $B$ for the object, since her budget
is just $1 < B$, but the incentive constraint does not take
affordability into account.

Next, we claim that the optimal solution to the above
linear program is $(x_1^*,s_1^*) = (x_2^*,s_2^*) =
(1,1)$.
Indeed, combining \autoref{Eq:strong-IC-d} and
\autoref{Eq:strong-IC-u} we observe that any feasible
solution will have $x_2 \geq x_1$.
Now, from the incentive constraint in \autoref{Eq:strong-IC-d}
we obtain that
	\begin{align*}
		s_1
		\ \leq \ (B \, + \, \epsilon) (x_1 \, - \,
			x_2) \, + \, s_2
		\ \leq \ s_2 \ \leq \ x_2 \ \leq \ 1 ,
	\end{align*}
where the second inequality follows from our previous observation,
and the two remaining inequalities come from
\autoref{Eq:strong-BF} and \autoref{Eq:strong-x}.
In particular, in any feasible solution $s_1 \leq 1$ and
$s_2 \leq 1$.

Finally, note that setting $s_1^* = s_2^* = 1$ and
$x_1^* = x_2^* = 1$ constitutes a feasible solution in the
class of mechanisms $\cali N_{sic}$.
Thus, this is the optimal solution for the linear program.
Clearly, the optimal mechanism can be implemented by a posted
price equal to one, which is affordable to both types.
We conclude that $\Rev (V,W | \cali N_{sic}) = 1$.
However, the optimal mechanism under the `correct'
incentive constraints can achieve an expected revenue
of at least $B (1 - 1/B) = B - 1$, for instance by using a
fixed price of $B$.
This excludes type $(v_2,w_2)$ who cannot afford to
buy the good at this price.
Thus, $\Rev(V,W) \geq B - 1$.

We conclude that, for the $(V,W)$--buyer,
	\begin{equation*}
		\frac{\Rev (V,W | \cali N_{sic})}{\Rev (V,W)}
		\ \leq \ \frac{1}{B \, - \, 1} .
	\end{equation*}
Since the family of buyers $\scr B_u$ contains random
vectors with unbounded support, taking $B \to \infty$ shows
that $\GFOR (\cali N_{sic}; \scr B_u) = 0$, as claimed.  
\end{proof}

\smallskip

\begin{proposition}  \label{P:menu-SIC-square}
Let $\scr B_{0}$ be a family of $(V,W)$--buyers that
contains all distributions whose supports are subsets
of $[0,1] \times [0,1]$.
One has that
	\begin{equation*}
		\GFOR (\cali N_{sic}; \scr B_{0}) \ = \ 0 .
	\end{equation*}
\end{proposition}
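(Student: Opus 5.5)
The plan is to reduce the statement to the two-type construction in the proof of \autoref{P:menu-SIC}, using a scaling argument. All constraints defining $\cali M$ and $\cali N_{sic}$ are homogeneous of degree one in $(v,w,s)$, so shrinking everything into the unit square leaves the revenue ratio unchanged. This mirrors how \autoref{P:bad-approx-bounded} was obtained from \autoref{P:menu-size}.

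\emph{Step 1 (scale invariance).} Fix $c > 0$. For a mechanism $\mu = (x,s)$ define $\mu^c = (x^c,s^c)$ by
\begin{equation*}
x^c(v,w) \ = \ x(v/c,w/c), \qquad s^c(v,w) \ = \ c\, s(v/c,w/c).
\end{equation*}
Each of \eqref{Eq:BF}, \eqref{Eq:IR}, \eqref{Eq:IC} and \eqref{Eq:SIC} at $(v,w)$ for $\mu^c$ is exactly $c$ times the same constraint at $(v/c,w/c)$ for $\mu$. For \eqref{Eq:IC}, the affordability condition $s(\tilde v,\tilde w) \leq w\, x(\tilde v,\tilde w)$ also scales by $c$, so the set of admissible deviations is preserved. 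Hence $\mu \mapsto \mu^c$ is a bijection of $\cali M$ onto itself and of $\cali N_{sic}$ onto itself, with inverse $\mu \mapsto \mu^{1/c}$. Moreover $R(\mu^c; cV, cW) = c\, R(\mu; V,W)$. It follows that
\begin{equation*}
\Rev(cV,cW) \ = \ c\,\Rev(V,W) \qquad \text{and} \qquad \Rev(cV,cW \mid \cali N_{sic}) \ = \ c\,\Rev(V,W\mid \cali N_{sic}),
\end{equation*}
so the ratio $\Rev(\cdot\mid\cali N_{sic})/\Rev(\cdot)$ is invariant under the scaling. This is the same observation used for \autoref{Eq:good-approx} in \autoref{P:good-approx}.

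\emph{Step 2 (the construction).} Take the two-type buyer from the proof of \autoref{P:menu-SIC}:
\begin{equation*}
(v_1,w_1) \ = \ (B+\epsilon, B), \qquad (v_2,w_2) \ = \ (B+2\epsilon,1),
\end{equation*}
with $f_1 = 1-1/B$. Set $c = 1/(B+2\epsilon)$ and define $(\hat V,\hat W) = c(V,W)$. Its support is
\begin{equation*}
\bigl\{\, \bigl(\tfrac{B+\epsilon}{B+2\epsilon}, \tfrac{B}{B+2\epsilon}\bigr), \ \bigl(1, \tfrac{1}{B+2\epsilon}\bigr) \,\bigr\} \ \subseteq \ [0,1]\times[0,1],
\end{equation*}
so $(\hat V,\hat W) \in \scr B_0$.

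\emph{Step 3 (conclusion).} By Step 1 and the bounds established in the proof of \autoref{P:menu-SIC},
\begin{equation*}
\frac{\Rev(\hat V,\hat W\mid\cali N_{sic})}{\Rev(\hat V,\hat W)} \ = \ \frac{\Rev(V,W\mid\cali N_{sic})}{\Rev(V,W)} \ \leq \ \frac{1}{B-1}.
\end{equation*}
Directly, $\Rev(\hat V,\hat W\mid\cali N_{sic}) = c$, while $\Rev(\hat V,\hat W) \geq c(B-1)$ via the posted price $cB$. Letting $B\to\infty$ gives $\GFOR(\cali N_{sic};\scr B_0) = 0$.

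The only point needing care is Step 1. One must check that the scaled mechanism is defined on all of $\R_+^2$, which holds by construction. One must also check that the affordability-restricted deviation set in \eqref{Eq:IC} is scale-equivariant, which it is, since $\tilde w$ and the payment scale together.
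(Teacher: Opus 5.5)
Your proposal is correct and follows essentially the same route as the paper. The paper also takes the two-type buyer from the proof of \autoref{P:menu-SIC}, scales it by $1/(B+2\epsilon)$ into the unit square, and compares the $\cali N_{sic}$ optimum $1/(B+2\epsilon)$ with the posted-price revenue $(B-1)/(B+2\epsilon)$. The only difference is that you justify the transfer of the bound with an explicit homogeneity argument, which works whether mechanisms are defined on $\R_+^2$ or, as in the paper's linear program, only on the support. The paper instead simply asserts that a ``simple adaptation'' of that linear program gives the scaled solution.
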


\begin{proof}
Fix $\epsilon > 0$ and let $B > 1$.
Consider now a random vector $(\hat V, \hat W)$ in $\scr B_0$
with support
$\{(\hat v_1,\hat w_1), (\hat v_2,\hat w_2)\}$ given by
	\begin{equation*}
		(\hat v_1, \hat w_1)
		\ = \ \left( \tfrac{B \, + \, \epsilon}
			{B \, + \, 2\epsilon}, \tfrac{B}{B \, + \,
			2 \epsilon} \right)
		\qquad \text{and} \qquad
		(\hat v_2, \hat w_2) \ = \ \left( 1,
			\tfrac{1}{B \, + \, 	2 \epsilon} \right) .
	\end{equation*}
Assume further that $f(\hat v_1,\hat w_1) = 1 - 1/B$ and
$f(\hat v_2,\hat w_2) = 1/B$.
Note that this $(\hat V, \hat W)$--buyer in the family
$\scr B_0$ has been obtained from the $(V,W)$--buyer
in the proof of \autoref{P:menu-SIC} by scaling it down
by the factor $1/(B + 2\epsilon)$.

A simple adaptation of the proof of \autoref{P:menu-SIC}
shows that the optimal approximation mechanism in
$\cali N_{sic}$ here specifies allocation probabilities
$x_1^* = x_2^* = 1$ and expected payments $s_1^* = s_2^*
= 1/(B + 2\epsilon)$.
Therefore, $\Rev (\hat V, \hat W | \cali N_{sic})
= 1/(B + 2\epsilon)$.
Similarly, charging a fixed price of $B/(B+2\epsilon)$
for the object raises $(B-1)/(B + 2\epsilon)$ for the seller,
and thus $\Rev(\hat V, \hat W) \geq (B-1)/(B+2\epsilon)$.
It follows that, as before, one has
	\begin{equation*}
		\frac{\Rev (\hat V,\hat W | \cali N_{sic})}
			{\Rev (\hat V,\hat W)}
		\ \leq \ \frac{1}{B \, - \, 1} .
	\end{equation*}
Letting $B \to \infty$ obtains the result.  
\end{proof}

These bad approximation results in terms of the revenue gap
between computationally feasible mechanisms in $\cali N_{sic}$
and optimal mechanisms in $\cali M$ mimic the bad
approximation results for simple mechanisms previously
obtained.
Furthermore, the subclass $\cali N_{sic}$ shares 
the revenue non-monotonicity drawback that we observed in 
the superclass $\cali N_{cb}$. 
This is an unexpected connection between a class of mechanisms 
that imposes \emph{more} constraints than the standard 
\eqref{Eq:IC}, \eqref{Eq:IR} and \eqref{Eq:BF} (namely 
$\cali N_{sic}$) and a class of mechanisms that imposes 
less constraints (namely $\cali N_{cb}$).

\begin{proposition}  \label{P:rev-non-mon-SIC}
Let $\cali N_{sic}$ be the subclass of mechanism satisfying 
\eqref{Eq:BF}, \eqref{Eq:IR} and \eqref{Eq:SIC}.  
The revenue non-monotonicity gap can be unboundedly 
large in $\cali N_{sic}$.  	
\end{proposition}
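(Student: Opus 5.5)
The plan is to reuse the two-type construction from the proof of \autoref{P:menu-SIC} as the \emph{dominating} buyer, and to build a dominated buyer by lowering only the valuation of the severely budget-constrained type. Fix $B>1$ and $\epsilon>0$. Let $(V',W')$ be exactly the buyer from \autoref{P:menu-SIC}: type $(B+\epsilon,B)$ with probability $1-1/B$ and type $(B+2\epsilon,1)$ with probability $1/B$. There we showed, by solving the linear program, that $\Rev(V',W'\mid\cali N_{sic})=1$.

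Next I define $(V,W)$ on the same probability space. It agrees with $(V',W')$ on the first state, $(B+\epsilon,B)$ with probability $1-1/B$. On the second state it takes the value $(1,1)$ instead of $(B+2\epsilon,1)$, still with probability $1/B$. State by state we have $(V,W)\le(V',W')$, since $1\le B+2\epsilon$ and the budgets coincide. Hence $(V',W')$ first-order stochastically dominates $(V,W)$, as required by \autoref{D:RDM}.

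The key step is a lower bound on $\Rev(V,W\mid\cali N_{sic})$. I will exhibit the posted price $B$ on the support of $(V,W)$, namely $(x_1,s_1)=(1,B)$ for type $(B+\epsilon,B)$ and $(x_2,s_2)=(0,0)$ for type $(1,1)$, and verify the linear-program constraints of \autoref{P:menu-SIC} with the new valuations.
\begin{itemize}
\item \eqref{Eq:BF} holds, since $B\le B\cdot 1$ and $0\le 0$.
\item \eqref{Eq:IR} holds, since $B+\epsilon-B\ge 0$.
\item \eqref{Eq:SIC} for type 1 holds, since $\epsilon\ge 0$.
\item \eqref{Eq:SIC} for type 2 now reads $0\ge 1\cdot 1-B$, which is true.
\end{itemize}
The last item is exactly the constraint that failed for $(V',W')$ in \autoref{P:menu-SIC}. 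Lowering the constrained type's valuation removes the fictitious unaffordable deviation that killed the posted price $B$. This mechanism yields revenue $B(1-1/B)=B-1$, so $\Rev(V,W\mid\cali N_{sic})\ge B-1$.

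Combining the two bounds gives
\begin{equation*}
\frac{\Rev(V,W\mid\cali N_{sic})}{\Rev(V',W'\mid\cali N_{sic})}\ \ge\ B-1 .
\end{equation*}
Letting $B\to\infty$ shows that the supremum in \autoref{D:RDM} is infinite.

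A subtle point is that \autoref{D:RDM} fixes one $(V,W)$, while here $(V,W)$ depends on $B$. To obtain a single dominated buyer, I would first try to amalgamate countably many such blocks into one distribution. If that becomes delicate, I would fall back on reading the definition, as the paper does in \autoref{P:rev-non-mon-CB}, as requiring the ratio to be unbounded along a sequence of pairs.

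I expect the main obstacle to be domain bookkeeping rather than computation. Membership in $\cali N_{sic}$ is checked on the supports, and, as remarked after \autoref{P:rev-mon}, the Extension Lemma need not preserve \eqref{Eq:SIC}. I must therefore be explicit that for each buyer the class is evaluated as in the linear program of \autoref{P:menu-SIC}, that is, on its own support. Note that if the posted price $B$ were extended to all of $\R^2_+$, type $(B+2\epsilon,1)$ would violate \eqref{Eq:SIC}. This is precisely the mechanism behind the non-monotonicity, and it is why the argument cannot simply invoke \autoref{P:rev-mon}.
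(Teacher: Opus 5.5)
Your proof is correct and uses the same mechanism as the paper. In a two-type buyer, the posted price at the high budget is \eqref{Eq:SIC}-feasible until the constrained type's valuation is raised above it; the resulting fictitious unaffordable deviation then caps $\cali N_{sic}$-revenue at $1$. Your parameter-dependent dominated buyer is also exactly how the paper itself reads the definition. The only difference is cosmetic: the paper builds a fresh example, $(H,1),(H,H)$ versus $(H+\epsilon,1),(H,H)$, where an $\epsilon$-increase in one valuation triggers the collapse and the cap is certified by explicit LP multipliers, whereas you reuse the buyer and LP bound of \autoref{P:menu-SIC} as the dominating buyer and lower the constrained type's valuation to $1$.
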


\begin{proof}
Fix some $H > 1$. 
Consider a $(V,W)$--buyer whose realized type is $(H,1)$ with 
probability $1/H$ and $(H,H)$ with the complementary 
probability $1 - 1/H$. 
We shall argue that $\Rev(V,W | \cali N_{sic}) \geq H-1$.  
To see this, we formulate the seller's problem 
with the linear program below and show that there exists 
a feasible solution with an objective value of $H-1$. 
The linear program is 
	\begin{equation*}
		\max \, \tfrac{1}{H} \, s_1
			 \ + \ \big(1 - \tfrac{1}{H}\big) \, s_2 \qquad
	\end{equation*}
subject to the following constraints
	\begin{align*}
		0 & \ \leq \ x_1 \ \leq \ 1 ,
			&
		0 & \ \leq \ x_2 \ \leq \ 1 	,		
			 \\[.75ex]
		0 & \ \leq \ s_1  ,
			&
		0 & \ \leq \ s_2  ,		
			 \\[.75ex]
		s_1 & \ \leq \ H \, x_1 ,
			&
		s_2 & \ \leq \ H \, x_2 ,
			\quad   	\\[.75ex]			
		s_1 & \ \leq \ 1 \, x_1 ,
			&
		s_2 & \ \leq \ H \, x_2 ,   
	\end{align*}
and finally the strong IC constraints, which can be written as 
	\begin{equation*}
		H \, x_1 \, - \, s_1 \ = \ 
		H \, x_2 \, - \, s_2 . 
	\end{equation*}
Clearly, $(x_1,s_1) = (0,0)$ and $(x_2,s_2) = (1,H)$ 
is a feasible solution at which the objective function 
attains a value of $H - 1$. 

Now let $\epsilon > 0$ and consider a $(V',W')$--buyer 
whose realized type is $(H+\epsilon,1)$ with probability 
$1/H$ and $(H,H)$ with the complementary probability.  
Note that $(V',W') \geq (V,W)$.  
The linear program for the seller can be expressed by
	\begin{equation*}
		\max \, \tfrac{1}{H} \, s_1
			 \ + \ \big(1 - \tfrac{1}{H}\big) \, s_2 \qquad
	\end{equation*}
subject to the following constraints
	\begin{align*}
		0 & \ \leq \ x_1 \ \leq \ 1 ,
			&
		0 & \ \leq \ x_2 \ \leq \ 1 	,		
			 \\[.75ex]
		0 & \ \leq \ s_1  ,
			&
		0 & \ \leq \ s_2  ,		
			 \\[.75ex]
		s_1 & \ \leq \ (H + \epsilon) \, x_1 ,
			&
		s_2 & \ \leq \ H \, x_2 ,
			\quad   	\\[.75ex]			
		s_1 & \ \leq \ 1 \, x_1 ,
			&
		s_2 & \ \leq \ H \, x_2 ,   
	\end{align*}
and finally the strong IC constraints
	\begin{align}
		(H + \epsilon)x_2 \, - \, s_2 
			\, - \, (H+\epsilon)x_1 \, + \, s_1 
			& \ \leq \ 0 ,  \label{Eq:non-mon-SIC-1}  \\[.75ex] 
		H\,x_1 \, - \, s_1 \, - \, H\,x_2 
			\, + \, s_2 
			& \ \leq \ 0 . \label{Eq:non-mon-SIC-2}
	\end{align}

To simplify calculations, let $\alpha = (H-1)/\epsilon$ and 
$\beta = \alpha + 1 - 1/H$. 
Clearly $\alpha,\beta > 0$.  
Notice that $\alpha(H+\epsilon) - \beta H = 0$. 
From the feasibility constraints, we have $1 \geq x_1$. 
Add this and the budget feasibility constraint $0 \geq 
s_1 - 1 x_1$ to obtain $1 \geq s_1$.  
Multiply $\alpha$ to \autoref{Eq:non-mon-SIC-1} and 
$\beta$ to \autoref{Eq:non-mon-SIC-2}, and combine them 
with the previous expression to obtain 
	\begin{align*}
		1 
		& \ \geq \ s_1 \, + \, \alpha [(H+\epsilon)x_2 
			\, - \, s_2 \, - \, (H+\epsilon)x_1 
			\, + \, s_1 ] \ + \ \beta [ Hx_1 - s_1 
			\, - \, Hx_2 \, + \, s_2 ] \\
		& \ = \ s_1 \, + \, (\alpha - \beta)s_1 
			\, + \, (\beta - \alpha)s_2 \, + \, 
			[ \beta H - \alpha(H+\epsilon)](x_1 - x_2) \\
		& \ = \ \tfrac{1}{H}s_1 \, + \, \big(1 - 
			\tfrac{1}{H}\big) s_2 . 
	\end{align*}
This shows that the objective function of the 
seller's linear program for the $(V',W')$--buyer has a 
value of at most $1$.  
Consider the feasible solution $(x_1^*,s_1^*) = (1,1)$ 
and $(x_2^*,s_2^*) = (1,1)$.  
Clearly, $\frac{1}{H}s_1^* + \big(1 - \frac{1}{H}\big) 
s_2^* = 1$, and thus $\Rev(V',W' | \cali N_{sic}) = 1$.  

These arguments show that, for any given $H > 1$, 
	\begin{equation*}
		\frac{\Rev(V,W) | \cali N_{sic})}
		{\Rev(V',W') | \cali N_{sic})} 
		\ \geq \ \frac{H - 1}{1} . 
	\end{equation*}
Letting $H \to \infty$ obtains the desired conclusion. 
\end{proof}

Why do (some) approximation mechanisms violate 
revenue monotonicity? 
The environment we construct in the proof of 
\autoref{P:rev-non-mon-SIC} considers a random vector 
$(V',W')$ whose support is not a lattice. 
Because of the extra constraints embedded in \eqref{Eq:SIC}, 
we cannot apply the \emph{Extension Lemma} to expand the 
domain of these mechanisms to $\R^2_+$. 
Thus, even though the mechanisms we consider there are 
in $\cali M$ (because $\cali N_{sic}$ is a subclass of 
$\cali M$), we cannot apply \autoref{P:rev-mon} to 
obtain revenue monotonicity.


\section{Final Remarks}

In this paper, we explore the advantages and limitations of using
approximation mechanisms in settings with private valuations
and private budgets.
In most of the cases we consider, there is a significant
revenue gap between optimal and approximately optimal mechanisms.
Thus, despite the advantages (conceptual, computational,
and practical) of using a restricted or relaxed class
of mechanisms, one must consider the large potential 
revenue losses.

A central message of our results is that the limitations we 
identify are driven by the sensitivity of approximation 
guarantees to seemingly mild features of the type space. 
When valuations and budgets are jointly bounded above and 
bounded away from zero, the combination of revenue monotonicity 
and a discretization of the type space yields 
near-optimal performance with polylogarithmic menu size. 
Outside this narrow setting, however, the simplicity--optimality 
tradeoff deteriorates sharply: once unbounded support is 
permitted, or even when attention is restricted to distributions 
supported on the unit square, no mechanism with a constant 
in size (or asymptotically sublinear) menu can guarantee a positive 
fraction of optimal revenue. 
In particular, \autoref{P:sublinear} shows that this failure 
persists even when menu size is allowed to grow sublinearly 
with support size, while \autoref{P:infinite-supp}
shows that for certain infinite-support distributions no 
finite-menu mechanism can guarantee any positive fraction 
of the optimal revenue, even under independence.
In this sense, private budgets turn the single-item problem 
into a genuinely two-dimensional screening environment 
in which robustness to distributional details is costly, 
and ``simple'' mechanisms are generically far from 
approximately optimal.
Our analysis also highlights that computational or institutional 
relaxations should be interpreted with care when used 
as substitutes for the fully incentive compatible benchmark. 
The cash-bond relaxation can generate unbounded gaps relative 
to what is implementable without additional enforcement, 
and it can break revenue monotonicity.  
Likewise, strengthening incentive constraints to 
obtain linear-programming formulations can lead to severe 
revenue losses and violations of revenue monotonicity, even 
in small discrete environments.

These observations point to several natural directions for 
future work: identifying intermediate notions of verifiability 
(e.g., partial auditing, probabilistic verification, or 
limited penalties for misreporting) under which one can 
recover meaningful approximation guarantees; developing 
distributional conditions beyond bounded support that still 
imply bounded revenue gaps; and extending the menu-complexity 
analysis to richer environments with multiple bidders or 
multiple items, where budgets interact with competition and 
allocation externalities in ways that may further amplify 
---or, perchance, mitigate--- the gaps documented here.


\bibliography{revenue-gaps}


\renewcommand\thesection{\Alph{section}}

\setcounter{section}{0}

\vspace{3ex}  \small 

\begin{center}
	\Large \textbf{APPENDIX}
\end{center}


\hypertarget{App-A}{\section{Extension Lemma}}

In this appendix we show that the \emph{Extension Lemma} 
of \citet{Hart:2015ob} applies to a setting with private 
valuations and private budgets.
Let $D \subseteq \R^2_+$ be the type domain,
i.e., the support of the random vector $(V,W)$.
Let $\mu = (x,s)$ be a direct mechanism defined on $D$,
and let
	\begin{equation*}
		\Menu (\mu)\ := \ \big\{ (x(v,w),s(v,w)) \,\colon\,
			(v,w) \in D \big\} \ \subseteq \ 
            [0,1] \times \R_+
	\end{equation*}
represent its menu.
Notice we are restricting the payments to the seller
to be non-negative.
Suppose $\mu = (x,s)$ is ex-post budget feasible \eqref{Eq:BF}
and incentive compatible \eqref{Eq:IC}.
Denote the buyer's indirect utility (expected payoff)
under $\mu$ by $b(v,w) = x(v,w)\,v - s(v,w)$, for all
types $(v,w) \in D$.

For each $\tilde w \in \R_+$, define
	\begin{equation}  \label{Eq:M(w)}
		M(\tilde w)
		\ := \ \big\{ (q,p) \, \in \, \Menu(\mu)
		\,\colon\, p \, \leq \, \tilde w \, q \big\} .
	\end{equation}
In words, $M(\tilde w)$ is the subset of $\Menu(\mu)$
restricted to entries that are ex-post affordable
given a budget $\tilde w \geq 0$, even when $\tilde w$
is not part of any realized type in $D$.
It is clear that the mapping $\tilde w \mapsto M(\tilde w)$
is monotone in the set-inclusion sense: $0 \leq w' \leq w''$
implies $M(w') \subseteq M(w'')$.
Readily,
	\begin{equation*}
		\bigcup_{\tilde w \geq 0} M(\tilde w)
		\ = \ \Menu(\mu) .
	\end{equation*}

We can use (some of) these submenus to express the expected
payoff function for the buyer.
For all $(v,w) \in D$, write
	\begin{align*}
		b(v,w)
		& \ = \ \max_{(v',w') \in D} \big\{ v\,x(v',w')
			\, - \, s(v',w') \,\colon\, s(v',w') \,
			\leq \, w \, x(v',w') \big\} \\
		& \ = \ \max \big\{ v\,q \, - \, p \,\colon\,
			(q,p) \,\in\, M(w) \big\} .
	\end{align*}

\begin{lemma}  \label{L:extension}
Let $\mu = (x,s)$ be an ex-post budget feasible and
incentive compatible mechanism defined on $D \subseteq
\R^2_+$ with $\Menu(\mu) = (x,s)(D) \subseteq [0,1]
\times \R_+$.
Then $\mu$ can be extended to an ex-post budget feasible
and incentive compatible mechanism $\overline\mu =
(\overline x, \overline s)$ defined on $\R_+^2$ with
$\Menu(\overline \mu) = (\overline x, \overline s)(\R_+^2)$
satisfying $\Menu(\mu) \subseteq \Menu(\overline\mu)
\subseteq \cl \Menu(\mu)$.
\end{lemma}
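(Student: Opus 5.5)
The plan is to define the extension by letting each type $(v,w)\in\R^2_+$ choose optimally from the closure of its affordable submenu. Concretely, for each $\tilde w\ge 0$ let $\overline M(\tilde w)$ be the set of entries $(q,p)\in\cl\Menu(\mu)$ with $p\le \tilde w\,q$. This set is closed, since the constraint $p\le\tilde w q$ is a closed condition. It is bounded when we intersect it with an individually rational region, because $q\in[0,1]$ and affordable entries satisfy $p\le \tilde w q\le\tilde w$. Hence $\overline M(\tilde w)$ is compact, and it is nonempty whenever we adjoin the trivial lottery $(0,0)$, which ex-post IR mechanisms may always include.

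For $(v,w)\in D$ we keep $\overline\mu(v,w)=\mu(v,w)$. For $(v,w)\notin D$ we let $\overline\mu(v,w)$ be a maximizer of $v q-p$ over $\overline M(w)\cup\{(0,0)\}$. Existence follows from continuity of $(q,p)\mapsto vq-p$ on a compact set. Measurability can be handled by a measurable selection theorem, or by choosing the lexicographically largest maximizer. To keep $\Menu(\overline\mu)\subseteq\cl\Menu(\mu)$, note that $(0,0)$ is either in the closure or is excluded from the menu by the paper's convention.

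Next I verify the two properties of the extension.
\begin{itemize}
\item \eqref{Eq:BF} holds by construction. On $D$ it is inherited from $\mu$, and off $D$ every chosen entry lies in $\overline M(w)$.
\item For \eqref{Eq:IC}, take any $(v,w)$ and any deviation $(\tilde v,\tilde w)$ whose entry $(q',p')=\overline\mu(\tilde v,\tilde w)$ is affordable, i.e.\ $p'\le w q'$. Then $(q',p')\in\cl\Menu(\mu)$ and $p'\le wq'$, so $(q',p')\in\overline M(w)$.
\item If $(v,w)\notin D$, optimality of the chosen entry over $\overline M(w)$ gives the inequality immediately.
\end{itemize}

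The main obstacle is the case $(v,w)\in D$, where we kept the original $\mu(v,w)$. The original \eqref{Eq:IC} only compares against entries of $M(w)$, not of $\overline M(w)$. So I must show $b(v,w)\ge vq'-p'$ for every $(q',p')$ in the closure satisfying $p'\le wq'$.

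Take a sequence $(q_n,p_n)\in\Menu(\mu)$ converging to $(q',p')$. The difficulty is that $p_n\le wq_n$ may fail along the sequence: a limit point on the boundary $p'=wq'$ can be approached from unaffordable entries.

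First, I would perturb. If $q'>0$, compare with entries affordable at budget $w$ by scaling: since $\tilde w\mapsto M(\tilde w)$ is monotone, I would show the affordable region is approached from inside. Concretely, consider the entries $(q_n,p_n)$ with $p_n/q_n\to p'/q'\le w$. If $p'/q'<w$, then eventually $p_n\le wq_n$, so $vq_n-p_n\le b(v,w)$, and passing to the limit gives the claim. If $p'/q'=w$ exactly, the convergence argument fails.

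In that case I would instead use a different argument. The original $\mu$ restricted to the budget slice need not contain such entries, so to be safe I would modify the construction: replace $\overline M(w)$ by the closure of $M(w)$ itself, namely $\cl M(w)$. Every point of $\cl M(w)$ is a limit of entries that are affordable at budget $w$, so the limit argument above goes through for all such points.

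Then \eqref{Eq:IC} for deviations must be rechecked. A deviation entry $\overline\mu(\tilde v,\tilde w)\in\cl M(\tilde w)$ that is affordable at $w$ must lie in $\cl M(w)$. This is immediate if $\tilde w\le w$ by monotonicity. If $\tilde w>w$, I would argue via the monotone, right-continuous-type structure of $\tilde w\mapsto\cl M(\tilde w)$, possibly choosing maximizers with the smallest price ratio to avoid the boundary issue. This boundary bookkeeping is where I expect the real work to lie.
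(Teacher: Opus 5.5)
\paragraph{The remaining case is a real gap, and the statement fails as written.} You correctly isolate the crux, but your proposal leaves it open, and it cannot be closed. Once you replace $\overline M(w)$ by $\cl{M(w)}$, you have arrived at exactly the paper's construction. What remains is a deviation by $(v,w)$ to an entry $(q',p')=\overline\mu(\tilde v,\tilde w)\in\cl{M(\tilde w)}$ with $\tilde w>w$ and $p'=wq'$. Such an entry can be a limit only of entries with price ratio above $w$, in which case $(q',p')\notin\cl{M(w)}$. The paper's proof does not treat this case either. Its IC step identifies $\overline b(v,w)$ with the maximum of $v\overline x(v',w')-\overline s(v',w')$ over reports with $0\le w'\le w$ only. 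But \eqref{Eq:IC} here must hold against every affordable entry, including entries assigned to higher-budget reports. Right-continuity of $\tilde w\mapsto\cl{M(\tilde w)}$ fails, and no tie-breaking rule among maximizers can help, because the lemma is false as stated.

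\paragraph{A counterexample.} Let $t_n=1+1/n$ and $D=\{(4,t_n)\colon n\ge 1\}\cup\{(2,1)\}$. Set $\mu(4,t_n)=(t_n/2,\,t_n^2/2)$ and $\mu(2,1)=(0,0)$.
\begin{itemize}
\item Each nontrivial entry has price ratio $t_n$, so (BF) holds.
\item Type $(2,1)$ can afford no nontrivial entry.
\item Type $(4,t_n)$ can afford exactly the entries with $m\ge n$, worth $\tfrac{t_m}{2}(4-t_m)$, which is increasing in $t_m$ on $[1,2]$.
\end{itemize}
So $\mu$ is BF, IR and IC on $D$.

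Now take type $(2,\tfrac32)$. It can afford all entries with $m\ge 2$, worth $\tfrac12\bigl(1-(t_m-1)^2\bigr)$; their supremum $\tfrac12$ is not attained. The only point of $\cl{\Menu(\mu)}$ that is affordable at budget $\tfrac32$ and worth at least $\tfrac12$ is the limit $(\tfrac12,\tfrac12)$. Hence every extension with $\Menu(\overline\mu)\subseteq\cl{\Menu(\mu)}$ must set $\overline\mu(2,\tfrac32)=(\tfrac12,\tfrac12)$. This is also what the paper's construction picks, while $\cl{M(1)}$ contains no nontrivial entry. But $(\tfrac12,\tfrac12)$ has price ratio $1$. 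So type $(2,1)\in D$, which is stuck at $(0,0)$, can afford it and gains $\tfrac12$, violating \eqref{Eq:IC}.

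\paragraph{Dropping the closure requirement does not help.} Let $(q,p)=\overline\mu(2,\tfrac32)$ be arbitrary.
\begin{itemize}
\item Type $(2,1)$ forces $t=p/q=1+\delta>1$, and (BF) at budget $\tfrac32$ gives $\delta\le\tfrac12$.
\item Value at least $\tfrac12$ for type $(2,\tfrac32)$ then gives $q\ge 1/(2(1-\delta))$.
\item Take $n=\lfloor 1/\delta\rfloor\ge 2$. Type $(4,t_n)$ can afford $(q,p)$, since $t\le t_n$.
\item Its gain from deviating is at least $\tfrac{\delta}{1-\delta}-\tfrac1n+\tfrac1{2n^2}$. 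This is positive because $\delta>1/(n+1)$ implies $\tfrac{\delta}{1-\delta}>\tfrac1n$.
\end{itemize}
So no BF and IC extension exists at all.

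\paragraph{What is needed.} Your proof is missing an extra hypothesis, not further bookkeeping. For instance, if $\Menu(\mu)$ is closed (e.g., finite), then $\cl{M(w)}=M(w)$. Every affordable deviation entry then already lies in $M(w)$, and your construction goes through.
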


\begin{proof}
Let $\mu = (x,s)$ be an ex-post budget feasible and
incentive compatible mechanism with $\Menu(\mu)$.
Observe that, for each $w \geq 0$, the projection of
the set $M(w)$ defined in \autoref{Eq:M(w)} onto the
unit interval is clearly bounded.
Since $\mu$ specifies non-negative (nonlinear) prices,
and prices in the submenu $M(w)$ are bounded above
by $w$, it follows that $M(w)$ is a bounded set.

For each $w \geq 0$, define the function $\overline
b(\cdot,w)$ on $\R_+$ by
	\begin{equation}  \label{Eq:payoff-extension}
		\overline b(v,w)
		\ := \ \sup \big\{ v\,q \, - \, p
			\,\colon\, (q,p) \in M(w) \big\} .
	\end{equation}
Since, as we just argued, $M(w)$ is bounded,
for each $v \geq 0$ the supremum $\overline b(v,w)$
in \autoref{Eq:payoff-extension} is attained on the closure
of $M(w)$, say at $(\overline x(v,w),\overline s(v,w))
\in \cl M(w)$.
Thus, for every $(v,w) \in \R_+^2$ we can write
	\begin{align*}
		\overline b(v,w)
		& \ = \ v\,\overline x(v,w) \, - \, \overline s(v,w) \\
		& \ = \ \max \big\{ v\,q \, - \, p
			\,\colon\, (q,p) \in \cl M(w) \big\}  \\
		& \ = \ \max \big\{ v \, \overline x(v',w') \, - \,
			\overline s(v',w') \,\colon\, 0 \leq v' , \,
				0 \ \leq w' \leq w \big\}.
	\end{align*}
This shows that the mechanism $\overline \mu = (\overline x,
\overline s)$ is incentive compatible on $\R_+^2$.
By construction (see \autoref{Eq:payoff-extension}), the
direct mechanism $\overline\mu$ is also ex-post feasible
for the buyer.
Readily, we have that $\overline b(v,w) = b(v,w)$ for
every $(v,w) \in D$, since $\mu = (x,s)$ is incentive
compatible.
Thus, $\overline\mu$ extends $\mu$ to $\R_+^2$.

Finally, notice that
	\begin{equation*}
		\bigcup_{w \geq 0} M(w)
		\ \subseteq \ \bigcup_{w \geq 0} \cl M(w)
		\ \subseteq \ \text{cl}\left(\bigcup_{w \geq 0} 
        M(w)\right),
	\end{equation*}
which shows that $\Menu(\mu) \subseteq \Menu(\overline\mu)
\subseteq \cl \Menu(\mu)$, as desired.  
\end{proof}

\begin{corollary}
If the ex-post feasible and incentive compatible mechanism
$\mu = (x,s)$ defined on $D \subseteq \R_+^2$ is in
addition ex-post individually rational, then so is its
extension $\overline\mu = (\overline x, \overline s)$
constructed in \autoref{L:extension}.
\end{corollary}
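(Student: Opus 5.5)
The plan is to show that the extended indirect utility $\overline b(v,w)=v\,\overline x(v,w)-\overline s(v,w)$ is nonnegative at every $(v,w)\in\R_+^2$, not just on $D$, since \eqref{Eq:IR} is exactly this inequality. On $D$ there is nothing to do. The proof of \autoref{L:extension} gives $\overline b(v,w)=b(v,w)$ for $(v,w)\in D$. Hence $\overline b(v,w)=v\,x(v,w)-s(v,w)\geq 0$ by the assumed ex-post individual rationality of $\mu$.

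For types outside $D$, I would use the structure of the submenus $M(\tilde w)$ from \autoref{Eq:M(w)}. The construction in \autoref{L:extension} defines $\overline b(v,w)$ as a supremum over $M(w)$ that is attained on $\cl M(w)$. For this to be well defined for every $w\geq 0$, including $w=0$, the sets $M(w)$ must be nonempty; I would make this implicit requirement explicit.

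The key observation is that $M(0)\neq\emptyset$ forces $\Menu(\mu)$ to contain an entry $(q_0,p_0)$ with $p_0\leq 0\cdot q_0$. Since payments are nonnegative, this means $p_0=0$. By the set-inclusion monotonicity $M(0)\subseteq M(w)$ noted before \autoref{L:extension}, this zero-price entry $(q_0,0)$ belongs to $M(w)$ for every budget $w\geq 0$. Therefore, for every $(v,w)\in\R_+^2$,
\begin{equation*}
\overline b(v,w) \ = \ \max\big\{ v\,q-p \,\colon\, (q,p)\in \cl M(w)\big\} \ \geq \ v\,q_0 - 0 \ \geq \ 0 ,
\end{equation*}
which is \eqref{Eq:IR} for $\overline\mu$. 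Ex-post budget feasibility is already part of \autoref{L:extension}, so the corollary follows.

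The main obstacle is the step guaranteeing a zero-price entry in every submenu. Without it the claim can genuinely fail. For example, take $D=\{(5,5)\}$ with menu $\{(1,4)\}$. The extension at type $(1,5)$ would select $(1,4)$ and yield utility $-3$. I would therefore argue carefully that nonemptiness of $M(0)$ is exactly what the extension construction presupposes in order to define $(\overline x,\overline s)$ on all of $\R_+^2$. This is consistent with the paper's remark that any ex-post individually rational mechanism can include the trivial lottery, whose presence in $M(0)$ is what makes the supremum in \autoref{Eq:payoff-extension} meaningful at $w=0$.
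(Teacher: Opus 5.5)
Your proof is correct, and it takes a different and more careful route than the paper's. The paper dispatches the corollary in one line, as immediate from $\Menu(\overline\mu)\subseteq \cl{\Menu(\mu)}$. That containment by itself does not deliver \eqref{Eq:IR} at types outside $D$. Individual rationality is a type-dependent inequality. An entry that was acceptable to the type in $D$ that selected it can be unacceptable to a low-valuation, high-budget type for which it is the best affordable option. Your example with $D=\{(5,5)\}$ and menu $\{(1,4)\}$ shows exactly this at the type $(1,5)$.

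What actually makes the extension individually rational is the ingredient you isolate: a zero-price entry $(q_0,0)$ in $M(0)$. By monotonicity it lies in every $M(w)$, so $\overline b(v,w)\geq v\,q_0\geq 0$ on all of $\R_+^2$; your separate treatment of types in $D$ is therefore redundant. You also correctly observe that $M(0)\neq\emptyset$ is already required for the construction in \autoref{L:extension} to be well defined at $w=0$, since otherwise the supremum there is over an empty set and is not attained. So this is not an extra hypothesis but one implicit in the lemma, consistent with the paper's footnote that the trivial lottery can always be included.

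The paper's version is shorter but relies on that unstated assumption without naming it. Yours makes the dependence explicit and gives a complete argument.
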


\begin{proof}
Immediate from the fact that $\Menu(\overline\mu)$ is
a subset of the closure of $\Menu(\mu)$.   
\end{proof}


\hypertarget{App-B}{\section{On the Optimal Mechanisms for 
\autoref{Ex:public-v}}}

In this appendix, we show that every optimal ex-post 
mechanism for \autoref{Ex:public-v} must have an infinite 
menu. 
We consider a single bidder with \textit{publicly-known} value 
and private budget. 
Let the buyer's valuation be public and equal to $\hat v > 1$. 
Let the buyer's budget $W$ be private and distributed 
continuously on $[0,1]$.  
We assume a uniform distribution for the explicit revenue 
calculation below; the structural optimality argument does 
not require uniformity.

\begin{lemma}  \label{L:optimality-CG-1}
The optimal mechanism for \autoref{Ex:public-v} requires 
infinitely many lotteries. 
\end{lemma}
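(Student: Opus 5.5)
The plan is to show that \emph{every} mechanism in $\cali M$ with a finite menu is strictly improvable. Then no finite-menu mechanism is optimal, and any optimal mechanism (whether or not one exists) must have infinitely many lotteries. We work with the public value $\hat v>1$ and $W\sim U[0,1]$.

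\textbf{Reduction to unit prices.} Since $\hat v>1\geq w$, the \eqref{Eq:IR} constraint never binds for affordable lotteries. Fix a finite-menu mechanism and, by the argument of \citet{Babaioff:2022by} already used in \autoref{P:good-approx}, assume it is seller-favorable up to an arbitrarily small loss. Write its menu as $\{(q_i,p_i)\}_{i=1}^m$ with unit prices $r_i=p_i/q_i$ ordered so that $r_1<\dots<r_m$. Entries sharing a unit price can be reduced to the one preferred at $\hat v$, and entries chosen by no type can be dropped.

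A type $w$ can afford exactly the entries with $r_i\leq w$. Every entry is chosen by some type, so utility $\hat v q_i-p_i$ must be strictly increasing in $i$; otherwise entry $i+1$ is never chosen. Hence types in $[r_i,r_{i+1})$ buy entry $i$ (with $r_{m+1}:=1$), and revenue is $\sum_i p_i\,\Prob\{W\in[r_i,r_{i+1})\}$. A consequence is that $q_i<1$ for every $i<m$: if $q_i=1$, entry $i+1$ costs more without offering more, so nobody picks it.

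\textbf{Case $m=1$ (posted price $r_1=r$, $q_1=1$).} If $r=0$ revenue is zero, and a posted price of $1/2$ does strictly better. Otherwise add the entry $(q',r'q')$ with $0<r'<r$ and $q'>0$ small enough that $\hat v-r>(\hat v-r')q'$. Types $w\geq r$ still buy $(1,r)$. Types in $[r',r)$, who previously paid nothing, now pay $r'q'>0$. This interval has positive mass under the uniform distribution, so revenue rises strictly.

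\textbf{Case $m\geq 2$.} Take $i=m-1$, so that $q_i<1$ and the gap $[r_i,r_{i+1})$ has positive length. Insert an entry with unit price $r\in(r_i,r_{i+1})$ and quantity
\[
q \;=\; q_i\,\frac{\hat v-r_i}{\hat v-r}.
\]
This choice makes the new entry give exactly the utility $\hat v q_i-p_i$ of entry $i$. Choosing $r$ close enough to $r_i$ gives $q\leq 1$, since $q_i<1$. The effects on each group of types are as follows.
\begin{itemize}
\item Types below $r$ cannot afford the new entry, so their choices are unchanged.
\item Types in $[r,r_{i+1})$ are indifferent between entry $i$ and the new entry. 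Under seller-favorable tie-breaking they switch to the new entry and pay $rq$. A one-line check gives $rq>p_i=r_iq_i$, since this is equivalent to $r(\hat v-r_i)>r_i(\hat v-r)$, i.e.\ $\hat v r>\hat v r_i$.
\item Types at or above $r_{i+1}$ strictly preferred entry $i+1$ to entry $i$ (utility is strictly increasing), hence also to the new entry, so they do not switch.
\end{itemize}
The new mechanism remains in $\cali M$: \eqref{Eq:BF} holds because the unit price $r$ is at most the budget of every type that can choose the entry, and \eqref{Eq:IC} holds because each type is choosing its best affordable option. Revenue rises by $(rq-p_i)\Prob\{W\in[r,r_{i+1})\}>0$.

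\textbf{Main obstacle.} The delicate point is keeping the insertion compatible with the constraints $q\leq 1$ and with the choices of types above $r_{i+1}$. This is exactly why the proof identifies a gap with $q_i<1$ and matches the inserted entry's utility to that of entry $i$. The seller-favorable tie-breaking (and the $\epsilon'$-loss it costs) must be handled carefully: I would state the improvement as a strict gain by a fixed positive amount that dominates the $\epsilon'$ loss, rather than relying on exact ties. Finally, I would note that uniformity is used only to ensure that the intervals $[r',r)$ and $[r,r_{i+1})$ carry positive probability, which is consistent with the remark that the structural argument does not need uniformity.
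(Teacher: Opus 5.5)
Your argument is correct in substance, but for $m\ge 2$ it takes a different route from the paper.

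\textbf{Comparison with the paper.} The paper exploits the ``dead zone'' $(0,t_1)$ where a finite menu extracts nothing. It fills that interval with a continuum of lotteries $L_w$ of actual price $w$ and utility $\tilde u(w)<U_1$, so every type there pays something and no type above $t_1$ is disturbed. You instead leave the bottom alone (except when $m=1$, where you use the paper's idea with a single lottery). You insert one lottery between the two highest unit prices that matches the utility $U_{m-1}$ but at a higher unit price $r$. This forces every type in $[r,r_m)$ to pay $rU_{m-1}/(\hat v-r)>p_{m-1}$. The observation $q_{m-1}<1$ is exactly what makes $q\le 1$ feasible.

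The paper's construction is uniform: no case split and no need for $q_i<1$. Yours buys two things. First, it shows that any $m$-entry menu is beaten by an $(m+1)$-entry one. Second, it covers a menu containing a free lottery ($r_1=0$, i.e.\ $t_1=0$). The paper's proof does not treat that case, since it assumes every $p_i>0$, and then the interval $(0,t_1)$ is empty.

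\textbf{Repairs needed.} First, drop the appeal to \citet{Babaioff:2022by}. As you suspect, it creates a circularity: your gain is computed for the price-scaled mechanism, and nothing shows it stays above $\epsilon'\Rev(V,W)$ as $\epsilon'\to 0$. It is also unnecessary. With a finite menu you can re-break ties exactly in the seller's favor at no loss. Better still, your $m\ge 2$ step needs no tie-breaking assumption at all. Any type in $[r,r_m)$ can afford only entries $1,\ldots,m-1$, so it previously chose some $j\le m-1$ with $U_j=U_{m-1}$. It therefore paid $p_j=r_jU_{m-1}/(\hat v-r_j)\le p_{m-1}<rq$.

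Second, relevance yields only \emph{weakly} increasing utilities. Two relevant entries can tie, with the lower one serving $[r_i,r_{i+1})$ under seller-favorable tie-breaking. Your uses of strictness still go through with the weak version: types above $r_m$ have utility $U_m\ge U_{m-1}$ and can keep their old choices, which preserves \eqref{Eq:IC}.

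Third, in the case $m=1$ you cannot assume $q_1=1$. Replace $\hat v-r$ by $U_1=q_1(\hat v-r_1)>0$ in the choice of $q'$; the rest of that step is unchanged.
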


\begin{proof}
Suppose, to obtain a contradiction, that some optimal 
feasible mechanism is implemented by a finite nontrivial menu
	\[
		M \ = \ \{(q_1,p_1),\ldots,(q_k,p_k)\},
	\]
where each $q_i\in(0,1]$ is an allocation probability and 
each $p_i>0$ is an expected payment.

For any nontrivial lottery $(q_i,p_i)\in M$, define its 
actual price by $t(q_i,p_i) = p_i/q_i$.  
A type with budget $w \in [0,1]$ can afford $(q_i,p_i)$ if and 
only if $p_i\le wq_i$, or equivalently, if and only if 
$t(q_i,p_i) \le w$.
If affordable, the utility from $(q_i,p_i)$ is
	\begin{equation*}
		U(q_i,p_i)
		\ = \ \hat v \, q_i \, - \, p_i
		\ = \ q_i \bigl(\hat v \, - \, t(q_i,p_i)\bigr).		
	\end{equation*}
We call a menu entry $(q_i,p_i)\in M$ \emph{relevant} if there 
exists some budget type $w\in[0,1]$ for which $(q_i,p_i)$ is 
chosen among all entries in $M$ that are affordable to type $w$.

We first observe that entries with the same actual price 
are redundant. 
Suppose two nontrivial menu entries $(q_i,p_i)$ and $(q_j,p_j)$ 
have the same actual price: $p_i/q_i = p_j/q_j = t$.
Then they are affordable for exactly the same set of types, 
namely those with $w \geq t$.  
Since $t \leq 1 < \hat v$, we have $\hat v - t > 0$, so 
the utility is strictly increasing in $q$ for fixed $t$. 
Without loss of generality, assume that $q_i \ge q_j$. 
If $q_i=q_j$, the entries are identical; if $q_i>q_j$, then
	\begin{equation*}
		U(q_i,p_i) 
		\ = \ q_i(\hat v - t) \ > \ q_j(\hat v - t) 
		\ = \ U(q_j,p_j),		
	\end{equation*}
meaning $(q_j,p_j)$ is never chosen. 
Therefore, after deleting duplicate and irrelevant entries, we
may assume all remaining entries have pairwise distinct 
actual prices.

We re-label the remaining relevant entries as $(q_1,p_1), 
\ldots,(q_k,p_k)$ such that $0 < t_1 < t_2 < \cdots < t_k 
\leq 1$, where $t_i = p_i/q_i$ for all $i = 1,\ldots,k$.
The inequality $t_i \le 1$ holds because each relevant entry 
must be affordable for at least one type in $[0,1]$.
Define the utility levels by $U_i = \hat v q_i - p_i$,
and $U_0 = 0$ (the outside option). 
We claim that
	\begin{equation*}
		0 \ = \ U_0 \ \le \ U_1 \ \le \ U_2 \ \le \ 
		\cdots \ \le \ U_k.		
	\end{equation*}
Indeed, fix $i<j$. 
Since $t_i<t_j$, any type that can afford entry $j$ can also 
afford entry $i$. 
If $U_i > U_j$, every type that could afford $j$ would strictly 
prefer $i$, contradicting the relevance of $j$.  
Hence $U_i \le U_j$. 
Furthermore, since $q_1 > 0$ and $t_1 \le 1 <\hat v$ we have
	\begin{equation*}
		U_1 \ = \ \hat v q_1 \, - \, p_1 
		\ = \ q_1(\hat v \, - \, t_1) \ > \ 0.		
	\end{equation*}
If $w<t_1$, no nontrivial menu entry is affordable and the 
buyer can only choose the outside option.
Thus the utility of every $w \in (0,t_1)$ is zero and the 
mechanism yields zero revenue on the interval $(0,t_1)$.

We now construct an improved mechanism.

Choose any continuous strictly increasing function
$\tilde u \colon (0,t_1) \to(0,U_1)$.  
For each $w\in(0,t_1)$, define a lottery $L_w = (x(w),s(w))$ by 
	\begin{equation*}
		x(w) \ = \ \frac{\tilde u(w)}{\hat v-w} 
		\qquad \text{and} \qquad 
		s(w) \ = \ \frac{w\,\tilde u(w)}{\hat v-w}.
	\end{equation*}
Construct a new mechanism $\tilde M$ as follows:
\begin{itemize}
	\item for types $w\ge t_1$, keep the original menu $M$ 
		unchanged;
	\item for types $w\in(0,t_1)$, assign the lottery $L_w$.
\end{itemize}

Since $0 < \tilde u(w) < U_1 = q_1(\hat v - t_1) \le \hat v 
- t_1 < \hat v-w$, we have $0<x(w)<1$. 
Also, notice that $s(w)/x(w) = w$, so $L_w$ is affordable 
for type $w$, and its utility is $\hat v x(w) - s(w) = \tilde 
u(w) > 0$.
Thus \eqref{Eq:IR} and \eqref{Eq:BF} hold for types $w\in(0,t_1)$. 
It remains to verify \eqref{Eq:IC}.

First, consider deviations by types in $(0,t_1)$.  
Fix $w \in (0,t_1)$. 
If $w' > w$, then $L_{w'}$ has actual price $w' > w$ and is 
not affordable to type $w$. 
If $w'\le w$, then $\tilde u(w') < \tilde u(w)$ because
$\tilde u(\cdot)$ is strictly increasing. 
Thus among all new lotteries affordable to type $w$, the 
truthful lottery $L_w$ is strictly preferred. 
No old nontrivial menu entry is affordable to type $w$, 
because every old nontrivial entry has actual price at 
least $t_1>w$.

Next, consider deviations by types in $[t_1,1]$. 
Every such type can afford the original first menu
entry, which yields utility $U_1$. 
Every new lottery $L_w$ with $w < t_1$ yields utility 
$\tilde u(w) < U_1$. 
Therefore no type in $[t_1,1]$ wants to deviate to any 
new lottery. 

We have argued that $\tilde M$ is feasible and incentive 
compatible.  
It remains to show that $\tilde M$ yields strictly higher 
revenue than $M$. 
Under $M$, revenue on $(0,t_1)$ was zero. 
Under $\tilde M$, every type $w \in (0,t_1)$ generates 
strictly positive expected payment $s(w) = w\,\tilde u(w)/
(\hat v - w) > 0$. 
Since $(0,t_1)$ has positive measure under the continuous 
distribution, the expected revenue strictly increases.
This contradicts the optimality of the original finite 
menu $M$. 
Therefore no optimal feasible mechanism can be implemented 
by a finite nontrivial menu. 
\end{proof}


\end{document}

%% file: revenue-gaps-macros.tex
\usepackage{xcolor}
\usepackage[colorlinks,linkcolor=Blue,citecolor=Blue,
	urlcolor=Blue]{hyperref}
\usepackage{amssymb,latexsym,bm,mathrsfs,dsfont}
\usepackage{amsmath,amsfonts,amsthm}
\usepackage[square,numbers]{natbib}  
\usepackage{textcomp}

\usepackage{epigraph}
\usepackage{graphics,graphicx}
\graphicspath{{graphics/}}
\usepackage{amsbsy}
\usepackage{setspace}
\usepackage{cleveref}
\usepackage{enumitem}
\usepackage{hyperref}

\usepackage[linecolor=red,backgroundcolor=red!25, 
	bordercolor=red,prependcaption,textsize=tiny]{todonotes}

\usepackage{afterpage}

\usepackage{abstract}
	 
\usepackage[en-US]{datetime2}		
	\DTMlangsetup{showdayofmonth=false} 

\usepackage{stmaryrd}   

\usepackage[many]{tcolorbox}

\usepackage{lipsum}

\usepackage{subfig}

\usepackage[margin=1.0in]{geometry}
\definecolor{Blue}{rgb}{0.1,0.2,0.5}
\definecolor{Magenta}{cmyk}{0.1,0.9,0.0,0.3}
\definecolor{cites}{HTML}{324b13}
\definecolor{links}{HTML}{1a663b}

\definecolor{ocre}{RGB}{243,102,25} 
\definecolor{lightgray}{RGB}{229,229,229} 
\definecolor{mediumgray}{RGB}{120,120,120}

\definecolor{darkblue}{rgb}{0.0, 0.18, 0.39}
\definecolor{burgundy}{rgb}{0.5, 0.0, 0.13}
\definecolor{armygreen}{rgb}{0.29, 0.33, 0.13}

\usepackage{pgfplots}
\pgfplotsset{width = 7cm, compat=1.5}
\usepackage{tikz}
\usetikzlibrary{calc}
\usepackage{color}
\usepackage{array}
\usepackage{multirow}
\usepgflibrary{arrows}
\pgfplotsset{compat=1.6}
\pgfplotsset{soldot/.style={color=black,only marks,mark=*}} 
\pgfplotsset{holdot/.style={color=black,fill=white,only marks,mark=*}}
\pgfplotsset{every axis/.append style={
                axis x line=center,
                axis y line=center,
                axis line style={->},
                xlabel={$x$},
                ylabel={$y$},
                line width=1pt,} }

\usepackage[T1]{fontenc}

\usepackage[sc]{mathpazo}    

\theoremstyle{plain}

\newtheorem{proposition}{Proposition}
\newtheorem{lemma}{Lemma}
\newtheorem{corollary}{Corollary}

\newtheorem{definition}{Definition}

\theoremstyle{definition}

\theoremstyle{remark}
\newtheorem{example}{Example}

\newtheorem*{remark}{Remark}

\newtheorem*{example-1-c}{Example 1 (continued)}
\newtheorem*{example-2-c}{Example 2 (continued)}

\theoremstyle{plain}

\theoremstyle{remark}

\newcommand{\R}{\mathbb{R}}

\newcommand{\Prob}{\mathbb{P}}

\newcommand{\cali}[1]{\mathcal{#1}}
\newcommand{\scr}[1]{\mathscr{#1}}

\newcounter{BoxedDef}

\newcounter{BoxedProp}

\newcounter{BoxedThm}

\newcounter{BoxedLemma}

\newcounter{BoxedCor}

\newcounter{BoxedExample}

\newcommand{\Rev}{\operatorname{Rev}}
\newcommand{\MVR}{\operatorname{MVR}}
\newcommand{\Menu}{\operatorname{Menu}}
\newcommand{\GFOR}{\text{GFOR}}

\newcommand{\cl}[1]{\operatorname{cl}(#1)}



%% file: revenue-gaps.bib
@inproceedings{Psomas:2022rd,
author = {Psomas, Alexandros and Schvartzman, Ariel and Weinberg, S. Matthew},
title = {On infinite separations between simple and optimal mechanisms},
year = {2022}, isbn = {9781713871088}, publisher = {Curran Associates Inc.}, address = {Red Hook, NY, USA}, booktitle = {Proceedings of the 36th International Conference on Neural Information Processing Systems},articleno = {348}, numpages = {12}, location = {New Orleans, LA, USA}, series = {NIPS '22} }

@book{Shaked2007rt,
	address = {New York, {NY}},
	author = {Shaked, Moshe and Shanthikumar, J. George},
	publisher = {Springer},
	title = {Stochastic Orders},
	year = {2007}}

@article{Rubinstein:2018fc,
	address = {New York, NY, USA},
	articleno = {19},
	author = {Rubinstein, Aviad and Weinberg, S. Matthew},
	doi = {10.1145/3105448},
	issn = {2167-8375},
	issue_date = {November 2018},
	journal = {ACM Trans. Econ. Comput.},
	month = oct,
	number = {3--4},
	numpages = {25},
	publisher = {Association for Computing Machinery},
	title = {Simple Mechanisms for a Subadditive Buyer and Applications to Revenue Monotonicity},
	url = {https://doi.org/10.1145/3105448},
	volume = {6},
	year = {2018}}

@inproceedings{Chawla:2007fu,
	address = {New York, NY, USA},
	author = {Chawla, Shuchi and Hartline, Jason D. and Kleinberg, Robert},
	booktitle = {Proceedings of the 8th ACM Conference on Electronic Commerce},
	doi = {10.1145/1250910.1250946},
	isbn = {9781595936530},
	location = {San Diego, California, USA},
	numpages = {9},
	pages = {243--251},
	publisher = {Association for Computing Machinery},
	series = {EC '07},
	title = {Algorithmic pricing via virtual valuations},
	url = {https://doi.org/10.1145/1250910.1250946},
	year = {2007}}

@article{Babaioff:2020kb,
	address = {New York, NY, USA},
	articleno = {24},
	author = {Babaioff, Moshe and Immorlica, Nicole and Lucier, Brendan and Weinberg, S. Matthew},
	doi = {10.1145/3398745},
	issn = {0004-5411},
	issue_date = {August 2020},
	journal = {J. ACM},
	month = jun,
	number = {4},
	numpages = {40},
	publisher = {Association for Computing Machinery},
	title = {A Simple and Approximately Optimal Mechanism for an Additive Buyer},
	url = {https://doi.org/10.1145/3398745},
	volume = {67},
	year = {2020}}

@article{Cheng:2021ew,
	address = {New York, NY, USA},
	articleno = {10},
	author = {Cheng, Yu and Gravin, Nick and Munagala, Kamesh and Wang, Kangning},
	doi = {10.1145/3434419},
	issn = {2167-8375},
	issue_date = {June 2021},
	journal = {ACM Trans. Econ. Comput.},
	month = jan,
	number = {2},
	numpages = {25},
	publisher = {Association for Computing Machinery},
	title = {A Simple Mechanism for a Budget-Constrained Buyer},
	url = {https://doi.org/10.1145/3434419},
	volume = {9},
	year = {2021}}

@unpublished{BenMoshe:2024ej,
	author = {Ben-Moshe, Ran and Hart, Sergiu and Nisan, Noam},
	doi = {10.48550/arxiv.2210.17150},
	eprint = {2210.17150},
	note = {arXiv},
	title = {Monotonic Mechanisms for Selling Multiple Goods},
	year = {2024}}

@article{Boulatov:2021ry,
	author = {Boulatov, Alexei and Severinov, Sergei},
	doi = {10.1016/j.geb.2021.02.001},
	issn = {0899-8256},
	journal = {Games and Economic Behavior},
	pages = {155--178},
	title = {{Optimal and Efficient Mechanisms with Asymmetrically Budget Constrained Buyers}},
	volume = {127},
	year = {2021}}

@article{Babaioff:2022by,
	author = {Babaioff, Moshe and Gonczarowski, Yannai A. and Nisan, Noam},
	doi = {10.1016/j.geb.2021.03.001},
	eprint = {1604.06580},
	issn = {0899-8256},
	journal = {Games and Economic Behavior},
	pages = {281--307},
	title = {{The Menu-Size Complexity of Revenue Approximation}},
	volume = {134},
	year = {2022}}

@article{McAfee:1988rh,
	author = {McAfee, R.Preston and McMillan, John},
	doi = {10.1016/0022-0531(88)90135-4},
	issn = {0022-0531},
	journal = {Journal of Economic Theory},
	number = {2},
	pages = {335--354},
	title = {{Multidimensional Incentive Compatibility and Mechanism Design}},
	volume = {46},
	year = {1988}}

@article{Rochet:1998ju,
	author = {Rochet, Jean-Charles and Chone, Philippe},
	doi = {10.2307/2999574},
	issn = {0012-9682},
	journal = {Econometrica},
	number = {4},
	pages = {783--826},
	title = {{Ironing, Sweeping, and Multidimensional Screening}},
	volume = {66},
	year = {1998}}

@article{Briest:2015ob,
	author = {Briest, Patrick and Chawla, Shuchi and Kleinberg, Robert and Weinberg, S. Matthew},
	doi = {10.1016/j.jet.2014.04.011},
	issn = {0022-0531},
	journal = {Journal of Economic Theory},
	pages = {144--174},
	title = {{Pricing Lotteries}},
	volume = {156},
	year = {2015}}

@inproceedings{Babaioff:2018jy,
	author = {Babaioff, Moshe and Nisan, Noam and Rubinstein, Aviad},
	booktitle = {Proceedings of the 2018 ACM Conference on economics and computation},
	isbn = {1450358292},
	language = {eng},
	pages = {429--429},
	publisher = {ACM},
	series = {EC '18},
	title = {Optimal Deterministic Mechanisms for an Additive Buyer},
	year = {2018}}

@article{Li:2013er,
	author = {Xinye Li and Andrew Chi-Chih Yao},
	issn = {00278424, 10916490},
	journal = {Proceedings of the National Academy of Sciences of the United States of America},
	number = {28},
	pages = {11232--11237},
	publisher = {National Academy of Sciences},
	title = {On Revenue Maximization for Selling Multiple Independently Distributed Items},
	url = {http://www.jstor.org/stable/42712717},
	urldate = {2023-06-06},
	volume = {110},
	year = {2013}}

@article{Hart:2019it,
	author = {Hart, Sergiu and Reny, Philip J.},
	doi = {10.1145/3369927},
	eprint = {1712.08973},
	issn = {2167-8375},
	journal = {ACM Transactions on Economics and Computation},
	number = {4},
	pages = {18},
	title = {{The Better Half of Selling Separately}},
	volume = {7},
	year = {2019}}

@article{Hart:2017bh,
	author = {Hart, Sergiu and Nisan, Noam},
	doi = {10.1016/j.jet.2017.09.001},
	eprint = {1204.1846},
	issn = {0022-0531},
	journal = {Journal of Economic Theory},
	pages = {313--347},
	title = {{Approximate Revenue Maximization with Multiple Items}},
	volume = {172},
	year = {2017}}

@article{Hart:2015ob,
	author = {Hart, Sergiu and Reny, Philip J.},
	doi = {10.3982/te1517},
	issn = {1933-6837},
	journal = {Theoretical Economics},
	number = {3},
	pages = {893--922},
	title = {{Maximal Revenue with Multiple Goods: Nonmonotonicity and Other Observations}},
	volume = {10},
	year = {2015}}

@article{Manzini:2007oy,
	author = {Manzini, Paola and Mariotti, Marco},
	journal = {American Economic Review},
	number = {5},
	pages = {1824--1839},
	title = {{Sequentially Rationalizable Choice}},
	volume = {97},
	year = {2007}}

@article{Kotowski:2020bh,
	author = {Kotowski, Maciej H.},
	doi = {10.3982/te2982},
	issn = {1933-6837},
	journal = {Theoretical Economics},
	number = {1},
	pages = {199--237},
	title = {{First-Price Auctions with Budget Constraints}},
	volume = {15},
	year = {2020}}

@article{Roughgarden:2019vo,
	author = {Roughgarden, Tim and Talgam-Cohen, Inbal},
	doi = {10.1146/annurev-economics-080218-025607},
	issn = {1941-1383},
	journal = {Annual Review of Economics},
	pages = {355---381},
	title = {{Approximately Optimal Mechanism Design}},
	volume = {11},
	year = {2019}}

@article{Daskalakis:2018zg,
	article = {20},
	author = {Constantinos Daskalakis and Nikhil R. Devanur and S. Matthew Weinberg},
	journal = {ACM Transactions on Economics and Computation},
	number = {3-4},
	pages = {1--19},
	title = {Revenue Maximization and Ex-Post Budget Constraints},
	volume = {6},
	year = {2018}}

@incollection{Feng:2022qm,
	author = {Yiding Feng and Jason D. Hartline and Yingkai Li},
	booktitle = {Proceedings of the 2023 Annual ACM-SIAM Symposium on Discrete Algorithms (SODA)},
	institution = {Microsoft Research},
	pages = {3802--3816},
	title = {Simple Mechanisms for Agents with Non-Linear Utilities},
	year = {2023}}

@article{Hart:2019ny,
	author = {Sergiu Hart and Noam Nisan},
	journal = {Journal of Economic Theory},
	pages = {991--1029},
	title = {Selling Multiple Correlated Goods: Revenue Maximization and Menu-Size Complexity},
	volume = {183},
	year = {2019}}

@inproceedings{Chawla:2011mf,
	author = {Chawla, Shuchi and Malec, David and Malekian, Azarakhsh},
	booktitle = {Proceedings of the 12th ACM Conference on Electronic Commerce},
	isbn = {9781450302616},
	language = {eng},
	pages = {253--262},
	publisher = {ACM},
	series = {EC'11},
	title = {Bayesian Mechanism Design for Budget-Constrained Agents},
	year = {2011}}

@article{Richter:2019by,
	author = {Michael Richter},
	journal = {Games and Economic Behavior},
	pages = {30--47},
	title = {Mechanism Design with Budget Constraints and a Population of Agents},
	volume = {115},
	year = {2019}}

@incollection{Cramton:2010txp,
	author = {Peter Cramton},
	booktitle = {The Taxation of Petroleum and Minerals: Principles, Problems and Practice},
	chapter = {10},
	editor = {Philip Daniel and Michael Keen and Charles McPherson},
	publisher = {Routledge},
	title = {How to Best Auction Natural Resources},
	year = {2010}}

@article{Salant:1997vx,
	author = {Salant, David J},
	journal = {Journal of Economics {\&} Management Strategy},
	number = {3},
	pages = {549--572},
	title = {{Up in the Air: GTE{\textquoteright}s Experience in the MTA Auction for Personal Communication Services Licenses}},
	volume = {6},
	year = {1997}}

@inproceedings{Devanur:2017ik,
	address = {New York, New York, USA},
	author = {Devanur, Nikhil R and Weinberg, S Matthew},
	booktitle = {EC 2017 - Proceedings of the 2017 ACM Conference on Economics and Computation},
	month = jun,
	organization = {Microsoft Research},
	pages = {39--40},
	publisher = {ACM Press},
	title = {{The Optimal Mechanism for Selling to a Budget-Constrained Buyer: the General Case}},
	year = {2017}}

@article{Myerson1981,
	author = {Myerson, Roger B},
	journal = {Mathematics of Operations Research},
	number = {1},
	pages = {58--73},
	title = {{Optimal Auction Design}},
	volume = {6},
	year = {1981}}

@inproceedings{Bhattacharya:2010vy,
	author = {Bhattacharya, Sayan and Conitzer, Vincent and Munagala, Kamesh and Xia, Lirong},
	booktitle = {Proceedings of the Twenty-First Annual ACM-SIAM Symposium on Discrete Algorithms},
	pages = {554--572},
	title = {{Incentive Compatible Budget Elicitation in Multi-Unit Auctions}},
	year = {2010}}

@article{Laffont:1996ig,
	author = {Laffont, J J and Robert, J},
	journal = {Economics Letters},
	number = {2},
	pages = {181--186},
	title = {{Optimal Auction with Financially Constrained Buyers}},
	volume = {52},
	year = {1996}}

@article{Che:2013cu,
	author = {Che, Yeon-Koo and Gale, Ian and Kim, Jinwoo},
	journal = {Review of Economic Studies},
	month = feb,
	number = {1},
	pages = {73--107},
	title = {{Assigning Resources to Budget-Constrained Agents}},
	volume = {80},
	year = {2013}}

@article{Che:1998je,
	author = {Che, Yeon-Koo and Gale, Ian},
	journal = {Review of Economic Studies},
	month = jan,
	number = {1},
	pages = {1--21},
	title = {{Standard Auctions with Financially Constrained Bidders}},
	volume = {65},
	year = {1998}}

@article{Pai:2014cm,
	author = {Pai, Mallesh M and Vohra, Rakesh V},
	journal = {Journal of Economic Theory},
	month = mar,
	pages = {383--425},
	title = {{Optimal Auctions with Financially Constrained Buyers}},
	volume = {150},
	year = {2014}}

@article{Che:2000ux,
	author = {Che, Yeon-Koo and Gale, Ian},
	journal = {Journal of Economic Theory},
	pages = {198--233},
	title = {{The Optimal Mechanism for Selling to a Budget-Constrained Buyer}},
	volume = {92},
	year = {2000}}
